\documentclass[11pt,letter]{article}

\usepackage{amssymb}
\usepackage{amsmath}
\usepackage{graphicx}
\usepackage{color}
\usepackage{bigints}
\usepackage{array}
\usepackage[normalem]{ulem}
\usepackage{xspace}
\usepackage{enumerate}
\usepackage{hyperref}
\usepackage{numprint}
\usepackage{subcaption}
\usepackage{changepage}
\usepackage{enumerate}
\usepackage{amsthm}
\usepackage{graphicx}
\usepackage{mathrsfs}
\usepackage[ruled,linesnumbered]{algorithm2e}

\setlength{\oddsidemargin}{0in}
\setlength{\evensidemargin}{0in}
\setlength{\topmargin}{-0.2in}
\setlength{\headheight}{0cm}
\setlength{\headsep}{0cm}
\setlength{\textheight}{9in}
\setlength{\textwidth}{6.5in}
\setlength{\parskip}{0pt}

\newcommand{\Frechet}{Fr{\'e}chet\xspace}

\newcommand{\Q}{Q}

\newcommand{\real} {\mathbb{R}}
\newcommand{\eps} {\varepsilon}
\newcommand{\grid} {{\mathcal{G}}}
\newcommand{\cancel}[1] {}

\newtheorem{theorem}{\text{Theorem}}
\newtheorem{lemma}{\text{Lemma}}

\DeclareMathOperator{\E}{E}

\graphicspath{{./}{figures/}}


\begin{document}
	
\begin{titlepage}
		
\title{Curve Simplification and Clustering \\ under \Frechet Distance\footnote{Research supported by Research Grants Council, Hong Kong, China (project no.~16203718).}}
		
\author{Siu-Wing Cheng\footnote{Department~of~Computer~Science~and~Engineering,
				HKUST, Hong Kong. Email: {\tt scheng@cse.ust.hk}, {\tt haoqiang.huang@connect.ust.hk}}
		\and 
				Haoqiang Huang\footnotemark[2]}

\date{}

\maketitle

\begin{abstract}
	We present new approximation results on curve simplification and clustering under Fr\'{e}chet distance.  Let $T = \{\tau_i : i \in [n] \}$ be polygonal curves in $\real^d$ of $m$ vertices each.  Let $\ell$ be any integer from $[m]$.  We study a generalized curve simplification problem: given error bounds $\delta_i > 0$ for $i \in [n]$, find a curve $\sigma$ of at most $\ell$ vertices such that $d_F(\sigma,\tau_i) \leq \delta_i$ for $i \in [n]$.  We present an algorithm that returns a null output or a curve $\sigma$ of at most $\ell$ vertices such that $d_F(\sigma,\tau_i) \leq \delta_i + \eps\delta_{\max}$ for $i \in [n]$, where $\delta_{\max} = \max_{i \in [n]} \delta_i$.  If the output is null, there is no curve of at most $\ell$ vertices within a Fr\'{e}chet distance of $\delta_i$ from $\tau_i$ for $i \in [n]$.  The running time is $\tilde{O}\bigl(n^{O(\ell)} \cdot m^{O(\ell^2)} \cdot (d\ell/\eps)^{O(d\ell)}\bigr)$.  This algorithm yields the first polynomial-time bicriteria approximation scheme to simplify a curve $\tau$ to another curve $\sigma$, where the vertices of $\sigma$ can be anywhere in $\real^d$, so that $d_F(\sigma,\tau) \leq (1+\eps)\delta$ and $|\sigma| \leq (1+\alpha)\cdot \min\{|c| : d_F(c,\tau) \leq \delta\}$ for any given $\delta > 0$ and any fixed $\alpha, \eps \in (0,1)$.  The running time is $\tilde{O}\bigl(m^{O(1/\alpha)} \cdot (d/(\alpha\eps))^{O(d/\alpha)}\bigr)$.
	By combining our technique with some previous results in the literature, we obtain an approximation algorithm for $(k,\ell)$-median clustering.   Given $T$, it computes a set $\Sigma$ of $k$ curves, each of $\ell$ vertices, such that $\sum_{i \in [n]} \min_{\sigma \in \Sigma} d_F(\sigma,\tau_i)$ is within a factor $1+\eps$ of the optimum with probability at least $1-\mu$ for any given $\mu, \eps \in (0,1)$.  The running time is $\tilde{O}\bigl(n \cdot m^{O(k\ell^2)} \cdot \mu^{-O(k\ell)} \cdot (dk\ell/\eps)^{O((dk\ell/\eps)\log(1/\mu))}\bigr)$.
\end{abstract}

\thispagestyle{empty}
\end{titlepage}

	\section{Introduction}
	
	The popularity of trajectory data analysis in applications such as wildlife monitoring, delivery tracking, and transportation analysis has generated a lot of interest in curve simplification and clustering under the Fr\'{e}chet distance $d_F$.   Given a polygonal curve $\tau$ of $m$ vertices in $\real^d$ and a value $\delta > 0$, curve simplification calls for computing a polygonal curve $\sigma$ of fewer vertices such that $d_F(\sigma,\tau) \leq \delta$.  Given a set of polygonal curves $T$ and two positive integers $k$ and $\ell$, the $(k,\ell)$-clustering problem is to find a set $\Sigma$ of $k$ curves, each of $\ell$ vertices,  that minimizes some distance measure between $\Sigma$ and $T$.  
	We present new approximation results for both problems.
	
	
	\vspace{8pt}
	
\noindent \textbf{Previous works.}  
Alt and Godau~\cite{alt1995computing} developed the first algorithm for computing $d_F(\sigma,\tau)$; it runs in $O\bigl(|\sigma||\tau|\log (|\sigma||\tau|)\bigr)$ time, where $|\sigma|$ and $|\tau|$ denote their numbers of vertices.  Let $\kappa(\tau,\delta) = \min\{|c| : d_F(c,\tau) \leq \delta\}$.
Let $m$ be $|\tau|$.  Agarwal~et~al.~\cite{agarwal2005near} named this problem as weak Fr\'{e}chet $\delta$-simplification and proposed an $O(m\log m)$-time algorithm in $\real^2$ that returns a curve $\sigma$ such that $d_F(\sigma,\tau) \leq \delta$ and $|\sigma| \leq \kappa(\tau,\delta/4)$ for a given $\delta > 0$. Guibas~et~al.~\cite{guibas1993approximating} presented an $O(m^2\log^2 m)$-time algorithm that minimizes $|\sigma|$ such that $d_F(\sigma, \tau)\le \delta$ in $\real^2$. But in $\real^d$ with $d\ge 3$, no algorithm is known yet.
Van~Kreveld et al.~\cite{van2018optimal} 
can minimize $|\sigma|$ in $O(|\sigma|m^5)$ time under the constraints of $d_F(\sigma,\tau) \leq \delta$ for a given $\delta > 0$ and the vertices of $\sigma$ being a subset of the vertices of $\tau$.  Van~de~Kerkhof~et~al.~\cite{van2019global} 
improved the running time to $O(m^3)$---a result also obtained by Bringmann and Chaudhury~\cite{bringmann2019polyline}---and that the problem is NP-hard for $d \geq 2$ if the vertices of $\sigma$ can be anywhere on $\tau$.  
Van~de~Kerkhof~et~al.~proposed another algorithm that returns a curve $\sigma$ in $O(\mathrm{poly}(1/\eps) \cdot m^2\log m \log\log m)$ time such that $d_F(\sigma,\tau) \leq (1+\eps)\delta$ and $|\sigma| \leq 2\kappa(\tau,\delta) - 2$, if the vertices of $\sigma$ can be anywhere in $\real^d$.

Let $T$ be a set of $n$ polygonal curves in $\real^d$, each of $m$ vertices.   The \emph{$(k,\ell)$-center clustering} problem is to find a set $\Sigma$ of $k$ curves, each of $\ell$ vertices, such that $\max_{\tau \in T} \min_{\sigma \in \Sigma} d_F(\sigma,\tau)$ is minimized.  The \emph{$(k,\ell)$-median clustering} problem is to minimize $\sum_{\tau \in T} \min_{\sigma \in \Sigma} d_F(\sigma,\tau)$.  Driemel et al.~\cite{driemel2016clustering} initiated the study of $(k,\ell)$-center clustering; they obtained approximation ratios of $1+\eps$ in one dimension and 8 in higher dimensions.
Buchin~et~al.~\cite{buchin2019approximating} proved that if $\ell$ is part of the input, there is no polynomial-time approximation scheme for $d \geq 2$ unless P = NP; if both $k$ and $\ell$ are constants, a lower bound of $2.25-\epsilon$ on the approximation ratio is shown.  Buchin~et~al.~also obtained smaller constant factor approximations for $d \geq 2$. It is worth noting that the hardness results~\cite{buchin2019approximating} of the $(1,\ell)$-center problem imply that the generalized curve simplification problem is also NP-hard and hard to approximate with a small constant factor.
For $(k,\ell)$-median clustering, Buchin~et~al.~\cite{BDS20} proved that the problem is NP-hard even if $k = 1$.  Subsequently, Buchin~et~al.~\cite{buchin2021approximating} designed a randomized bicriteria approximation algorithm; it computes a set $\Sigma$ of $k$ curves that has a cost at most $1+\eps$ times the optimum with probability at least $1-\mu$.  Each curve in $\Sigma$ may have up to $2\ell-2$ vertices.  The running time is $\tilde{O}\bigl(n \cdot m^{O(k\ell)} \cdot 2^{O((k^3/\eps^2)\log^2(1/\mu))} \cdot (k/(\mu\eps))^{O(dk\ell)}\bigr)$.
There are some results on coresets for $(k,\ell)$-median clustering under Fr\'{e}chet distance~\cite{BR22}.

	\vspace{8pt}
	
	\noindent \textbf{Our results.} Let $T = \{\tau_i : i \in [n] \}$ be polygonal curves in $\real^d$ of $m$ vertices each.  Let $\ell$ be any integer from $[m]$.  We study a generalized curve simplification problem: given error bounds $\delta_i > 0$ for $i \in [n]$, find a curve $\sigma$ of at most $\ell$ vertices such that $d_F(\sigma,\tau_i) \leq \delta_i$ for $i \in [n]$.  We present an algorithm that returns a null output or a curve $\sigma$ of at most $\ell$ vertices such that $d_F(\sigma,\tau_i) \leq \delta_i + \eps\delta_{\max}$ for $i \in [n]$, where $\delta_{\max} = \max_{i \in [n]} \delta_i$.  If the output is null, there is no curve of at most $\ell$ vertices within a Fr\'{e}chet distance of $\delta_i$ from $\tau_i$ for $i \in [n]$.  The running time is $\tilde{O}\bigl(n^{O(\ell)} \cdot m^{O(\ell^2)} \cdot (d\ell/\eps)^{O(d\ell)} \bigr)$.
	
	This algorithm also yields a polynomial-time bicriteria approximation scheme to simplify a curve $\tau$ to another curve $\sigma$, where the vertices of $\sigma$ can be anywhere in $\real^d$, so that $d_F(\sigma,\tau) \leq (1+\eps)\delta$ and $|\sigma| \leq (1+\alpha)\kappa(\tau,\delta)$ given any $\delta > 0$ and any $\alpha, \eps \in (0,1)$.  The running time is $\tilde{O}\bigl(m^{O(1/\alpha)} \cdot (d/(\alpha\eps))^{O(d/\alpha)}\bigr)$. This is the first polynomial-time bicriteria approximation scheme for simplifying a curve in $\real^d$ with $d\ge 3$.
	
	By combining our technique with the framework in~\cite{buchin2021approximating}, we obtain an approximation algorithm for $(k,\ell)$-median clustering.   Given $T$, it computes a set $\Sigma$ of $k$ curves, each of $\ell$ vertices, such that $\sum_{i \in [n]} \min_{\sigma \in \Sigma} d_F(\sigma,\tau_i)$ is within a factor $1+\eps$ of the optimum with probability at least $1-\mu$ for any given $\mu, \eps \in (0,1)$.  The running time is $\tilde{O}\bigl(n \cdot m^{O(k\ell^2)} \cdot \mu^{-O(k\ell)} \cdot (dk\ell/\eps)^{O((dk\ell/\eps)\log(1/\mu))}\bigr)$.  This result answers affirmatively the question raised in the previous work~\cite{buchin2021approximating}, which guarantees a bound of $2\ell-2$ on the output curve sizes, of whether the bound $\ell$ can be achieved with similar efficiency.

	
	There are two main ingredients of our results.  The first one is a space of configurations.  We use the grids introduced by Buchin~et~al.~\cite{buchin2021approximating} as a part of our discretization scheme; however, instead of enumerating all possible curves through the discretization vertices, we define configurations with some novel structural constraints in order to satisfy the bound $\ell$ on the size of the output curves.  Second, we design a two-phase method to construct approximate curves from the configurations.
	
	\vspace{8 pt}
	
	\noindent \textbf{Notations.}  We often denote a curve $\sigma$ as a sequence  $(u_1,u_2,\ldots,u_l)$ of its vertices.  Given  two points $x, y$ on $\sigma$, we say that $x \leq_{\sigma} y$ if $y$ is not encountered before $x$ as we walk along $\sigma$ from $u_1$.  Given two subsets $X$ and $Y$ of points on $\sigma$, we say that $X \leq_{\sigma} Y$ if $x \leq_{\sigma} y$ for all $x \in X$ and $y \in Y$.
	
	A \emph{parameterization} of $\sigma$ is a continuous function $\rho : [0,1] \rightarrow \sigma$ such that $\rho(0) = u_1$, $\rho(1) = u_l$, and for all $t, t' \in [0,1]$, $t \leq t' \iff \rho(t) \leq_{\sigma} \rho(t')$.   A \emph{matching} $g$ from a curve $\sigma$ to another curve $\tau$ is a pair of parameterizations $(\rho,\rho')$ for $\sigma$ and $\tau$, respectively, and $d_g(\sigma,\tau) = \max_{t \in [0,1]} d(\rho(t),\rho'(t))$.  For any point $x \in \sigma$, we denote the points in $\tau$ matched to $x$ by $g(x) = \bigl\{y \in \tau : \exists \, t \in [0,1] \,\, \text{s.t.} \,\, x = \rho(t) \, \wedge \, y = \rho'(t) \bigr\}$; for a subset $X \subseteq \sigma$, $g(X) = \bigcup_{x \in X} g(x)$.  For any point $y \in \tau$, we denote the points in $\sigma$ matched to $y$ by $g^{-1}(y) = \{x \in \sigma: \exists \, t \in [0,1] \,\, \text{s.t.} \, x = \rho(t) \, \wedge \, y = \rho'(t)\}$; for a subset $Y \subseteq \tau$, $g^{-1}(Y) = \bigcup_{y \in Y} g^{-1}(y)$.  The \emph{Fr\'{e}chet distance} of $\sigma$ and $\tau$ is $d_F(\sigma,\tau) = \inf_{g} d_g(\sigma,\tau)$.   A \emph{Fr\'{e}chet matching} from $\sigma$ to $\tau$ is a matching that realizes $d_F(\sigma,\tau)$.  Clearly, $d_F(\sigma,\tau) = d_F(\tau,\sigma)$. 
	
	\cancel{
	A map $f : \sigma \rightarrow \tau$ is \emph{order-respecting} if $f$ is continuous and $f(x) \leq_{\tau} f(y)$ for all points $x,y \in \sigma$ that satisfy $x \leq_{\sigma} y$.  
	The map $f$ is \emph{surjective} if $f^{-1}(p) \not= \emptyset$ for every point $p \in \tau$.  Let $\mathcal{F}$ be the class of all surjective, order-respecting map from $\sigma$ to $\tau$.  The \emph{Fr\'{e}chet distance} between $\sigma$ and $\tau$ is $d_F(\sigma,\tau) = \min_{f \in \mathcal{F}} \max_{x \in \sigma, p \in f(x)} d(p,x)$.  
	We call the maps in $\cal F$ that realize the distance $d_F(\sigma,\tau)$ the \emph{Fr\'{e}chet maps} from $\sigma$ to $\tau$.  Clearly, if $f \in \mathcal{F}$ is a Fr\'{e}chet map, then $f^{-1}$ is a Fr\'{e}chet map from $\tau$ to $\sigma$.  So $d_F(\sigma,\tau) = d_F(\tau,\sigma)$.
}
	
	We will be dealing with a set $T$ of polygonal curves of $m$ vertices each.  For each curve $\tau_i \in T$, we denote its vertices in order along $\tau_i$ by $v_{i,1},\ldots,v_{i,m}$.  For all $a \in [m-1]$, $\tau_{i,a}$ denotes the edge $v_{i,a}v_{i,a+1}$.  For all $a,b \in [m]$ such that $a \leq b$, $[v_{i,a},v_{i,b}]$ denotes the vertices $\{v_{i,a},v_{i,a+1},\ldots,v_{i,b}\}$.  Given two points $x, y$ on a curve $\tau$ such that $x \leq_{\tau} y$, $\tau[x,y]$ denotes the subcurve of $\tau$ from $x$ to $y$.
	
	Given two points $x,y \in \real^d$, $xy$ denotes the closed line segment connecting $x$ and $y$.  Given any curve $\tau$, $\mathrm{int}(\tau)$ denotes the relative interior of $\tau$.  We use $B_r$ to denote the ball centered at the origin with radius $r$.   Given two subsets $S$ and $S'$ of $\real^d$, their \emph{Minkowski sum} is $S \oplus S' = \{x + y : x \in S, \, y \in S'\}$.  Given a point $p$ and $S \subseteq \real^d$, $p + S = \{p + x : x \in S\}$. For any point $x$ and any segment~$s$, $x \!\downarrow\! s$ denotes the orthogonal projection of $x$ onto the support line of $s$; so $x \!\downarrow\! s$ may not lie on $s$.  For a point set $X$, $X \!\downarrow\! s = \{x \!\downarrow\! s : x \in X\}$.

	\section{Simplified representative of a set of curves} 
		
	Let $\Delta = \{\delta_i : i \in [n]\}$ be a set of error thresholds prescribed for $T$.  Define $\Q(T, \Delta,\ell)$ to be the problem of finding a curve $\sigma$ of at most $\ell$ vertices such that $d_F(\sigma, \tau_i)\le \delta_{i}$ for $i \in [n]$.   

	\subsection{Configurations}
	\label{sec:discrete}
	
	Imagine an infinite grid of hypercubes of side length $\alpha$.  Given a subset $R \subset \real^d$, let $G(R,\alpha)$ be the subset of grid cells that intersect $R$.
	
	Let $\min = \mathrm{argmin}_{i \in [n]} \delta_i$.  We compute $\mathcal{L} = \bigcup_{a \in [m-1]} L_a$, where $L_a$ is a set of segments that are parallel to $\tau_{\min,a}$.  First, compute the convex hull $C_a$ of $G(v_{\min,a} + B_{\delta_{\min}},\eps\delta_{\min}) \cup G(v_{\min,a+1} + B_{\delta_{\min}},\eps\delta_{\min})$.  Second, for every grid vertex $x \in G(v_{\min,a} + B_{\delta_{\min}},\eps\delta_{\min})$, take the line through $x$ that is parallel to $\tau_{\min,a}$, clip this line within $C_a$ to a segment, and include this segment in $L_a$.  The size of $L_a$ is $O(\eps^{-d})$; it can be computed in $O(\eps^{-d^2/2})$ time.  Each point in $C_a$ is at distance $\sqrt{d}\eps\delta_{\min}$ or less from a segment in $L_a$.  The size of $\mathcal{L}$ is $O(m\eps^{-d})$; it can be computed in $O(m\eps^{-d^2/2})$ time.

	Take any integer $l \in [\ell]$. 
	We construct two sets of grid cells: $\grid_1 = \bigcup_{i\in[n]}\bigcup_{a\in[m]} G\bigl(v_{i,a}+B_{\delta_i + \sqrt{d}\eps\delta_i}, \eps\delta_i/l)$ and $\grid_2 = \bigcup_{i\in[n]}\bigcup_{a\in[m]} G\bigl(v_{i,a}+B_{9\sqrt{d}\delta_{\max}}, \eps\delta_{\max})$, where $\delta_{\max} = \max_{i \in [n]}\delta_i$.  The size and construction time of $\grid_1$ are $O(mnl^d\eps^{-d})$; those of $\grid_2$ are smaller by an $l^d$ factor.
	
	
	
	 Each configuration $\Psi_l$ is a 4-tuple $(\mathcal{P}, \mathcal{C}, \mathcal{S},\mathcal{A})$ designed to capture a candidate curve $\sigma = (w_1,\ldots,w_l)$ of $l$ vertices.  The component $\mathcal{P}$ is an $n$-tuple $(\pi_i)_{i\in [n]}$.  Each $\pi_i$ is a function from $[m]$ to $[0,l-1]$ that partitions the vertices of $\tau_i$ into at most $l$ contiguous subsets.  If $\pi_i(a) = 0$, it means that $v_{i,a}$ should be matched to $w_1$; if $\pi_i(a) = j \in [l-1]$, it means that $v_{i,a}$ should be matched to point(s) in $w_jw_{j+1} \setminus \{w_j\}$.  We require that 
	 for all $i \in [n]$, $\pi_i(1) = 0$ and if $a \leq b$, then $\pi_i(a) \leq \pi_i(b)$.

	The component $\mathcal{C}$ is an $(l-1)$-tuple $((c_{j,1}, c_{j,2}))_{j\in[l-1]}$, where $c_{j,1}$ and $c_{j,2}$ are cells in $\grid_1$.  The cells $c_{j,1}$ and $c_{j,2}$ may be equal.  This component imposes the requirement that for every $j \in [l-1]$, $w_jw_{j+1}$ must intersect $c_{j,1}$ and $c_{j,2}$ in such a way that $w_jx \cap c_{j,1} \not= \emptyset$ for some point $x \in w_jw_{j+1} \cap c_{j,2}$.  The component $\mathcal{S}$ is an $l$-tuple $(s_j)_{j\in[l]}$, where each $s_j$ is a segment in $\mathcal{L}$.   This component imposes the requirement that $w_j \in s_j$.  It is possible that $s_j = s_k$ for two distinct $j, k \in [l]$.  The component $\mathcal{A}$ is an array of $l$ entries.    For each $j \in [l]$, $\mathcal{A}[j]$ is null or a cell in $\grid_2$; $\mathcal{A}[1]$ and $\mathcal{A}[l]$ must be cells in $\grid_2$; if $\mathcal{A}[j] \not= \mathrm{null}$, it imposes the constraint that $w_j \in s_j \cap \mathcal{A}[j]$.
	
	We will compute a candidate curve for each configuration.  Any candidate curve $\sigma$ that satisfies $d_F(\sigma,\tau_i) \leq \delta_i + \eps\delta_{\max}$ can be returned.  If no such curve is found, we report that $Q(T,\Delta,\ell)$ has no solution.  To satisfy the inequalities $d_F(\sigma,\tau_i) \leq \delta_i + \eps\delta_{\max}$, we will need to define $\eps' = \eps/\Theta(\sqrt{d})$ and substitute $\eps$ by $\eps'$ in the discretization scheme.  The number of configurations will go up by an $O(d^{d/2})$ factor.  If we use all input curves to form the configurations, there will be too many because there are close to $m^{nl}$ different $\mathcal{P}$'s.  We will discuss in Section~\ref{sec:accelerate} how to reduce this number.
	
	
	\begin{figure}
		\centering
		\includegraphics[scale=1.2]{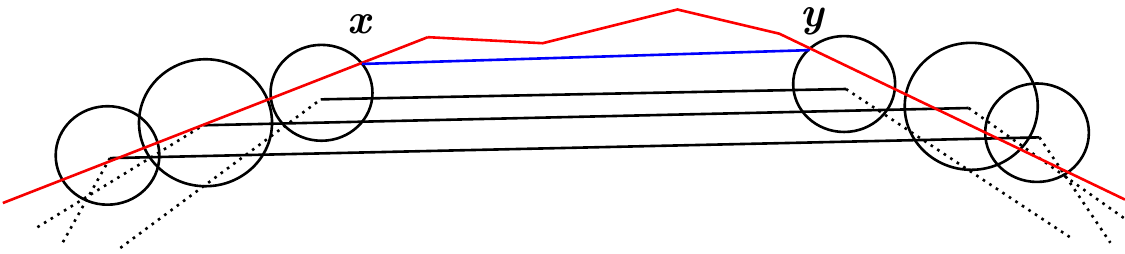}
		\caption{Visualization of the underlying idea of the component $\mathcal{C}$. The black curves are input curves. The red curve $\sigma$ is a solution for $Q(T,\Delta, \ell)$. The black circles are balls $v_{i,a}+B_{\delta_{i}}$ centered at $v_{i,a}$'s. If $\sigma[x,y]$ is matched to a single segment for every input curve, by shortcutting the subcurve $\sigma[x,y]$ to the blue segment, we can get a curve $\sigma'$ that is a solution for $Q(T, \Delta, \ell)$. In addition, every segment of $\sigma'$ intersects a ball $v_{i,a}+B_{\delta_i}$ centered around some input vertex $v_{i,a}$. It means that the intersection between every segment of $\sigma'$ and $\mathcal{G}_1$ is not empty, which motivates the definition of $(c_{j,1}, c_{j,2})$.}
	\end{figure}
	
\subsection{Constraints with respect to a configuration}
	
We describe several constraints that enforce the intuition behind the definition of a configuration.  These constraints (or their relaxations) will be verified by our algorithm.  Consider a configuration $\Psi_l = (\mathcal{P},\mathcal{C},\mathcal{S},\mathcal{A})$ and a candidate curve $\sigma = (w_1,\ldots,w_l)$ to be constructed for $\Psi_l$.   Constraint~1 requires that the cells in $\mathcal{C}$ are close to the corresponding input subcurves.  Constraint~2 restricts the locations of the vertices and edges of $\sigma$.  Constraint~3 concerns with whether  the vertices of $\sigma$ can be matched to the input curves in an order respecting manner within the error bounds.  
 
\begin{quote}
\noindent \textbf{Constraint 1:} For every $i \in [n]$ and every $j \in [l-1]$, if $\pi_i^{-1}(j)$ is some non-empty $[a,b]$, then for every vertex $x$ of $c_{j,1}$ and every vertex $y$ of $c_{j,2}$, there exist points $p,q \in xy$ such that $d_F(pq,\tau_i[v_{i,a},v_{i,b}]) \leq \delta_i + 2\sqrt{d}\eps\delta_i$.

\vspace{4pt}

\noindent \textbf{Constraint 2:} 
\vspace{-6pt}
\begin{enumerate}[(a)]
	\item For every $j \in [l]$, if $\mathcal{A}[j]$ is null, then $w_j \in s_j$; otherwise, $w_j \in s_j \cap \mathcal{A}[j]$. 
	\item For every $j \in [l-1]$, $w_jx \cap c_{j,1} \not= \emptyset$ for some point $x \in w_jw_{j+1} \cap c_{j,2}$.  
\end{enumerate}

\vspace{4pt}

\noindent \textbf{Constraint 3:} 
\vspace{-6pt}
\begin{enumerate}[(a)]
	\item For every vertex $x$ of $\mathcal{A}[1]$ and every vertex $y$ of $\mathcal{A}[l]$, both $d(v_{i,1},x)$ and $d(v_{i,m},y)$ are at most  $\delta_i + 2\sqrt{d}\eps\delta_{\max}$ for all $i \in [n]$.
	\item Take any index $i \in [n]$.  For all $a \in [m-1]$, define $J_a = \bigl\{j :  \pi_i(a) < j \leq \pi_i(a+1) \, \wedge \, \mathcal{A}[j] \not= \mathrm{null} \bigr\}$, i.e., for $j \in J_a$, some point(s) in $\tau_{i,a}$ should be matched to $w_j$.  Constraint~3(b) requires that for all $a \in [m-1]$, if $J_a \not= \emptyset$, there exist points $\{p_j\in  \tau_{i,a} : j \in J_a \}$ such that:
	\begin{enumerate}[(i)]
		\item for all $j, k \in J_a$, if $j < k$, then $p_j \leq_{\tau_i} p_k$;
		\item for every $j \in J_a$ and every vertex $x$ of $\mathcal{A}[j]$, $d(p_j,x) \leq \delta_i + 2\sqrt{d}\eps\delta_{\max}$.
	\end{enumerate}
	\item Take any index $j \in [l-1]$ such that $\mathcal{A}[j] = \mathrm{null}$.  Note that $j > 1$ in order that $\mathcal{A}[j] = \mathrm{null}$.  Let $N_j = \bigl\{i  :  \exists \, a_i \in [m-1] \,\, \text{s.t.} \,\, \pi_i(a_i) < j \leq \pi_i(a_i+1) \bigr\}$, i.e., for $i \in N_j$, some point(s) in $\tau_{i,a_i}$ should be matched to $w_j$.  Constraint~3(c) requires that the following conditions are satisfied for all $i \in N_j$.
	\begin{enumerate}[(i)]
		\item $w_j \!\downarrow \!\tau_{i,a_i}\in \tau_{i,a_i}$ and $d(w_j,\tau_{i,a_i}) \leq \delta_i + \sqrt{d}\eps\delta_{\max}$.
		\item If $\pi_i(a_i) < j-1 \leq \pi_i(a_i+1)$ and $\mathcal{A}[j-1] = \mathrm{null}$, then $w_{j-1}\!\downarrow\! \tau_{i,a_i} \leq_{\tau_i} w_j\!\downarrow\!\tau_{i,a_i}$.
		\item  If $\pi_i(a_i) < j-1 \leq \pi_i(a_i+1)$  and $\mathcal{A}[j-1] \not= \mathrm{null}$, then \\
		$\tau_{i,a_i} \cap (\mathcal{A}[j-1] \oplus B_{\delta_i +3\sqrt{d}\eps\delta_i}) \leq_{\tau_{i}} w_j\!\downarrow\!\tau_{i,a_i}$.
		\item If $\pi_i(a_i) < j+1 \leq \pi_i(a_i+1)$ and $\mathcal{A}[j+1] = \mathrm{null}$, then $w_j \!\downarrow\! \tau_{i,a_i} \leq_{\tau_i} w_{j+1} \!\downarrow\!\tau_{i,a_i}$.
		\item If $\pi_i(a_i) < j+1 \leq \pi_i(a_i+1)$  and $\mathcal{A}[j+1] \not= \mathrm{null}$, then \\
		$w_j\!\downarrow\!\tau_{i,a_i} \leq_{\tau_i} \tau_{i,a_i} \cap (\mathcal{A}[j+1] \oplus B_{\delta_i+3\sqrt{d}\eps\delta_i})$.
	\end{enumerate}
\end{enumerate}
\end{quote}

Constraints~1--3 are justified by Lemma~\ref{Lemma: valid_cons_set} below.  It is proved by snapping the vertices of the solution curve to the discretization; the details are deferred to Appendix~\ref{app:Lemma: valid_cons_set}.

\begin{figure}
	\centering
	\includegraphics[scale=1.5]{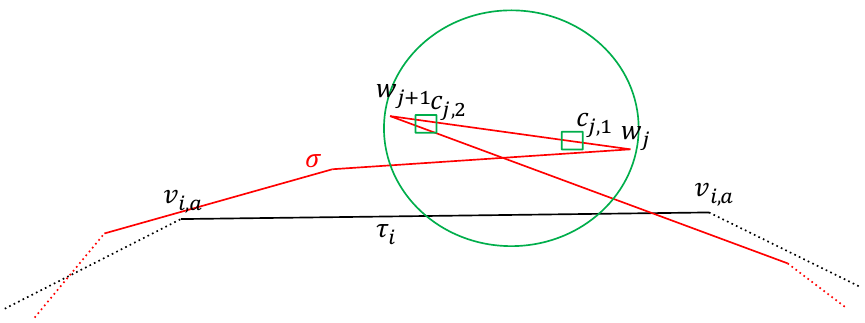}
	\caption{Visualization of the underlying idea of the component $\mathcal{A}$. The black curve is some input curve $\tau_i$. The red curve is a solution curve for $Q(T, \Delta, \ell)$ such that every segment intersects a ball centered at some input vertex. Suppose that both $w_j$ and $w_{j+1}$ are matched to some points on $\tau_{i,a}$. If $w_j \!\downarrow$\! $\tau_{i,a} \ge_{\tau_i} w_{j+1} \!\downarrow\! \tau_{i,a}$, the length of $w_jw_{j+1}$ must be no more than $2\delta_{i}$. It means that both $w_j$ and $w_{j+1}$ can be covered by $B_{9\sqrt{d}\delta_{\max}}+v_{i', a'}$ for some input vertex $v_{i', a'}$.}
\end{figure}

\begin{lemma}\label{Lemma: valid_cons_set}
If $\Q(T, \Delta,\ell)$ has a solution, there exist a configuration $\Psi_l$ and a curve $\sigma = (w_1,\ldots,w_l)$ for some $l \in [\ell]$ such that constraints~1--3 are satisfied and $d_F(\tau_{i}, \sigma)\le \delta_{i}+ \sqrt{d}\eps\delta_{\min}$ for $i\in [n]$.
\cancel{
\begin{enumerate}[{\em (i)}]
	\item Let $\sigma = (u_1,\ldots,u_l)$ be any solution curve for $Q(T,\Delta,\ell)$.  We can find a Fr\'{e}chet matching $g_i$ from $\tau_i$ to $\sigma$ for $i \in [n]$ such that for all $j \in [l-1]$, there exist $i \in [n]$ and $a \in [m]$ such that $g_i(v_{i,a}) \cap u_ju_{j+1} \not= \emptyset$ and $u_j \not\in g_i(v_{i,a})$. 
	\item There exist a configuration $\Psi_l$ and a curve $\sigma = (w_1,\ldots,w_l)$ for some $l \leq \ell$ such that constraints~1--3 are satisfied and $d_F(\tau_{i}, \sigma)\le \delta_{i}+ \sqrt{d}\eps\delta_{\min}$ for $i\in [n]$.
\end{enumerate}
}
\end{lemma}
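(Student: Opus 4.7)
The plan is to start with any solution curve $\sigma_0 = (u_1, \ldots, u_{l_0})$ for $\Q(T, \Delta, \ell)$ together with Fréchet matchings $g_i$ realizing $d_F(\sigma_0, \tau_i) \leq \delta_i$, and reach the desired configuration and candidate curve through two preprocessing steps followed by a case analysis for the constraints. First, I would perform a shortcutting pass: while there is a sub-curve $\sigma_0[u_j, u_{j+r}]$ with $r \geq 2$ such that for every $i \in [n]$ the preimage $g_i^{-1}(\sigma_0[u_j, u_{j+r}])$ lies in a single edge of $\tau_i$, replace it by the straight segment $u_j u_{j+r}$. Because the Fréchet distance between two segments is bounded by the maximum of their endpoint distances, and the shortcut sits inside a single-edge slab of width $\delta_i$, the resulting curve $\sigma_1 = (u'_1, \ldots, u'_l)$ with $l \leq l_0 \leq \ell$ still solves $\Q(T, \Delta, \ell)$. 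By maximality, for every $j \in [l-1]$ there exist $i, a$ with $g_i(v_{i,a}) \cap (u'_j u'_{j+1} \setminus \{u'_j\}) \neq \emptyset$, giving the structural property that each edge of $\sigma_1$ meets two cells of $\grid_1$ lying in some $v_{i,a} + B_{\delta_i + \sqrt{d}\eps\delta_i}$.

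Next I would read off the configuration $\Psi_l = (\mathcal{P}, \mathcal{C}, \mathcal{S}, \mathcal{A})$ and snap the vertices of $\sigma_1$ to produce the candidate curve $\sigma = (w_1, \ldots, w_l)$. Set $\pi_i(a) = j - 1$ for the smallest $j$ with $g_i(v_{i,a}) \cap (u'_j u'_{j+1} \setminus \{u'_j\}) \neq \emptyset$; order-respecting matchings yield monotonicity and $\pi_i(1) = 0$. For each $j \in [l-1]$, pick $c_{j,1}, c_{j,2} \in \grid_1$ crossed in order by $u'_j u'_{j+1}$, each enclosing an input-vertex matching guaranteed by the shortcutting property. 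Assign $\mathcal{A}[j]$ to the enclosing $\grid_2$ cell whenever $u'_j \in v_{i,a} + B_{9\sqrt{d}\delta_{\max}}$ for some $(i,a)$, and $\mathrm{null}$ otherwise; the bounds $d(u'_1, v_{i,1}), d(u'_l, v_{i,m}) \leq \delta_i$ force $\mathcal{A}[1], \mathcal{A}[l]$ to be non-null. For $\mathcal{S}$, since $d_F(\sigma_1, \tau_{\min}) \leq \delta_{\min}$, associate each $u'_j$ with an edge $\tau_{\min, a}$ satisfying $d(u'_j, \tau_{\min, a}) \leq \delta_{\min}$ and pick $s_j \in L_a$ whose support line is within $\sqrt{d}\eps\delta_{\min}$ of $u'_j$; define $w_j$ by perpendicularly projecting $u'_j$ onto $s_j$ and further adjusting into $\mathcal{A}[j]$ when non-null. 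Each $w_j$ lies within $\sqrt{d}\eps\delta_{\min}$ of $u'_j$, giving $d_F(\sigma, \tau_i) \leq \delta_i + \sqrt{d}\eps\delta_{\min}$.

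Finally, I would verify Constraints 1--3 in turn. Constraint 1 follows because $c_{j,1}, c_{j,2}$ lie in a $(\delta_i + \sqrt{d}\eps\delta_i)$-neighbourhood of the matched sub-curve $\tau_i[v_{i,a}, v_{i,b}]$, so any pair of their corner points bounds a segment within the required Fréchet-distance slab. Constraint 2 is immediate from the snapping construction, and Constraint 3(a)--(b) follow from the matchings $g_i$ after accounting for the $\sqrt{d}\eps\delta_{\min}$ snap plus cell-diameter slack. The main obstacle is Constraint 3(c), which I would handle by contradiction using the shortcutting structural property: if two consecutive $w_j, w_{j+1}$ matched to a common edge $\tau_{i, a_i}$ projected in reverse order, then an elementary parametrization argument on $\tau_{i, a_i}$ bounds the along-edge gap $|w_j \!\downarrow\! \tau_{i,a_i} - w_{j+1} \!\downarrow\! \tau_{i,a_i}|$ by $2\delta_i$ and hence $|w_j w_{j+1}|$ by $O(\delta_i)$; combined with the fact that $w_j w_{j+1}$ must meet some $v_{i', a'} + B_{\delta_{i'} + \sqrt{d}\eps\delta_{i'}}$, both endpoints would lie in $v_{i', a'} + B_{9\sqrt{d}\delta_{\max}}$, contradicting $\mathcal{A}[j] = \mathrm{null}$ or $\mathcal{A}[j+1] = \mathrm{null}$. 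The mixed subcases 3(c)(iii) and 3(c)(v) with a non-null neighbour are treated analogously, with the Minkowski slack $\mathcal{A}[\cdot] \oplus B_{\delta_i + 3\sqrt{d}\eps\delta_i}$ absorbing the combined snapping and cell-diameter error; the projection-inside-edge condition 3(c)(i) likewise follows, since a projection falling off $\tau_{i,a_i}$ would place $u'_j$ (and thus $w_j$) within $9\sqrt{d}\delta_{\max}$ of an input vertex, forcing $\mathcal{A}[j]$ to be non-null.
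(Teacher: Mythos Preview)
Your overall plan---shortcut, snap to $\mathcal{L}$, read off $(\mathcal{P},\mathcal{C},\mathcal{S},\mathcal{A})$, then verify Constraints~1--3 with the $\mathcal{A}[j]=\mathrm{null}$ contradiction for 3(c)---matches the paper's strategy, and your treatment of Constraint~3(c) is essentially the paper's argument. The gap is in the shortcutting step.

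You replace a vertex-to-vertex subcurve $\sigma_0[u_j,u_{j+r}]$ (with $r\ge 2$) by the chord $u_ju_{j+r}$ whenever every $g_i^{-1}$ places it inside a single edge of $\tau_i$, and then assert that ``by maximality'' every surviving edge meets some $g_i(v_{i,a})$ in its half-open interior. This does not follow. Take $\tau_1=(v_1,v_2,v_3)$, $\tau_2=(w_1,w_2,w_3)$ and $\sigma_0=(u_1,\ldots,u_5)$ with $g_1(v_2)\in(u_1,u_2)$, $g_2(w_2)\in(u_4,u_5)$, and all other input vertices sent to $u_1$ or $u_5$. The only subcurve meeting your rule is $[u_2,u_4]$ (both preimages lie in single edges there), so you shortcut to $(u_1,u_2,u_4,u_5)$; but the new edge $u_2u_4$ meets \emph{no} $g_i(v_{i,a})$ whatsoever, and no further shortcut with $r\ge 2$ applies because $[u_1,u_4]$ and $[u_2,u_5]$ each have a two-edge preimage. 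With no vertex-match on $u_2u_4$ there is no guaranteed $\grid_1$ cell to serve as $c_{j,1}$ or $c_{j,2}$, so the component $\mathcal{C}$ (and hence Constraint~1) cannot be set up. The paper avoids this by shortcutting between the \emph{extreme vertex-match points} $p=\max_i\max g_i(v_{i,a_i})$ and $q=\min_i\min g_i(v_{i,a_i+1})$ rather than between existing vertices of $\sigma$; since $p$ and $q$ themselves lie in some $g_s(v_{s,\cdot})$, every resulting edge automatically carries the witness needed for $\mathcal{C}$.

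A smaller issue: you assign $\mathcal{A}[j]$ from the pre-snap vertex $u'_j$ and afterwards snap to $w_j\in s_j$, then speak of ``further adjusting into $\mathcal{A}[j]$''. That adjustment may push $w_j$ off $s_j$, breaking Constraint~2(a). The clean fix (which the paper uses) is to snap first and only then set $\mathcal{A}[j]$ to whichever $\grid_2$ cell contains $w_j$, so that $w_j\in s_j\cap\mathcal{A}[j]$ holds by construction.
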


\subsection{Forward construction}
\label{sec:forward}


Given a configuration $\Psi_l = (\mathcal{P},\mathcal{C},\mathcal{S},\mathcal{A})$, we check if it satisfies constraint~1, and if so, whether there exists a curve $\sigma$ that satisfies constraints~2 and~3.   It is difficult to check constraints~2,~3(c)(ii), and~3(c)(iv) exactly; therefore, we will check some relaxed versions that will be introduced later.  We will check constraints~3(b), 3(c)(i), 3(c)(iii), and~3(c)(v) exactly though.

We check constraint~1 as follows.  Take any $i \in [n]$ and any $j \in [l-1]$ such that $\pi_i^{-1}(j)$ is some non-empty $[a,b]$.  Let $x$ and $y$ be any two vertices of $c_{j,1}$ and $c_{j,2}$, respectively.   If $xy \cap (v_{i,a} + B_{\delta_i +2\sqrt{d}\eps\delta_i})$ or $xy \cap (v_{i,b} + B_{\delta_i+2\sqrt{d}\eps\delta_i})$ is empty, $\Psi_l$ does  not satisfy constraint~1.   Suppose that they are non-empty.  Let $p_1$ and $p_2$ be the points in $xy \cap (v_{i,a} + B_{\delta_i+2\sqrt{d}\eps\delta_i})$ that  are the minimum and maximum with respect to $\leq_{xy}$, respectively.  Similarly, let $q_1$ and $q_2$ be the points in $xy \cap (v_{i,b} + B_{\delta_i+2\sqrt{d}\eps\delta_i})$ that are the minimum and maximum with respect to $\leq_{xy}$, respectively.  We compute $d_F(p_1q_2,\tau_i[v_{i,a},v_{i,b}])$.  If $d_F(p_1q_2,\tau_i[v_{i,a},v_{i,b}]) > \delta_i + 2\sqrt{d}\eps\delta_i$, then $\Psi_l$ does not satisfy constraint~1.  Otherwise, we repeat the above for all vertices of $c_{j,1}$ and $c_{j,2}$, $j \in [l-1]$, and $i \in [n]$.  If the check is passed every time, then $\Psi_l$ satisfies constraint~1.   For a fixed $i$, the total time needed over all $j \in [l-1]$ is $O(m\log m)$ because $\bigl|\bigcup_{j \in [l-1]} \pi_i^{-1}(j)\bigr| = m$.  So the total time over all $i \in [n]$ and all $j \in [l-1]$ is $O(mn\log m)$.

We prove the correctness of this check.  There are points $p,q \in xy$ such that $d_F(pq,\tau_i[v_{i,a},v_{i,b}]) \leq \delta_i + 2\sqrt{d}\eps\delta_i$ if and only if there exist points $p \in xy \cap (v_{i,a} + B_{\delta_i+2\sqrt{d}\eps\delta_i})$ and $q \in xy \cap (v_{i,b} + B_{\delta_i+2\sqrt{d}\eps\delta_i})$ such that $d_F(pq,\tau_i[v_{i,a},v_{i,b}]) \leq \delta_i + 2\sqrt{d}\eps\delta_i$.  Such points $p$ and $q$ lie in $p_1p_2$ and $q_1q_2$, respectively.  All points in $p_1p$ and $qq_2$ can be matched to $v_{i,a}$ and $v_{i,b}$, respectively, within the error bound of $\delta_i +2\sqrt{d}\eps\delta_i$.   Hence, $p$ and $q$ exist if and only if  $d_F(p_1q_2,\tau_i[v_{i,a},v_{i,b}]) \leq \delta_i + 2\sqrt{d}\eps\delta_i$.


The rest of the forward phase is to inductively compute supersets $\gamma_1,\ldots,\gamma_l$ of the feasible locations of the vertices $w_1,\ldots,w_l$ of $\sigma$ with respect to $\Psi_l$.  We will see that every $\gamma_j$ is a line segment.  We need the geometric construct $F(R,S) = \bigl\{p \in \real^d : \exists \, q \in S \,\, \text{s.t.} \,\, pq \cap R \not= \emptyset\bigr\}$, where $R$ and $S$ are two bounded convex polytopes in $\real^d$.  We can show that $F(R,S)$ is a convex polytope, and it can be constructed by computing a convex hull and a Minkowski sum.  In our usage, $|R|$ and $|S|$ are $O(2^{O(d)})$; as a result, $|F(R,S)| = O(2^{O(d)})$ and its construction time is $O(2^{O(d)})$. Refer to Appendix~\ref{app:geom} for details.  The inductive computation of $\gamma_k$ is as follows.  If $\gamma_j$ is found empty for any $j \in [l]$, we abort and do not go to the backward phase.

\vspace{8pt}

\noindent \textbf{The case of $\pmb{k=1}$.}   If every vertex of $\mathcal{A}[1]$ is within a distance of $\delta_i + 2\sqrt{d}\eps\delta_{\max}$ from $v_{i,1}$ for all $i \in [n]$,  compute $\gamma_1 = F(c_{1,1},c_{1,2}) \cap s_1 \cap \mathcal{A}[1]$.  Abort otherwise.  By constraint~2, $w_1 \in s_1 \cap \mathcal{A}[1]$, and $w_1x \cap c_{1,1} \not= \emptyset$ for some point $x \in w_1w_2 \cap c_{1,2}$. Therefore, $F(c_{1,1},c_{1,2}) \cap s_1 \cap \mathcal{A}[1]$ represents a relaxed version of constraint~2 on $w_1w_2$.  The processing time of this case is $O(n2^{O(d)})$.

\vspace{8pt}

\noindent \textbf{The case of $\pmb{k \in [2,l-1]}$.}  Suppose that $\gamma_1,\ldots,\gamma_{k-1}$ have been constructed for some $k \in [2,l-1]$.  

\vspace{6pt}

\underline{Case~1}: $\mathcal{A}[k] \not= \mathrm{null}$.  Compute $\gamma_k = F(c_{k-1,2},\gamma_{k-1}) \cap F(c_{k,1},c_{k,2}) \cap s_k \cap \mathcal{A}[k]$ in $O(2^{(O(d)})$ time.   

As before, $w_k \in F(c_{k,1},c_{k,2}) \cap s_k \cap \mathcal{A}[k]$ is a relaxed version of constraint~2 on $w_kw_{k+1}$.  By constraint~2 again,
we must connect $w_k$ to $w_{k-1}$, which is in $\gamma_{k-1}$, such that $w_{k-1}w_k \cap c_{k-1,2} \not= \emptyset$, implying that $w_k \in F(c_{k-1,2},\gamma_{k-1})$.  Therefore, $\gamma_k =  F(c_{k-1,2},\gamma_{k-1}) \cap F(c_{k,1},c_{k,2}) \cap s_k \cap \mathcal{A}[k]$ satisfies a relaxed version of constraint~2.

Let $J'_a = \{j \in [k] : \pi_i(a) < j \leq \pi_i(a+1) \, \wedge \, \mathcal{A}[j] \not= \mathrm{null}\}$.  To check whether $\gamma_k$ satisfies constraint~3(b), we need to check the existence of $\{p_j \in \tau_{i,a} : j \in J'_a\}$ in increasing order of $j$ along $\tau_{i,a}$ that satisfy $d(p_j,x) \leq \delta_i + 2\sqrt{d}\eps\delta_{\max}$ for every vertex $x$ of $\mathcal{A}[j]$.  Such $p_j$'s must lie in the common intersection of $x + B_{\delta_i+2\sqrt{d}\eps\delta_i}$ over all vertices $x$ of $\mathcal{A}[j]$.  For every $j \in J'_a$, $\tau_{i,a}$ intersects this common intersection in an interval $I_{i,j}$.  Let $j_1 < \ldots < j_{|J'_a|}$ be the increasing order of indices in $J'_a$.  For $r = |J'_a|-1, \ldots, 1$ in this order, we trim $I_{i,j_r}$ to the interval $\{ p \in I_{i,j_r} : p \leq_{\tau_i} \max(I_{i,j_{r+1}}) \}$.  Afterwards, constraint~3(b) can be satisfied for $i$ if and only if $I_{i,j} \not= \emptyset$ for all $j \in J'_a$.  If the check is passed for every $i \in [n]$ and every $a \in [m-1]$, we accept $\gamma_k$; otherwise, we abort.  We spend $O(m+l2^{O(d)})$ time over all $a \in [m-1]$ for each $i \in [n]$.  The processing time is thus $O(mn2^{O(d)})$. 

\vspace{6pt}

\underline{Case~2}: $\mathcal{A}[k] = \mathrm{null}$.
%
%
To satisfy a relaxed version of constraint~2 for $w_{k-1}w_k$ and $w_kw_{k+1}$, we require $w_k \in F(c_{k-1,2},\gamma_{k-1}) \cap F(c_{k,1},c_{k,2}) \cap s_k$.  Recall that $N_k = \bigl\{i : \exists \, a_i \in [m-1] \,\, \text{s.t.} \,\, \pi_i(a_i) < k \leq \pi_i(a_i+1)\bigr\}$.  Let $H_{i,a_i}$ be the cylinder with axis $\tau_{i,a_i}$ and radius $\delta_i + \sqrt{d}\eps\delta_{\max}$.  To satisfy constraint~3(c)(i), we require $w_k \in \bigcap_{i \in N_k} H_{i,a_i}$.  Altogether, we initialize $\gamma_k = F(c_{k-1,2},\gamma_{k-1}) \cap F(c_{k,1},c_{k,2}) \cap s_k \cap \bigcap_{i \in N_k} H_{i,a_i}$.  We can compute in $O(2^{O(d)})$ time the clipped segment $F(c_{k-1,2},\gamma_{k-1}) \cap F(c_{k,1},c_{k,2}) \cap s_k$.  Then we intersect the clipped segment with each $H_{i,a_i}$ in $O(1)$ time.  The total initialization time is $O(n+2^{O(d)})$.  We may trim $\gamma_k$ further as discussed below.

Case~2.1: Suppose that constraint~3(c)(ii) is applicable because $\pi_i(a_i) < k-1 \leq \pi_i(a_i+1)$ and $\mathcal{A}[k-1] = \mathrm{null}$.   
By constraint~3(c)(i) on $w_{k-1}$, we have $w_{k-1} \in H_{i,a_i}$.  So $w_k$ satisfies constraint~3(c)(ii) if and only if $w_k-w_{k-1}$ makes a non-negative inner product with $v_{i,a_i+1} - v_{i,a_i}$.   That is, $w_k-w_{k-1} \in \Pi_{i,a_i}$, where $\Pi_{i,a_i}$ is the closed halfspace containing $v_{i,a_i+1} - v_{i,a_i}$ such that the bounding hyperplane of $\Pi_{i,a_i}$ passes through the origin and is orthogonal to $v_{i,a_i+1} - v_{i,a_i}$.   Since $w_{k-1}w_k$ intersects $c_{k-1,2}$, we relax constraint~3(c)(ii) to the restriction that $w_k \in \bigcap_{i \in N_k'} c_{k-1,2} \oplus \Pi_{i,a_i}$, where $N'_k = \bigl\{ i \in N_k : \pi_i(a_i) < k-1 \leq \pi_i(a_i+1) \, \wedge \, \mathcal{A}[k-1] = \mathrm{null}\bigr\}$.  There is no need to compute $\bigcap_{i \in N'_k} c_{k-1,2} \oplus \Pi_{i,a_i}$ because we can clip $\gamma_k$ with each $c_{k-1,2} \oplus \Pi_{i,a_i}$ in $O(2^{O(d)})$ time.

Case~2.2: Suppose that constraint~3(c)(iii) is applicable because $\pi_i(a_i) < k-1 \leq \pi_i(a_i+1)$ and $\mathcal{A}[k-1] \not= \mathrm{null}$.   We compute in $O(2^{O(d)})$ time the point $p_{i,a_i}\in \tau_{i,a_i} \cap (\mathcal{A}[k-1] \oplus B_{\delta_i + 3\sqrt{d}\eps\delta_i})$ that is maximum according to $\leq_{\tau_i}$.  We already require $w_k \in H_{i,a_i}$.  Thus, satisfying constraint~3(c)(iii) is equivalent to requiring $w_k \in p_{i,a_i} + \Pi_{i,a_i}$. The extra restriction in Case~2.2 is thus $w_k \in \bigcap_{i \in N''_k} (p_{i,a_i}+ \Pi_{i,a_i})$, where $N''_k = \bigl\{ i \in N_k : \pi_i(a_i) < k-1 \leq \pi_i(a_i+1) \, \wedge \, \mathcal{A}[k-1] \not= \mathrm{null}\bigr\}$.  We do not compute $\bigcap_{i \in N''_k} (p_{i,a_i}+ \Pi_{i,a_i})$; we clip $\gamma_k$ with each $p_{i,a_i}+ \Pi_{i,a_i}$ in $O(1)$ time instead.

Case~2.3: Suppose that $\pi_i(a_i) < k+1\leq \pi_i(a_i+1)$ and $\mathcal{A}[k+1] = \mathrm{null}$. 
As in case~2.1 above, constraint~3(c)(iv) for $w_kw_{k+1}$ requires $w_{k+1}-w_k \in \Pi_{i,a_i}$; we relax this requirement to the extra restriction that $w_k \in \bigcap_{i \in N^*_k} c_{k,2} \oplus (-\Pi_{i,a_i})$, where $N^*_k = \bigl\{ i \in N_k : \pi_i(a_i) < k+1 \leq \pi_i(a_i+1) \, \wedge \, \mathcal{A}[k+1] = \mathrm{null}\bigr\}$.  We can clip $\gamma_k$ with each $c_{k,2} \oplus (-\Pi_{i,a_i})$ in $O(2^{O(d)})$ time.

Case~2.4: Suppose that $\pi_i(a_i) < k+1\leq \pi_i(a_i+1)$ and $\mathcal{A}[k+1] \not= \mathrm{null}$.   We compute in $O(2^{O(d)})$ time the minimum point $q_{i,a_i}$ in $\tau_{i,a_i} \cap (\mathcal{A}[k+1] \oplus B_{\delta_i + 3\sqrt{d}\eps\delta_i})$ according to $\leq_{\tau_i}$ for all $i \in N_k$.  As in case~2.2, satisfying constraint~3(c)(v) is equivalent to requiring that $w_k \in q_{i,a_i} - \Pi_{i,a_i}$.  So the extra restriction is $w_k \in \bigcap_{i \in N^{**}_k} (q_{i,a_i} - \Pi_{i,a_i})$, where $N^{**}_k = \bigl\{ i \in N_k : \pi_i(a_i) < k+1 \leq \pi_i(a_i+1) \, \wedge \, \mathcal{A}[k+1] \not= \mathrm{null}\bigr\}$.  We can clip $\gamma_k$ with each $q_{i,a_i} - \Pi_{i,a_i}$ in $O(1)$ time.

\vspace{6pt}

\underline{Summary}:  We list the different definitions of $\gamma_k$ for $k \in [2,l-1]$ in the following.
\begin{itemize}
\item $\mathcal{A}[k] \not= \mathrm{null}$: Compute $\gamma_k = F(c_{k-1,2},\gamma_{k-1}) \cap F(c_{k,1},c_{k,2}) \cap s_k \cap \mathcal{A}[k]$.  Check constraint~3(b).

\item $\mathcal{A}[k] = \mathrm{null}$: Initialize $\gamma_k = F(c_{k-1,2},\gamma_{k-1}) \cap F(c_{k,1},c_{k,2}) \cap s_k \cap \bigcap_{i \in N_k} H_{i,a_i}$.  If $N'_k \not= \emptyset$, update $\gamma_k = \gamma_k \cap \bigcap_{i \in N'_k} c_{k-1,2} \oplus \Pi_{i,a_i}$.  If $N''_k \not= \emptyset$, update $\gamma_k = \gamma_k \cap \bigcap_{i \in N''_k} (p_{i,a_i} + \Pi_{i,a_i})$.  If $N^*_k \not= \emptyset$, update $\gamma_k = \gamma_k \cap \bigcap_{i \in N^*_k} c_{k,2} \oplus (-\Pi_{i,a_i})$.  If $N^{**}_k \not= \emptyset$, update $\gamma_k = \gamma_k \cap \bigcap_{i \in N^{**}_k} (q_{i,a_i} - \Pi_{i,a_i})$. 
\end{itemize}
The total processing time for Case~2 is $O(n2^{O(d)})$.

\vspace{6pt}

\noindent \textbf{The case of $\pmb{k = l}$.}  Since $\mathcal{A}[l] \not= \mathrm{null}$, we proceed as in the case of $k \in [2,l-1]$, but we do not need to consider $(c_{l,1},c_{l,2})$.  That is, we compute $\gamma_l = F(c_{l-1,2},\gamma_{l-1}) \cap s_l \cap \mathcal{A}[l]$ in $O(2^{O(d)})$ time, and we check constraints~3(a) and 3(b) in $O(mn2^{O(d)})$ time as before.

\vspace{6pt}

	
\begin{lemma}
	\label{lem:forward}
	Given a configuration $\Psi_l$, the forward construction runs in $O(mn\log m + lmn2^{O(d)})$ time.
	If $\Psi_l$ satisfies constraint~1 and there exists a curve $\sigma = (w_1,\ldots,w_l)$ that satisfies constraints~2 and~3 with respect to $\Psi_l$, the forward construction produces a sequence of non-empty line segments $(\gamma_1,\ldots,\gamma_l)$ such that $w_j \in \gamma_j$ for all $j \in [l]$.   
\end{lemma}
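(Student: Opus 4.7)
The plan is to establish the lemma in two parts: first the running time, then the inductive claim that $w_j \in \gamma_j$ for all $j \in [l]$, where $(w_1,\ldots,w_l)$ is the hypothesized curve satisfying constraints~2 and~3.

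For the running time, I would just sum the per-step costs recorded during the description of the forward construction. The constraint~1 check contributes $O(mn\log m)$. Computing $\gamma_1$ and $\gamma_l$ each costs $O(mn2^{O(d)})$, dominated by the constraint~3(a)/3(b) verification. For each intermediate $k \in [2,l-1]$, Case~1 costs $O(mn2^{O(d)})$ (again dominated by the 3(b) check over all $i \in [n]$ and $a \in [m-1]$) and Case~2 costs $O(n2^{O(d)})$. Summing over $k \in [l]$ yields the claimed $O(mn\log m + lmn2^{O(d)})$ bound.

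For correctness, I would proceed by induction on $k$. The base case $k=1$ is immediate: constraint~2(a) gives $w_1 \in s_1 \cap \mathcal{A}[1]$; constraint~2(b) supplies a point $x \in w_1w_2 \cap c_{1,2}$ with $w_1x \cap c_{1,1} \neq \emptyset$, which is exactly the definition of $w_1 \in F(c_{1,1},c_{1,2})$; and the $\mathcal{A}[1]$-precondition we test is constraint~3(a). For the inductive step with $\mathcal{A}[k] \neq \mathrm{null}$, membership of $w_k$ in $F(c_{k-1,2},\gamma_{k-1})$ follows from the inductive hypothesis $w_{k-1} \in \gamma_{k-1}$ together with constraint~2(b) applied to $w_{k-1}w_k$, and the remaining factors of $\gamma_k$ are handled as in the base case. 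For the constraint~3(b) check, the points $p_j$ guaranteed by the constraint lie in every ball $x + B_{\delta_i + 2\sqrt{d}\eps\delta_i}$ over vertices $x$ of $\mathcal{A}[j]$, hence in each $I_{i,j}$; since they also respect $\leq_{\tau_i}$, they persist through every trimming step, so no $I_{i,j}$ becomes empty.

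The main obstacle is the case $\mathcal{A}[k] = \mathrm{null}$, where I must verify each halfspace/cylinder restriction. Membership in $H_{i,a_i}$ is immediate from constraint~3(c)(i). For Case~2.1, constraint~3(c)(ii) says $w_{k-1}\!\downarrow\!\tau_{i,a_i} \leq_{\tau_i} w_k\!\downarrow\!\tau_{i,a_i}$, which is equivalent to $(w_k - w_{k-1})\cdot(v_{i,a_i+1}-v_{i,a_i}) \geq 0$, i.e., $w_k - w_{k-1} \in \Pi_{i,a_i}$; picking any $z \in w_{k-1}w_k \cap c_{k-1,2}$ (supplied by constraint~2(b)), the vector $w_k - z$ is a nonnegative multiple of $w_k - w_{k-1}$ and thus lies in the halfspace $\Pi_{i,a_i}$ through the origin, giving $w_k = z + (w_k - z) \in c_{k-1,2} \oplus \Pi_{i,a_i}$. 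Case~2.2 is cleaner: constraint~3(c)(iii) forces $p_{i,a_i} \leq_{\tau_i} w_k\!\downarrow\!\tau_{i,a_i}$, whence $(w_k - p_{i,a_i})\cdot(v_{i,a_i+1}-v_{i,a_i}) \geq 0$ and so $w_k \in p_{i,a_i} + \Pi_{i,a_i}$. Cases~2.3 and~2.4 follow symmetrically using $w_{k+1}$ and $q_{i,a_i}$ in place of $w_{k-1}$ and $p_{i,a_i}$, together with $-\Pi_{i,a_i}$. Combining the memberships gives $w_k \in \gamma_k$, completing the induction; the case $k=l$ is handled identically, omitting the factor involving $(c_{l,1},c_{l,2})$.
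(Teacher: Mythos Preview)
Your proposal is correct and follows essentially the same approach as the paper, which does not give a separate proof of this lemma but embeds all the justifications inline in the description of the forward construction in Section~\ref{sec:forward}: each relaxation of constraints~2 and~3 is explained there as $\gamma_k$ is defined, and the per-step running times are recorded case by case. Your write-up simply gathers those scattered pieces into a clean induction on $k$ and sums the costs; the key steps (using $w_{k-1}\in\gamma_{k-1}$ with constraint~2(b) to obtain $w_k\in F(c_{k-1,2},\gamma_{k-1})$, and the halfspace/Minkowski-sum reasoning for Cases~2.1--2.4) coincide with the paper's inline arguments.
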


\subsection{Backward extraction}
\label{sec:backward}
 
Suppose that the forward construction succeeds with the output $\gamma_1,\ldots,\gamma_l$.  The backward extraction works as follows.  Set $u_l$ to be any point in $\gamma_l$.  For $j = l-1, \ldots, 1$ in this order, set $u_j$ to be any point in $F(c_{j,2},u_{j+1}) \cap \gamma_j$ in $O(2^{O(d)})$ time.  Let $\sigma = (u_1,\ldots,u_l)$ denote the extraction output.  

The extraction succeeds in $O(l2^{O(d)})$ time if $F(c_{j,2},u_{j+1}) \cap \gamma_j$ is not empty for every $j$.  No matter which scenario was applicable in computing $\gamma_{j+1}$ in the forward phase, we always have $\gamma_{j+1} \subseteq F(c_{j,2},\gamma_j)$.  It follows that $u_{j+1} \in F(c_{j,2},\gamma_j)$, meaning that there exists a point $q \in \gamma_j$ such that $qu_{j+1} \cap c_{j,2} \not= \emptyset$.  This point $q$ belongs to $F(c_{j,2}, u_{j+1})$, which implies that $F(c_{j,2},u_{j+1}) \cap \gamma_j \not= \emptyset$.

Due to the relaxation of constraint~2 in the forward phase, we cannot ensure that $u_ju_{j+1}$ intersects $c_{j,1}$, but we can bound $d(u_ju_{j+1},c_{j,1})$ and $d(u_ju_{j+1},c_{j,2})$.  We can also bound the Fr\'{e}chet distance between an edge of $\tau_i$ and the subcurve of $\sigma$ matched to it according to $\pi_i$.  

\begin{lemma}\label{Lemma: Effect_Rel_set}
		For all $j \in [l-1]$, there exist points $p,q \in u_ju_{j+1}$ such that $p \leq_{\sigma} q$ and both $d(p,c_{j,1})$ and $d(q,c_{j,2})$ are at most $\sqrt{d}\eps\delta_{\max}$.
\end{lemma}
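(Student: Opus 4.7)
The plan is to read off the required points $p$ and $q$ directly from the data structures built by the forward and backward phases. First I observe that, by inspection of the definitions in Section~\ref{sec:forward}, every $\gamma_j$ with $j \in [l-1]$ is contained in $F(c_{j,1},c_{j,2})$: this factor appears in the $k=1$ case and in both branches (null and non-null $\mathcal{A}[k]$) for $k \in [2,l-1]$. Moreover, the backward extraction selects $u_j \in F(c_{j,2},u_{j+1}) \cap \gamma_j$. The membership $u_j \in F(c_{j,2},u_{j+1})$ yields a point $q \in u_ju_{j+1} \cap c_{j,2}$, and $u_j \in F(c_{j,1},c_{j,2})$ yields a point $y \in c_{j,2}$ together with a point $z \in u_jy \cap c_{j,1}$.

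Since $y$ and $q$ both lie in the grid cell $c_{j,2} \in \grid_1$, and the cells of $\grid_1$ have side length at most $\eps\delta_{\max}/l$, I get $d(y,q) \leq \sqrt{d}\,\eps\delta_{\max}/l$.

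Finally, I transplant $z$ from $u_jy$ onto $u_jq$. Writing $z = (1-t)u_j + ty$ for some $t \in [0,1]$ and setting $p = (1-t)u_j + tq$, linearity gives $d(p,z) = t\,d(y,q) \leq \sqrt{d}\,\eps\delta_{\max}/l \leq \sqrt{d}\,\eps\delta_{\max}$. Because $p$ lies on $u_jq \subseteq u_ju_{j+1}$ between $u_j$ and $q$, we have $p \leq_\sigma q$, and $d(p,c_{j,1}) \leq d(p,z) \leq \sqrt{d}\,\eps\delta_{\max}$. Combined with $d(q,c_{j,2}) = 0$, this proves the lemma. The only subtlety, and in my view the single point deserving care, is that the witness $y$ certifying $u_j \in F(c_{j,1},c_{j,2})$ need not coincide with the endpoint $q$ supplied by the backward step; the convex-combination transplant absorbs that discrepancy at the cost of exactly one cell diameter, which is the source of the $\sqrt{d}\,\eps\delta_{\max}$ slack in the statement.
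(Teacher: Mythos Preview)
Your proof is correct and follows essentially the same argument as the paper's own proof: obtain $q \in u_ju_{j+1}\cap c_{j,2}$ from the backward step, obtain $y\in c_{j,2}$ and a witness point in $u_jy\cap c_{j,1}$ from $\gamma_j\subseteq F(c_{j,1},c_{j,2})$, then linearly interpolate from the segment $u_jy$ to $u_jq$ to carry that witness onto $u_ju_{j+1}$ at the cost of one cell diameter. The only cosmetic differences are that you track the slightly sharper cell diameter $\sqrt{d}\,\eps\delta_{\max}/l$ before relaxing it, and you spell out the convex-combination formula where the paper simply invokes ``linear interpolation.''
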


\begin{lemma}\label{Lemma: ver_sig_match}
	Take any $i \in [n]$ and any $a \in [m-1]$.  Suppose that $[k_1,k_2] = \{ j : \pi_i(a) < j \leq \pi_i(a+1)\}$ is non-empty.  There exist points $p,q \in \tau_{i,a}$ such that $d_F(pq,\sigma[u_{k_1},u_{k_2}]) \leq \delta_i + 4\sqrt{d}\eps\delta_{\max}$.
\end{lemma}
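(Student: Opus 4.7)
The plan is to construct, for every index $k \in [k_1,k_2]$, a point $r_k \in \tau_{i,a}$ lying close to $u_k$, arranged so that $r_{k_1} \leq_{\tau_i} r_{k_1+1} \leq_{\tau_i} \cdots \leq_{\tau_i} r_{k_2}$.  Setting $p = r_{k_1}$ and $q = r_{k_2}$, the segment $pq \subseteq \tau_{i,a}$ is the concatenation of the $r_kr_{k+1}$'s, so matching each $u_k$ with $r_k$ and interpolating linearly between $u_ku_{k+1}$ and $r_kr_{k+1}$ yields a \Frechet matching of width $\max_k d(u_k,r_k)$ between $\sigma[u_{k_1},u_{k_2}]$ and $pq$.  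The remaining work is to choose the $r_k$'s, prove their monotonicity along $\tau_i$, and bound $\max_k d(u_k,r_k)$ by $\delta_i + 4\sqrt{d}\eps\delta_{\max}$.

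The definition of $r_k$ splits on $\mathcal{A}[k]$.  If $\mathcal{A}[k] \neq \mathrm{null}$, I take $r_k = p_k$, the point in $\tau_{i,a}$ produced by the check of constraint~3(b) in Case~1 of the forward construction; since $u_k \in \gamma_k \subseteq \mathcal{A}[k]$ and $\mathcal{A}[k]$ is a cell of $\grid_2$ of diameter $\sqrt{d}\eps\delta_{\max}$, this yields $d(u_k,r_k) \leq \delta_i + 3\sqrt{d}\eps\delta_{\max}$.  If $\mathcal{A}[k] = \mathrm{null}$, then $k \in [2,l-1]$ and $i \in N_k$ with $a_i = a$, so the initialization of $\gamma_k$ in Case~2 places $u_k \in H_{i,a}$; I set $r_k = u_k \!\downarrow\! \tau_{i,a}$, which lies on $\tau_{i,a}$ and satisfies $d(u_k,r_k) \leq \delta_i + \sqrt{d}\eps\delta_{\max}$.

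Monotonicity will follow from a case analysis on the pair $(\mathcal{A}[k],\mathcal{A}[k+1])$.  When both are non-null, the trimming of the intervals $I_{i,j}$ performed during the check of constraint~3(b) lets me pick the $p_k$'s in order along $\tau_i$, so $r_k \leq_{\tau_i} r_{k+1}$.  When $\mathcal{A}[k] = \mathrm{null}$ and $\mathcal{A}[k+1] \neq \mathrm{null}$, Case~2.4 of the forward construction gives $u_k \in q_{i,a} - \Pi_{i,a}$, so the along-axis coordinate (with respect to $v_{i,a+1}-v_{i,a}$) of $r_k = u_k \!\downarrow\! \tau_{i,a}$ is at most that of $q_{i,a}$; since $r_{k+1} = p_{k+1}$ lies in $\mathcal{A}[k+1] \oplus B_{\delta_i + 3\sqrt{d}\eps\delta_i}$ and $q_{i,a}$ is the minimum of $\tau_{i,a}$ in that enlargement, I conclude $r_k \leq_{\tau_i} q_{i,a} \leq_{\tau_i} r_{k+1}$.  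The symmetric case ($\mathcal{A}[k] \neq \mathrm{null}$, $\mathcal{A}[k+1] = \mathrm{null}$) will use Case~2.2 applied to $k+1$ analogously.

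The main obstacle is the case when both $\mathcal{A}[k]$ and $\mathcal{A}[k+1]$ are null, because the algorithm enforces only relaxed versions of constraints~3(c)(ii) and~3(c)(iv).  Case~2.3 gives $u_k \in c_{k,2} \oplus (-\Pi_{i,a})$ and Case~2.1 applied to $k+1$ gives $u_{k+1} \in c_{k,2} \oplus \Pi_{i,a}$; picking a witnessing point in $c_{k,2}$ for each containment, the two witnesses have along-axis coordinates that differ by at most the diameter $\sqrt{d}\eps\delta_{i}/l$ of the $\grid_1$ cell $c_{k,2}$, so the along-axis coordinate of $r_{k+1}$ is at least that of $r_k$ minus $\sqrt{d}\eps\delta_{i}/l$.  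To restore monotonicity I will post-process by resetting each $r_k$ to whichever of $r_k$ and $\max_{k' < k} r_{k'}$ is later along $\tau_{i,a}$; since at most $l$ pairwise slips can accumulate and each slip is bounded by $\sqrt{d}\eps\delta_{i}/l$, the cumulative displacement of any $r_k$ along $\tau_{i,a}$ is at most $\sqrt{d}\eps\delta_{\max}$, which adds at most $\sqrt{d}\eps\delta_{\max}$ to $d(u_k,r_k)$.  Combining with the per-case bounds gives $\max_k d(u_k,r_k) \leq \delta_i + 4\sqrt{d}\eps\delta_{\max}$ and hence the desired \Frechet distance bound.
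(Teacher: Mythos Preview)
Your proposal is correct and follows essentially the same approach as the paper's proof: define a target point $r_k$ on $\tau_{i,a}$ for each $u_k$ (the paper calls it $y_k$), establish almost-monotonicity with a per-step slip of at most $\sqrt{d}\eps\delta_{\max}/l$ coming only from the null--null case via the relaxed checks through $c_{k,2}$, and then take a running maximum (the paper's $g(u_k)$) to restore monotonicity while accumulating at most $\sqrt{d}\eps\delta_{\max}$ of extra displacement. One small slip: the $\grid_1$ cell $c_{k,2}$ need not be associated with the index $i$, so its diameter is $\sqrt{d}\eps\delta_{i'}/l \leq \sqrt{d}\eps\delta_{\max}/l$ for some $i'$, not $\sqrt{d}\eps\delta_i/l$; this does not affect your final bound.
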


	
\begin{lemma}
	For all $i \in [n]$, $d_F(\sigma, \tau_i)\le \delta_i+4\sqrt{d}\epsilon\delta_{\max}$.
\end{lemma}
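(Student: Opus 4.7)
The plan is to construct an explicit matching $g$ from $\tau_i$ to $\sigma$ that achieves $d_g(\tau_i,\sigma) \le \delta_i + 4\sqrt{d}\eps\delta_{\max}$, built piecewise along the edges of $\tau_i$ and then stitched at the vertices. The endpoints are pinned by Constraint~3(a), which yields $d(v_{i,1}, u_1) \le \delta_i + 2\sqrt{d}\eps\delta_{\max}$ and $d(v_{i,m}, u_l) \le \delta_i + 2\sqrt{d}\eps\delta_{\max}$, so $g$ begins by pairing $v_{i,1}$ with $u_1$ and ends by pairing $v_{i,m}$ with $u_l$.

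For each edge $\tau_{i,a}$ I split into two cases according to $\pi_i$. When $\pi_i(a) < \pi_i(a+1)$ (the \emph{vertex-rich} case), let $[k_1,k_2] = \{j : \pi_i(a) < j \le \pi_i(a+1)\}$. Lemma~\ref{Lemma: ver_sig_match} directly supplies points $p_a \le_{\tau_i} q_a$ on $\tau_{i,a}$ together with a matching of $\tau_i[p_a, q_a]$ onto $\sigma[u_{k_1}, u_{k_2}]$ of cost at most $\delta_i + 4\sqrt{d}\eps\delta_{\max}$. The prefix $\tau_i[v_{i,a}, p_a]$ is linearly matched against the sub-edge of $u_{k_1-1}u_{k_1}$ that starts at $\xi$, the image of $v_{i,a}$ inherited from the previous edge's matching, and ends at $u_{k_1}$; by convexity, the maximum distance along this pairing is $\max\{d(v_{i,a}, \xi), d(p_a, u_{k_1})\}$, which is within the target bound by induction and by the distance supplied by Lemma~\ref{Lemma: ver_sig_match}. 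The suffix $\tau_i[q_a, v_{i,a+1}]$ is handled symmetrically.

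When $\pi_i(a) = \pi_i(a+1) = j$ (the \emph{thin} case), all of $\tau_{i,a}$ must be matched to a subsegment of $u_j u_{j+1}$ (or to $u_1$ when $j = 0$). I would invoke Constraint~1 on the maximal interval $\pi_i^{-1}(j) = [a_0, b_0]$ containing both $a$ and $a+1$: for any vertex $x$ of $c_{j,1}$ and any vertex $y$ of $c_{j,2}$, some subsegment of $xy$ is within Fr\'{e}chet distance $\delta_i + 2\sqrt{d}\eps\delta_i$ of $\tau_i[v_{i,a_0}, v_{i,b_0}]$. Lemma~\ref{Lemma: Effect_Rel_set} places points $p, q \in u_j u_{j+1}$ within $\sqrt{d}\eps\delta_{\max}$ of $c_{j,1}$ and $c_{j,2}$ respectively; choosing $x, y$ to be the cell vertices closest to $p, q$ and applying the triangle inequality transfers this matching onto $pq \subseteq u_j u_{j+1}$ at an extra cost of at most $2\sqrt{d}\eps\delta_{\max}$. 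Restricting the resulting matching to $\tau_{i,a}$ yields the required bound. The subcase $j = 0$ is easier: $v_{i,a}$ and $v_{i,a+1}$ are both close to $u_1$ by propagating Constraint~3(a) through the preceding thin edges, and the triangle inequality along the straight segment $\tau_{i,a}$ suffices.

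Finally, the edge-level matchings are glued into a global matching by checking that at each interior vertex $v_{i,a}$ the end of the matching on $\tau_{i,a-1}$ and the start of the matching on $\tau_{i,a}$ land on the same point of $\sigma$, namely a point on $u_{\pi_i(a)}u_{\pi_i(a)+1}$ (or $u_1$ when $\pi_i(a) = 0$). The main obstacle is the bookkeeping in the thin case: I must verify that the $2\sqrt{d}\eps\delta_i$ slack from Constraint~1 together with two $\sqrt{d}\eps\delta_{\max}$ slacks from Lemma~\ref{Lemma: Effect_Rel_set} fit within the $4\sqrt{d}\eps\delta_{\max}$ budget via $\delta_i \le \delta_{\max}$, and that the chosen correspondence between $xy$ and $pq$ preserves monotonicity of the matching along $u_j u_{j+1}$ when several consecutive thin edges share the same $j$.
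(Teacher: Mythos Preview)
Your approach is the paper's: use Lemma~\ref{Lemma: ver_sig_match} for the ``vertex-rich'' edges and Constraint~1 together with Lemma~\ref{Lemma: Effect_Rel_set} (transferring the matching from the reference segment $xy$ to an approximate segment $p'q'\subseteq u_ju_{j+1}$ by linear interpolation) for the blocks $\pi_i^{-1}(j)$, with the same arithmetic $(\delta_i+2\sqrt d\,\eps\delta_i)+2\sqrt d\,\eps\delta_{\max}\le \delta_i+4\sqrt d\,\eps\delta_{\max}$.

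The only substantive difference is organization, and yours leaves a small hole. The paper first assigns $f_i^{-1}$ at every vertex of $\sigma$ (via Lemma~\ref{Lemma: ver_sig_match}) and $f_i$ at every vertex of $\tau_i$ by applying Constraint~1 to \emph{each} nonempty block $\pi_i^{-1}(j)$, singleton or not; it then observes these two families of assignments cannot conflict in order, and interpolates once. Your edge-by-edge walk only invokes Constraint~1 in the thin case $\pi_i(a)=\pi_i(a+1)$, so when $\pi_i^{-1}(j)=\{a\}$ is a singleton both neighbouring edges are vertex-rich and your ``inherited image'' $\xi$ of $v_{i,a}$ is never actually produced by any step. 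The fix is precisely the paper's: define the image of every $v_{i,a}$ with $\pi_i(a)=j\ge 1$ by the Constraint~1 matching on $\pi_i^{-1}(j)$ first, and only then fill in the vertex-rich edges by interpolation. Your treatment of the $j=0$ block (``propagating Constraint~3(a) through the preceding thin edges'') is also not a real argument, since there is no Constraint~1 for $j=0$; the paper simply sets $f_i(v_{i,a})=u_1$ for all $a\in\pi_i^{-1}(0)$ and relies on Constraint~3(a).
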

\begin{proof}
To prove the lemma, we define a matching $f_i$ from $\tau_i$ to $\sigma$ as follows.  First, we define $f_i^{-1}$ at the vertices of $\sigma$.  Take any $a \in [m-1]$.  If~$\{j : \pi_i(a) < j \leq \pi_i(a+1)\}$ is some non-empty $[k_1,k_2]$, by Lemma~\ref{Lemma: ver_sig_match}, there is a Fr\'{e}chet matching $g_{i,a}$ from $\sigma[u_{k_1},u_{k_2}]$ to a segment $pq \subseteq \tau_{i,a}$ such that $d_{g_{i,a}}(pq,\sigma[u_{k_1},u_{k_2}]) \leq \delta_i + 4\sqrt{d}\eps\delta_{\max}$; we define $f_i^{-1}(u_j) = g_{i,a}(u_j)$ for all $j \in [k_1,k_2]$.  Repeating the above for all $a \in [m-1]$ makes $f_i^{-1}(u_j) \leq_{\tau_i} f_i^{-1}(u_k) \iff j \leq k$.  Moreover, for every $j \in [l]$ and every $x \in f_i^{-1}(u_j)$, $d(u_j,x) \leq \delta_i + 4\sqrt{d}\eps\delta_{\max}$.  Any unprocessed $u_j$ must satisfy $j > \pi_i(m)$; therefore, $u_j$ should be matched to $v_{i,m}$, and we define $f_i^{-1}(u_j) = \{v_{i,m}\}$.
		
Next, we define $f_i$ at the vertices of $\tau_i$.  First, define $f_i(v_{i,a}) = u_1$ for all $a \in \pi_i^{-1}(0)$.  Take any $j \in [l-1]$ such that $\pi_i^{-1}(j)$ is some non-empty $[a,b]$.  Let $x$ and $y$ be two vertices of $c_{j,1}$ and $c_{j,2}$, respectively.  By constraint~1, there exist $pq \subseteq xy$ such that $d_F(pq,\tau_i[v_{i,a},v_{i,b}]) \leq \delta_i + 2\sqrt{d}\eps\delta_i$.  Let $h_i$ be a Fr\'{e}chet matching from $\tau_i[v_{i,a},v_{i,b}]$ to $pq$.
		
By Lemma~\ref{Lemma: Effect_Rel_set}, there exist points $p',q' \in u_ju_{j+1}$ such that $p' \leq_{\sigma} q'$ and both $d(p',c_{j,1})$ and $d(q',c_{j,2})$ are at most $\sqrt{d}\eps\delta_{\max}$.  It follows that $d(x,p') \leq 2\sqrt{d}\eps\delta_{\max}$ and $d(y,q') \leq 2\sqrt{d}\eps\delta_{\max}$.  As $pq \subseteq xy$, we can take the linear interpolation $h$ from the oriented segment $xy$ to the oriented segment $p'q'$, which ensures that $d(z,h(z)) \leq 2\sqrt{d}\eps\delta_{\max}$ for every point $z \in pq$.

For every $c \in \pi_i^{-1}(j)$, we define $f_i(v_{i,c}) = h \circ h_i(v_{i,c})$.  The Fr\'{e}chet matching $h_i$ and the linear interpolation $h$ guarantee that $f_i(v_{i,c}) \leq_{\sigma} f_i(v_{i,c'}) \iff c \leq c'$.  According to the previous discussion, for every $c \in \pi_i^{-1}(j)$, $d(v_{i,c},f_i(v_{i,c})) \leq \delta_i + 4\sqrt{d}\eps\delta_{\max}$.  Repeating the above for all $j \in [l-1]$ such that $\pi_i^{-1}(j) \not= \emptyset$ defines $f_i$ at all vertices of $\tau_i$.

Thanks to the property of $\pi_i$, the definitions of $f_i^{-1}$ at the vertices of $\sigma$ and the definitions of $f_i$ at the vertices of $\tau_i$ do not cause any conflict or order violation along $\sigma$ and $\tau_i$.
	
We have taken care of the vertices of $\sigma$ and $\tau_i$.  We use linear interpolation to match all other points between $\tau_i$ and $\sigma$;  it also maintains the distance bound of  $\delta_i+4\sqrt{d}\eps\delta_{\max}$.
\end{proof}

\cancel{
\begin{lemma}
	\label{kem:simplify}
	Let $l$ be a fixed positive integer.   Let $T=\{\tau_1,...,\tau_n\}$ be a set of $n$ curves, each containing $m$ curves.  Let $\Delta = \{\delta_1,...,\delta_n\}$ be a set of prescribed error thresholds.  Let $\Gamma$ denote the number of configurations.  There is an approximation algorithm for $Q(T,\Delta,l)$ that can return in $O(\Gamma \cdot (mn\log m + l\eps^{-d^2/2}n + ln^{(d-1)(d+2)/4}))$ time a null output or a curve $\sigma$ of $l$ vertices.  	If a null output is returned, then $Q(T,\Delta,l)$ has no solution; otherwise, $d_F(\sigma,\tau_i) \leq \delta_i + \eps\delta_{\max}$ for all $i \in [n]$;
\end{lemma}
\begin{proof}
	The correctness of the algorithm follows from the previous discussion.  It remains to analyze the running time.  Let $\Psi$ be a configuration.  As described in Section~\ref{}, the time to check constraint~1 is $O(mn\log m)$.  By Lemma~\ref{lem:forward}, the computation of $\gamma_1,\ldots,\gamma_l$ takes $O(l\eps^{-d^2/2} n + ln^{(d-1)(d+2)/4})$ time.  By Lemma~\ref{lem:backward}, the backward extraction takes $O(l)$ time.  Therefore, processing one configuration takes $O(mn\log m + l\eps^{-d^2/2}n + ln^{(d-1)(d+2)/4})$ time.  Multiply this time bound with the number of configurations gives the total running time.
\end{proof}
}
	
\subsection{Accelerating the algorithm}
\label{sec:accelerate}

Observe that a configuration only needs to guarantee that the endpoints of the segments $\gamma_1,\ldots,\gamma_l$ can be produced by the components $\mathcal{C}$, $\mathcal{S}$, $\mathcal{A}$, and other linear constraints induced by the input curves.  As there are only $2l$ segment endpoints, only $O(l)$ input curves are involved in defining $\gamma_1,\ldots,\gamma_l$.  In Appendix~\ref{app:reduce}, we show that $5l$ are sufficient.  

We do not know which $5l$ input curves to sample, so we enumerate all subsets of $5l$ input curves.  For each subset, the number of configurations drops to $O\bigl(m^{O(\ell^2)} \cdot (\ell/\eps)^{O(d\ell)}\bigr)$, and the running time of the forward construction reduces to $O(ml\log m + ml^22^{O(d)})$.   For each candidate output curve $\sigma$, we need to verify whether $\sigma$ works for the remaining $n-5l$ input curves that are not used for constructing $\sigma$.  We check by computing $d_F(\sigma,\tau_i)$ and comparing it with $\delta_i + 4\sqrt{d}\eps\delta_{\max}$ for all $i \in [n]$.  The time needed for this check is $O(lmn\log (ml))$.  To go from the error bounds $\delta_i + 4\sqrt{d}\eps\delta_{\max}$ to $\delta_i + \eps\delta_{\max}$, we need to reduce $\eps$ to $\eps/(4\sqrt{d})$ in the definition of $\grid_1$ and $\grid_2$.  Finally, we repeat the above for each $l \in [\ell]$ in order to solve $Q(T,\Delta,\ell)$ approximately.

\begin{theorem}
	\label{thm:simplify}
	Let $T = \{\tau_1,\ldots,\tau_n\}$ be $n$ polygonal curves of $m$ vertices each in $\real^d$.  Let $\Delta = \{\delta_1,\ldots,\delta_n\}$ be $n$ error thresholds.  Let $\ell \leq m$ be a positive integer.  Let $\eps$ be a fixed value in $(0,1)$.  There is an algorithm that returns a null output or a polygonal curve $\sigma$ of at most $\ell$ vertices such that $d_F(\sigma,\tau_i) \leq \delta_i + \eps\delta_{\max}$ for $i \in [n]$.  If the output is null, there is no curve $\sigma$ of at most $\ell$ vertices such that $d_F(\sigma,\tau_i) \leq \delta_i$ for $i \in [n]$.  The running time of the algorithm is $\tilde{O}\bigl(n^{O(\ell)} \cdot m^{O(\ell^2)} \cdot (d\ell/\eps)^{O(d\ell)}\bigr)$.
\end{theorem}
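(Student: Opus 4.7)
The plan is to assemble the components developed in Sections~\ref{sec:discrete}--\ref{sec:accelerate} into a single enumeration procedure. For each candidate target size $l \in [\ell]$, I would enumerate all $\binom{n}{5l}$ subsets $T' \subseteq T$ of size $5l$---by Appendix~\ref{app:reduce}, if any valid simplification exists, some such $T'$ hosts a certifying configuration. For each $T'$, I enumerate every configuration $\Psi_l$ whose components $\mathcal{P}, \mathcal{C}, \mathcal{S}, \mathcal{A}$ are built from $T'$ alone; on each $\Psi_l$, I would run the forward construction (Lemma~\ref{lem:forward}) and, on success, the backward extraction of Section~\ref{sec:backward} to obtain a candidate $\sigma$. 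I would then explicitly verify $\sigma$ against every $\tau_i$ for $i \in [n]$ using Alt--Godau~\cite{alt1995computing}, accepting iff $d_F(\sigma,\tau_i) \leq \delta_i + \eps\delta_{\max}$ for all $i$. The algorithm returns the first accepted $\sigma$; if none is found, it returns null.

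One calibration is needed before the enumeration begins: substitute $\eps' = \eps/(4\sqrt{d})$ for $\eps$ in the definitions of $\grid_1$, $\grid_2$, and $\mathcal{L}$. This converts the $\delta_i + 4\sqrt{d}\eps'\delta_{\max}$ guarantee of the lemma immediately preceding the theorem into the target $\delta_i + \eps\delta_{\max}$, at the expense of a $d^{O(d\ell)}$ factor in the configuration count that is folded into $(d\ell/\eps)^{O(d\ell)}$.

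Correctness then splits in two directions. If $Q(T,\Delta,\ell)$ admits a solution, Lemma~\ref{Lemma: valid_cons_set} furnishes a configuration $\Psi_l$ for some $l \leq \ell$ satisfying Constraints~1--3; Appendix~\ref{app:reduce} ensures this configuration is reached via some choice of $T'$; Lemma~\ref{lem:forward} and the backward extraction then manufacture a $\sigma$ that passes the final $n$-way verification by the preceding lemma. Conversely, a null output is sound because every returned curve has been explicitly certified against every input curve, so no false positive is possible.

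The only real obstacle is the running-time bookkeeping. The outer iteration over $l$ contributes a factor $\ell$; the subset enumeration contributes $\binom{n}{5\ell} = n^{O(\ell)}$; the $\mathcal{P}$ component yields $m^{O(\ell^2)}$ ordered partitions over $5\ell$ curves (each of the $5\ell$ functions $\pi_i$ has $\binom{m}{l-1} = m^{O(\ell)}$ choices); and the $(\mathcal{C},\mathcal{S},\mathcal{A})$ components drawn from the sparsified $\grid_1, \grid_2, \mathcal{L}$ contribute $(d\ell/\eps)^{O(d\ell)}$ after absorbing the $d^{O(d\ell)}$ calibration factor. Per configuration, the forward and backward phases cost $\tilde{O}(m\ell^2 \cdot 2^{O(d)})$ and the $n$-way certification costs $\tilde{O}(\ell m n)$, both absorbed into the $\tilde{O}$. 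Multiplying these factors produces the claimed $\tilde{O}\bigl(n^{O(\ell)} \cdot m^{O(\ell^2)} \cdot (d\ell/\eps)^{O(d\ell)}\bigr)$ bound.
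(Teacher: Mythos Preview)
Your proposal is correct and follows essentially the same approach as the paper: enumerate $l\in[\ell]$, then all size-$5l$ subsets $T'\subseteq T$, then all configurations over $T'$, run the forward/backward phases, and explicitly verify the resulting curve against all $n$ inputs, with the $\eps\to\eps/(4\sqrt d)$ calibration up front. The only point worth tightening in your write-up is the completeness direction: the reason the curve produced from $T'$ passes the $n$-way verification is that Lemma~\ref{Lemma: subset_conf} guarantees the forward construction over $\Psi'_l$ outputs \emph{exactly the same} segments $\gamma_1,\ldots,\gamma_l$ as over the full-$T$ configuration $\Psi_l$, and the backward extraction depends only on these $\gamma_j$'s and the shared $\mathcal{C}$; hence the unnamed lemma preceding Section~\ref{sec:accelerate}, applied to $\Psi_l$ (not $\Psi'_l$), yields $d_F(\sigma,\tau_i)\le\delta_i+4\sqrt d\,\eps\delta_{\max}$ for \emph{all} $i\in[n]$.
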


When $n=1$, we can use Theorem~\ref{thm:simplify}---the version without picking subsets of $5l$ input curves---to approximately minimize both the error and the output size in curve simplification.

\begin{theorem}
	\label{thm:first}
	Let $\tau$ be a polygonal curve of $m$ vertices in $\real^d$.  Let $\delta$ be an error threshold.  Let $\alpha$ and $\eps$ be some fixed values in $(0,1)$.  There is an algorithm that computes a polygonal curve $\sigma$ such that $d_F(\sigma,\tau) \leq (1+\eps)\delta$ and $|\sigma| \leq (1+\alpha)\kappa(\tau,\delta)$.  The running time is $\tilde{O}\bigl((m^{O(1/\alpha)} \cdot (d/(\alpha\eps))^{O(d/\alpha)}\bigr)$.
\end{theorem}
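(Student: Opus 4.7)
The plan is to invoke Theorem~\ref{thm:simplify} (in its $n=1$ version without subset picking, whose configuration count simplifies to $m^{O(\ell)}(d\ell/\eps)^{O(d\ell)}$) as a black box on many short subcurves of $\tau$. Taking $\ell = O(1/\alpha)$ keeps each call within the target time, and partitioning $\tau$ into pieces that the optimum essentially batches lets me assemble a global simplification of size at most $(1+\alpha)\kappa^{*}$, where $\kappa^{*} := \kappa(\tau,\delta)$.

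Concretely, for each $c = 1, 2, \ldots, \lceil 1/\alpha \rceil$ I build a candidate curve $\sigma_c$ by a greedy partition: starting from $a = 1$, I repeatedly use binary search to find the largest $b$ such that $\tau[v_a, v_b]$ admits a simplification of at most $c+1$ vertices with error $\le \delta$, verified by invoking Theorem~\ref{thm:simplify} with $\ell = c+1$ and $\delta_1 = \delta$. To make consecutive pieces glue seamlessly I enforce that the simplification's first and last vertices equal $v_a$ and $v_b$ by restricting the cells $\mathcal{A}[1]$ and $\mathcal{A}[c+1]$ to contain those vertices only. Truncating any valid simplification yields a valid simplification on a shorter prefix, so simplifiability is monotone in $b$ and binary search is justified. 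I additionally test Theorem~\ref{thm:simplify} directly on all of $\tau$ for each $\ell \in [1,\lceil 1/\alpha\rceil+1]$ to cover the case of very small $\kappa^{*}$. The final output is the candidate with the fewest vertices.

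For the approximation bound, let $\sigma^{*}$ be an optimal simplification of size $\kappa^{*}$ and use its \Frechet matching to decompose $\tau$ into $\kappa^{*} - 1$ arcs. Set $c_0 = \max\bigl(1,\, \min(\lfloor \alpha\kappa^{*}\rfloor,\, \lceil 1/\alpha\rceil)\bigr)$; grouping $c_0$ consecutive arcs into a batch and snapping each batch boundary to the nearest vertex of $\tau$ gives a decomposition of $\tau$ into at most $\lceil (\kappa^{*}-1)/c_0\rceil$ subcurves, each simplifiable by $c_0+1$ vertices with error $\le \delta$. A standard exchange argument---the $j$-th greedy split is never earlier than the $j$-th batch boundary, proved by induction using the monotonicity of simplifiability---shows that the greedy for $c = c_0$ uses no more pieces than this batching, so $|\sigma_{c_0}| \le c_0 \lceil (\kappa^{*}-1)/c_0\rceil + 1 \le \kappa^{*} + c_0 \le (1+\alpha)\kappa^{*}$ after inspecting the three subranges of $\kappa^{*}$ (small handled by the direct calls, moderate with $c_0 = \lfloor \alpha\kappa^{*}\rfloor$, large with $c_0 = \lceil 1/\alpha\rceil$). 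Concatenating the per-piece \Frechet matchings at the shared vertices $v_b$ then gives $d_F(\sigma_{c_0},\tau) \le (1+\eps)\delta$ after the standard rescaling of $\eps$ in the discretization.

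The main obstacle is gluing independently produced per-piece simplifications without inflating the \Frechet error; pinning each piece's endpoints to vertices of $\tau$ is the key trick that avoids this, and the exchange argument is what makes the greedy schedule competitive with the canonical batching of the optimum. Summing the cost: $O(1/\alpha)$ values of $c$, at most $\kappa^{*} \le m$ greedy pieces per value, $O(\log m)$ binary-search probes per piece, each invoking Theorem~\ref{thm:simplify} at cost $\tilde{O}\bigl(m^{O(1/\alpha)} (d/(\alpha\eps))^{O(d/\alpha)}\bigr)$, giving the claimed total running time.
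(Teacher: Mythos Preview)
Your overall strategy---greedily decompose $\tau$, invoke the $n=1$ case of Theorem~\ref{thm:simplify} on each piece, and concatenate---is the same as the paper's. But the endpoint-pinning device you introduce to glue pieces creates a genuine gap. Your exchange argument needs each batch $\tau[v_{a_k},v_{a_{k+1}}]$ to admit a $(c_0{+}1)$-vertex simplification at error $\le\delta$ \emph{with first and last vertices equal to $v_{a_k}$ and $v_{a_{k+1}}$}, because that is the predicate your pinned greedy tests. However, forcing a simplification's endpoints onto the curve's endpoints can strictly increase the \Frechet distance. Concretely, for $\tau'=\bigl((0,1),(0,-1),(3,-1)\bigr)$ in $\real^2$ the segment $\bigl((0,0),(3,-1)\bigr)$ has $d_F(\cdot,\tau')=1$, but the pinned segment $\bigl((0,1),(3,-1)\bigr)$ has $d_F(\cdot,\tau')>1$ since no point on it lies within distance $1$ of the corner $(0,-1)$. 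So a batch derived from $\sigma^*$ may fail the pinned test even though an unpinned $\delta$-simplification exists, and the greedy-beats-batching induction does not go through. The same phenomenon undermines your monotonicity justification for binary search: truncating a pinned simplification of $\tau[v_a,v_b]$ at the match of $v_{b'}$ does not yield a simplification ending at $v_{b'}$. And even at the implementation level, ``restricting $\mathcal{A}[1]$ to contain $v_a$ only'' is outside the configuration space of Section~\ref{sec:discrete}; Lemma~\ref{Lemma: valid_cons_set} snaps the first vertex of a solution to a segment of $\mathcal{L}$, which is in general not $v_a$, so the restricted enumeration may return null even when a pinned solution exists.

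The paper sidesteps all of this with a simpler scheme: a single value $\ell=\lceil 1/\alpha\rceil$, a linear scan (so no monotonicity of the algorithm's output is needed---stop at the first null), and non-overlapping pieces. Piece $j$ simplifies $\tau[v_{a_j},v_{b_j}]$ and the next piece starts at $v_{b_j+1}$; the connecting segment between the last vertex of $\sigma_j$ and the first vertex of $\sigma_{j+1}$ is matched to the single edge $v_{b_j}v_{b_j+1}$ by linear interpolation, preserving the $(1+\eps)\delta$ bound without any pinning. The size bound then follows from the observation that whenever the scan returns null on $\tau[v_{a_j},v_{b_j+1}]$, that subcurve already needs more than $\lceil 1/\alpha\rceil$ vertices in \emph{any} $\delta$-simplification, in particular in $\sigma^*$.
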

\begin{proof}
Let $\tau = (v_1,\ldots,v_m)$.  Compute the smallest $i \in [m]$ such that $Q(\{\tau[v_1,v_{i+1}]\},\{\delta\},\lceil 1/\alpha \rceil)$ has no solution.  We have spent $\tilde{O}\bigl(i \cdot m^{O(1/\alpha)} \cdot (d/(\alpha\eps))^{O(d/\alpha)}\bigr)$ time so far and obtain a curve $\sigma_1$ such that $|\sigma_1| \leq 1/\alpha$ and $d_F(\sigma_1,\tau[v_1,v_i]) \leq (1+\eps)\delta$.    Then we repeat the above for $\tau[v_{i+1},v_m]$ to obtain another curve $\sigma_2$.  In the end, we connect $\sigma_1, \sigma_2,\ldots$ to form the output curve $\sigma$.  The distance between $v_i$ and the last endpoint of $\sigma_1$ is at most $(1+\eps)\delta$, and so is the distance between $v_{i+1}$ and the first endpoint of $\sigma_2$; therefore, a linear interpolation shows that we can connect $\sigma_1$ and $\sigma_2$ without violating the $(1+\eps)\delta$ Fr\'{e}chet distance bound.  The same analysis applies to the connections between $\sigma_2$ and $\sigma_3$ and so on.  The greedy process means that we introduce at most one extra edge for every $1/\alpha$ edges in the optimal solution, implying that $|\sigma| \leq (1+\alpha)\kappa(\tau,\delta)$.  
\end{proof}

\section{$\pmb{(k,\ell)}$-median clustering}
\label{sec:median}

Take any $\alpha, \beta \in [1,\infty)$ and $\mu \in (0,1)$.  An algorithm is an \emph{$(\alpha+\eps)$-approximate candidate finder} with success probability at least $1-\mu$ if it computes a set $\Sigma$ of curves, each of $\ell$ vertices, and for every subset $S \subseteq T$ that has size $\frac{1}{\beta}|T|$ or more, it holds with probability at least $1-\mu$ that $\Sigma$ contains an $(\alpha+\eps)$-approximate $(1,\ell)$-median of $S$.  The following result of Buchin~et~al.~\cite{buchin2021approximating} says that a finder can be used for $(k,\ell)$-median clustering.  We use $\mathrm{cost}(T,\Sigma)$ to denote $\sum_{\tau_i \in T} \min_{\sigma \in \Sigma} d_F(\sigma,\tau_i)$.  If $\Sigma$ consists of a single curve $\sigma$, we will just write $\mathrm{cost}(T,\sigma)$ for $\sum_{\tau_i \in T} d_F(\sigma,\tau_i)$.

\begin{lemma}[Theorem 7.2~\cite{buchin2021approximating}]\label{Lemma: Framework}
	One can use an $(\alpha+\varepsilon)$-approximate candidate finder $A$ with success probability at least $1-\mu$ to compute a set $\Sigma$ of $k$ curves, each of $\ell$ vertices, such that $\mathrm{cost}(T, \Sigma) \leq (1+\frac{4k^2}{\beta-2k})(\alpha+\varepsilon) \cdot \mathrm{OPT}$ with probability at least $1-\mu$, where $\mathrm{OPT}$ is the optimal $(k,\ell)$-median cost for $T$.  The running time is $O\bigr(T_A \cdot C_A^{k+1} + n \cdot C_A^{k+2}\bigr)$, where $T_A$ is the running time of $A$ and $C_A$ is the number of curves returned by $A$.
\end{lemma}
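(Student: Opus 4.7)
The plan is to follow the standard superset-sampling framework for bicriteria $k$-median. I would run the finder $A$, boosted by $O(\log(k/\mu))$ independent repetitions, and take the union of its outputs to form a pool $\Sigma$ of $\tilde{O}(C_A)$ curves; then enumerate every $k$-subset $\Sigma' \subseteq \Sigma$, compute $\mathrm{cost}(T,\Sigma')$ by a Fr\'{e}chet-distance test for each of the $n$ input curves against each of the $k$ chosen centers, and return the subset of minimum cost. To hit the precise $T_A \cdot C_A^{k+1} + n \cdot C_A^{k+2}$ running time I would instead run a branching tree of depth $k$ and branching factor $C_A$ that re-invokes $A$ at each internal node on a residual, re-weighted copy of $T$; for the correctness analysis the flat enumeration is easier to reason about.

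For correctness, fix an optimal partition $T = T_1^{\ast} \sqcup \cdots \sqcup T_k^{\ast}$ with centers $c_1^{\ast},\ldots,c_k^{\ast}$ summing to $\mathrm{OPT}$. Call $T_i^{\ast}$ \emph{heavy} if $|T_i^{\ast}| \ge n/\beta$ and \emph{light} otherwise; there are at most $k$ heavy clusters, while the light clusters together hold fewer than $kn/\beta$ curves. By the definition of $A$ applied separately to each heavy $T_i^{\ast}$ (combined with the boosting and a union bound over the at most $k$ heavy clusters), with probability at least $1-\mu$ the pool $\Sigma$ contains, for every heavy $T_i^{\ast}$, a curve $\hat c_i$ with $\mathrm{cost}(T_i^{\ast}, \hat c_i) \le (\alpha+\eps)\,\mathrm{cost}(T_i^{\ast}, c_i^{\ast})$. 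Since the algorithm evaluates every $k$-subset of $\Sigma$, it in particular evaluates the candidate $\hat\Sigma$ obtained by padding $\{\hat c_i : T_i^{\ast}\ \text{heavy}\}$ arbitrarily to size $k$ from $\Sigma$, so the output cost is at most $\mathrm{cost}(T,\hat\Sigma)$.

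The remaining task, and the main obstacle, is bounding $\mathrm{cost}(T,\hat\Sigma)$. Curves in heavy clusters immediately contribute at most $(\alpha+\eps)\,\mathrm{OPT}_{\text{heavy}}$ by the choice of $\hat c_i$. For each of the fewer-than-$kn/\beta$ curves $\tau$ in a light cluster $T_j^{\ast}$, I would reassign $\tau$ to its nearest center in $\hat\Sigma$ and estimate the excess by a triangle-inequality swap through a suitably chosen heavy cluster $T_i^{\ast}$ (typically the largest, of size at least $n(\beta-k)/(k\beta)$):
\[
d_F(\tau, \hat c_i) \;\le\; d_F(\tau, c_j^{\ast}) + d_F(c_j^{\ast}, \tau_0) + d_F(\tau_0, c_i^{\ast}) + d_F(c_i^{\ast}, \hat c_i)
\]
for any $\tau_0 \in T_i^{\ast}$. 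The crux is to average the right-hand side over $\tau_0 \in T_i^{\ast}$: this converts the two middle terms into $|T_i^{\ast}|^{-1}\mathrm{cost}(T_i^{\ast}, c_j^{\ast})$ and $|T_i^{\ast}|^{-1}\mathrm{cost}(T_i^{\ast}, c_i^{\ast})$, both bounded in terms of $\mathrm{OPT}$ after one further triangle inequality. Summing the per-curve overhead across all light curves and invoking the size lower bound on $T_i^{\ast}$ inflates $\mathrm{OPT}$ by at most $\tfrac{4k^2}{\beta-2k}$ after routine arithmetic; combined with the heavy-cluster bound and the factor $(\alpha+\eps)$ from $A$, this yields the stated ratio. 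The averaging step (as opposed to fixing a single representative $\tau_0$) is the key technical move, since only then is the inter-center detour $d_F(c_j^{\ast},\tau_0)$ controlled by $\mathrm{OPT}$ rather than by an individual pairwise distance that could be adversarially large.
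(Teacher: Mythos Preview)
The paper does not prove this lemma; it is quoted as Theorem~7.2 of~\cite{buchin2021approximating}, so there is no in-paper proof to compare against.

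Your flat-enumeration analysis has a genuine gap. The quantity $|T_i^{\ast}|^{-1}\mathrm{cost}(T_i^{\ast},c_j^{\ast})$ that appears after your averaging step is the mean distance from a heavy cluster to the optimal center of a \emph{different} light cluster; no triangle inequality bounds this by anything proportional to $\mathrm{OPT}$. Concretely: take $k=2$, $\beta=5$, and let $T$ be $n-1$ copies of a curve $o$ plus one curve $p$ with $d_F(o,p)=D$. Then $\mathrm{OPT}=0$ via $T_1^{\ast}=\{o,\ldots,o\}$, $T_2^{\ast}=\{p\}$. The finder that always outputs $\Sigma=\{o\}$ satisfies the definition: every $S$ with $|S|\ge n/5\ge 2$ contains at least one copy of $o$ and at most one copy of $p$, and $o$ is an exact $1$-median of such $S$ (when $p\in S$ any center $c$ has cost $\ge d_F(c,o)+d_F(c,p)\ge D$, which $c=o$ attains). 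Your scheme then returns centers drawn from $\{o\}$ at cost $D$, whereas the lemma promises cost at most $17(1+\eps)\cdot 0=0$.

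The branching tree you mention in passing is not a running-time refinement of the flat scheme; it is where the correctness lives. Re-invoking $A$ on a residual input, with points already served by the current partial solution removed or down-weighted, is precisely what makes a formerly light cluster heavy \emph{relative to the residual}, so that the finder's guarantee applies to it at a deeper level of the tree. In the example, once $\hat c_1=o$ is fixed and the copies of $o$ are peeled off, the residual is $\{p\}$ and the finder run on it must return $p$. The proof of the cited theorem hinges on making this peeling argument go through against an unknown optimal partition; a heavy/light split fixed up front cannot achieve that.
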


We will present a $(1+\eps$)-approximate candidate finder such that $T_A$ and $C_A$ are $\tilde{O}\bigl((m/\mu)^{O(\ell)} \cdot  (d\beta\ell/\eps)^{O((d\ell/\eps)\log(1/\mu))}\bigr)$.
Using our finder with $\beta = \Theta(k^2/\eps)$ and adjusting $\eps$ by a constant factor, Lemma~\ref{Lemma: Framework} gives a $(1+\eps)$-approximation algorithm for the $(k,\ell)$-median clustering problem.  

Our finder makes heavy use of the configurations in Section~\ref{sec:discrete}.  Some notations are needed for the exposition.  Let $C(R,l,\alpha)$ be the set of all configurations with respect to a subset $R \subset T$, the target size $l$ of the simplified curve, and the approximation ratio $\alpha \in (0,1)$.    There is a given set of error thresholds for $R$ that we do not specify explicitly in order not to clutter the notation.  
It will be clear from the context what these error thresholds are.

We enhance the finder in~\cite{buchin2021approximating} to enumerate certain configurations and compute the corresponding curves using the two-phase construction.  
Algorithm~\ref{ALG:1_APP} shows this finder.  Since we aim for a probabilistic result, we sample a subset $Y \subset T$ of size $\frac{80\beta\ell}{\eps}\ln\frac{80\ell}{\mu}$ and work with the configurations for all subsets of $Y$ of size $\frac{1}{2\beta}|Y|$.  This will allow us to capture $\Theta(l)$ input curves that induce almost all configurations necessary.  This is formalized in Lemma~\ref{lem:first-level} below.  

\begin{lemma}
	\label{lem:first-level}
Take any subset $S  \subseteq T$ with at least $n/\beta$ curves for any $\beta \geq 1$.  Let $\Delta = \{\delta_1,\ldots,\delta_{|S|}\}$ be a set of error thresholds for $S$ such that $Q(S,\Delta,\ell)$ has a solution.   Relabel elements, if necessary, so that $\delta_1 \leq  \ldots \leq \delta_{|S|}$.  For $i \in \bigl[|S|\bigr]$, let $S_i = \{\tau_i, \tau_{i+1},\ldots \}$ and let $\Delta_i = \{\delta_i,\delta_{i+1},\ldots\}$.   Take any $\alpha, \eps \in (0,1)$ and any $r \in \bigl[\frac{\eps|S|}{5\ell}\bigr]$.  There exists $\mathcal{H}_r \subseteq 2^{S_r}$ such that $|\mathcal{H}_r| = 3l-1$ for some $l \in [\ell]$, every subset in $\mathcal{H}_r$ has $\frac{\eps|S|}{5\ell}$ curves, and for every subset $R \subseteq S_r$, if $\tau_r \in R$ and $R \cap H \not= \emptyset$ for all $H \in \mathcal{H}_r$, there exist a configuration $\Psi = (\mathcal{P},\mathcal{C},\mathcal{S},\mathcal{A}) \in C(\overline{\mathcal{H}}_r \cup R ,h,\alpha)$ and a curve $\sigma = (w_1,\ldots,w_h)$ for some $h \in [l]$, where $\overline{\mathcal{H}}_r = S_r \setminus \bigcup_{H \in \mathcal{H}_r} H$, that satisfy the following properties.
\begin{enumerate}[{\em (i)}]
\item $\Psi$ and $\sigma$ satisfy constraints~1--3 with respect to $\overline{\mathcal{H}}_r \cup R$. 
\item There exists a configuration in $C(R,h,\alpha)$ that shares the components $\mathcal{C}$, $\mathcal{S}$, and $\mathcal{A}$ with $\Psi$.
\end{enumerate}
\end{lemma}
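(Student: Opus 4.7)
The plan is to extract from Lemma~\ref{Lemma: valid_cons_set} a witness configuration for a restricted problem and then build $\mathcal{H}_r$ from the $O(l)$ grid cells appearing in its geometric components. Since the restriction to $S_r$ of any solution for $Q(S,\Delta,\ell)$ is a solution for $Q(S_r,\Delta_r,\ell)$ and $\tau_r$ attains the minimum threshold in $\Delta_r$, I would apply Lemma~\ref{Lemma: valid_cons_set} to $Q(S_r,\Delta_r,\ell)$ to obtain a witness configuration $\Psi^\star = (\mathcal{P}^\star,\mathcal{C}^\star,\mathcal{S}^\star,\mathcal{A}^\star)\in C(S_r,l,\alpha)$ for some $l\in[\ell]$ and a discretized curve $\sigma^\star=(w_1^\star,\ldots,w_l^\star)$ satisfying constraints~1--3 against $S_r$ with Fr\'{e}chet error at most $\delta_i+\sqrt{d}\eps\delta_r$ for every $\tau_i\in S_r$.

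The geometric ingredients of $\Psi^\star$ comprise at most $2(l-1)$ cells of $\grid_1$ (from the pairs in $\mathcal{C}^\star$) and at most $l$ cells of $\grid_2$ (from the non-null entries of $\mathcal{A}^\star$), for at most $3l-2$ distinct cells in total. For every such cell $c$, let $H^\star_c\subseteq S_r$ denote the set of input curves whose vertex balls generate $c$ in the pertinent grid; each $H^\star_c$ is non-empty by the definition of $\Psi^\star$. I would build $\mathcal{H}_r$ by choosing, for every such $c$, one subset $H_c\subseteq H^\star_c$ of exact cardinality $\eps|S|/(5\ell)$, padding with one additional subset so that $|\mathcal{H}_r|=3l-1$; the total number of curves consumed satisfies $(3l-1)\cdot\eps|S|/(5\ell)<3\eps|S|/5<|S_r|$, leaving ample room for these choices.

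For any $R\subseteq S_r$ with $\tau_r\in R$ that meets every $H\in\mathcal{H}_r$, take $\Psi$ to be $\Psi^\star$ with its $\mathcal{P}$-component restricted to $\overline{\mathcal{H}}_r\cup R$, set $\sigma:=\sigma^\star$, and $h:=l$. For property~(i), every cell of $\mathcal{C}^\star$ and $\mathcal{A}^\star$ lies in the grid built from $\overline{\mathcal{H}}_r\cup R$ because the hitting condition places at least one generator of that cell in $R$; the segment family $\mathcal{L}$ is unchanged since $\tau_r$ remains the minimum-$\delta$ curve; and constraints~1--3 are monotone under restriction of the input-curve set, so they are inherited from the $S_r$-verification. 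For property~(ii), the identical $\mathcal{C}^\star,\mathcal{S}^\star,\mathcal{A}^\star$ form the geometric components of a configuration in $C(R,l,\alpha)$ by the same hitting argument (and $\tau_r\in R$ once again handles $\mathcal{L}$), with $\mathcal{P}^\star$ restricted to $R$ supplying the partition component.

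The main obstacle is the case in which some cell $c$ of $\Psi^\star$ has few generators, i.e.\ $|H^\star_c|<\eps|S|/(5\ell)$: then the subset of $H^\star_c$ placed in $\mathcal{H}_r$ must be padded with non-generators, so a hitting $R$ may contain no actual generator of $c$ and both parts break. The resolution, I expect, is to replace $\Psi^\star$ by a nearby configuration using only cells with dense generator sets; concretely, each rare cell can be shifted by $O(1)$ grid steps to a neighbouring cell whose generator set meets the $\eps|S|/(5\ell)$ threshold, and the $\Theta(\sqrt{d}\,\eps\delta_i)$ slack built into constraints~1--3 should absorb this small displacement without violating the approximation guarantees supplied by Lemma~\ref{Lemma: valid_cons_set}.
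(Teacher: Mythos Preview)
Your plan has a fatal gap that your proposed fix does not address, and the paper's proof takes a genuinely different route.

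\textbf{The generator sets are almost always singletons.} Recall that $\grid_1=\bigcup_{i,a} G(v_{i,a}+B_{\delta_i+\sqrt{d}\eps\delta_i},\eps\delta_i/l)$: the cells contributed by $\tau_i$ have side length $\eps\delta_i/l$, which depends on the threshold $\delta_i$. A specific cell $c$ of side $s$ can be a cell of $\tau_i$'s grid only when $\eps\delta_i/l=s$, i.e.\ when $\delta_i$ equals one particular value. Unless many curves share identical thresholds, your set $H^\star_c$ of ``curves whose vertex balls generate $c$'' contains at most one curve. The same phenomenon occurs for $\grid_2$, whose cell side length $\eps\delta_{\max}$ depends on the ambient curve set. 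So the ``rare cell'' case is not the exception but the rule, and your padded $H_c$ generically contains no more than one actual generator.

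\textbf{The shift fix does not work.} Precisely because there is no common grid across curves, ``shift $c$ by $O(1)$ grid steps to a neighbouring cell with a dense generator set'' is not well-defined: stepping within one curve's grid gives you more cells of that same curve, not cells generated by other curves. What you would really need is a different \emph{curve} whose vertex ball covers (roughly) the same witness point; but nothing in your construction guarantees many such curves exist. A further omission: $\grid_2$'s resolution depends on $\delta_{\max}$, so for $\mathcal A$ to be a valid component simultaneously for $C(\overline{\mathcal H}_r\cup R,h,\alpha)$ and for $C(R,h,\alpha)$ you must also force $\max\{\delta_i:\tau_i\in\overline{\mathcal H}_r\cup R\}=\max\{\delta_i:\tau_i\in R\}$, which requires a dedicated subset in $\mathcal H_r$ that you never introduce.

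\textbf{What the paper does instead.} The paper does not try to preserve a fixed configuration. It first shortcuts the solution curve for $S_r$ to $(u_1,\ldots,u_l)$ and defines the $H$'s via \emph{extremal geometric data on this curve}, not via cell membership: for each edge $u_ju_{j+1}$, $H_{2j-1}$ and $H_{2j}$ collect the curves inducing the $\tfrac{\eps|S|}{5\ell}$ smallest (resp.\ largest) matching points on that edge; $H_{2l-2+j}$ collects the curves whose nearest vertex to $w_j$ is among the $\tfrac{\eps|S|}{5\ell}$ closest; and $H_{3l-1}$ collects the curves with the $\tfrac{\eps|S|}{5\ell}$ largest thresholds. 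Given any hitting $R$, the paper performs a \emph{second} shortcutting of $\sigma$ with respect to $\overline{\mathcal H}_r\cup R$ (this is where $h\le l$ arises) and then builds $\Psi$ afresh from the resulting curve. Because $\overline{\mathcal H}_r$ excludes the extreme curves while $R$ contains at least one of them, the new extremal matching points and nearest vertices (which determine $\mathcal C$ and $\mathcal A$) provably come from curves in $R$; the $H_{3l-1}$ subset forces the $\delta_{\max}$ values to agree. Thus $\mathcal C,\mathcal S,\mathcal A$ are generable by $R$, without ever needing a cell to have many generators.
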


Lemma~\ref{lem:first-level} is formulated for different $r \in \bigl[\frac{\eps |S|}{5\ell}\bigr]$ because the sample of curves in Algorithm~\ref{ALG:1_APP} will include some curves among $\tau_1,\ldots, \tau_{\eps |S|/(5\ell)}$ with good probability, but we do not know \emph{a priori} which ones.  By the Chernoff bound, we can sample a small subset $R \subseteq S_r$ that satisfies Lemma~\ref{lem:first-level} with high probability.  The set $R$ acts like the $\Theta(l)$ curves that induce the components of a configuration in  Section~\ref{sec:accelerate}.  Indeed, the proof of Lemma~\ref{lem:first-level} in Appendix~\ref{app:first-level} uses a similar argument.  How should the error thresholds for $R$ be set?  In lines~\ref{alg:34} and~\ref{alg:LU}, Algorithm~\ref{ALG:1_APP} computes a 34-approximate $(1,\ell)$-median $c$, with success probability at least $1 - \mu/4$, to identify an upper bound $U$ and a lower bound $L$ on the error thresholds.  Then, we try all possible sets of integral multiples of $L$ in the range $[L,U]$ in line~\ref{alg:error}.   There are at most $3\eps |S|/5$ curves in $S_r \setminus (\overline{\mathcal{H}}_r \cup R)$ that Lemma~\ref{lem:first-level}(i) says nothing about; the analysis will take care of them separately.  The specific values $n/\beta$ and $\frac{\eps |S|}{5\ell}$ are not critical for the proof of Lemma~\ref{lem:first-level}.  They are chosen to interface with the subsequent analysis of the approximation ratio of Algorithm~\ref{ALG:1_APP}.

\begin{algorithm}[h]
	\SetAlgoLined
	\KwData{$T=\{\tau_1, ..., \tau_n\}, \, \ell \in \mathbb{Z}_{>0}, \, \beta\ge 1, \, \text{and}\; \mu, \varepsilon\in(0, 1)$.}
	\KwResult{A set $\Sigma$ of curves, each of $\ell$ vertices; for every subset $S \subseteq T$ of size $\frac{1}{\beta}|T|$ or more, it holds with probability at least $1-\mu$ that there exists a curve $\sigma \in \Sigma$ such that $\mathrm{cost}(S,\sigma) \leq (1+O(\sqrt{d}\ell\eps)) \cdot \text{optimal $(1,\ell)$-median cost of $S$}$.}
	\Begin{
		$\Sigma \gets \emptyset$\;
		$Y \gets$ a multiset of $\bigl\lceil \frac{80\beta\ell}{\eps}\ln\frac{80\ell}{\mu}\bigr\rceil$ curves that are uniformly, independently sampled from $T$ with replacement\;  \label{alg:sample}
		\For{\emph{each multi-subset} $X \subseteq Y$ \emph{such that} $|X|= |Y|/(2\beta)$}{ \label{alg:enum}
			$c \gets $ $(1,\ell)$-median-34-approximation$(X,\mu/4)$;  \quad\quad \text{/* Algorithm 1 in \cite{buchin2021approximating} */} \label{alg:34} \\
			$\Sigma \gets \Sigma \cup\{c\}$\;
			$U \gets \frac{10\ell}{\eps^2}\mathrm{cost}(X, c)$; $L \gets \frac{\eps\mu}{34|X|}\mathrm{cost}(X, c)$\;   \label{alg:LU}
			\For{\emph{each} $l \in [\ell]$ \emph{and each} $h \in [l]$} { \label{alg:index}
				\For{\emph{each possible simple subset} $W \subseteq X$ \emph{of size at most} $3l+2h$}{ \label{alg:subset}
					\For{\emph{each possible set} $\Delta_W =\{ \delta_i : \tau_i \in W, \, \delta_i = b_iL \,\, \text{\emph{for some $b_i \in \bigl[\lceil U/L\rceil\bigr]$}}\}$}{  \label{alg:error}
						$\Sigma' \gets$ curves produced by the two-phase method on all configurations in $C(W,h,\eps^2)$ with respect to $\Delta_W$; note that the approximation ratio is $\eps^2$ instead of $\eps$\;  \label{alg:curve}
						\If{$h <\ell$}{
							\For{\emph{each} $\sigma \in \Sigma'$}{
								arbitrarily make $\ell-h$ points on $\sigma$ as extra vertices\;	
							}
						}		
						$\Sigma \gets \Sigma \cup \Sigma'$
					}
				}
			}   \label{alg:end-itr}
		}
	}
	\caption{$(1+\varepsilon)$-approximate candidate-finder}\label{ALG:1_APP}
\end{algorithm}

Lemma~\ref{lem:first-level}(ii) does not immediately allow us to use $R$ to approximate an optimal $(1,\ell)$-median curve for an arbitrary subset $S \subseteq T$.  To produce a candidate curve using the two-phase construction, we need to know the curves in $S$ so that we can apply constraints~1--3 (or their relaxations) in the forward construction.  Although we do not know $S$, sampling comes to our rescue.  Lemma~\ref{lem:second-level} below shows that a small sample can capture the configurations and the effects of the two-phase construction for almost the entire $S$.   
We introduce a notation for the output of the forward construction.  Given a subset $Z \subseteq T$ and a configuration $\Psi \in C(Z,l,\alpha)$, the forward construction produces $l$ line segments; we use $\gamma_j(Z,\Psi)$ to denote the $j$-th output segment for $j \in [l]$.

\begin{lemma}
	\label{lem:second-level}
	Take any subset $S \subseteq T$ with at least $n/\beta$ curves for any $\beta \geq 1$.   Let $\Delta$ be a set of error thresholds for $S$ such that $Q(S,\Delta,\ell)$ has a solution.   Assume the notation in Lemma~\ref{lem:first-level}.  Take any $\alpha, \eps \in (0,1)$, any $r \in \bigl[\frac{\eps|S|}{5\ell}\bigr]$, and any subset $R \subseteq S_r$ such that $\tau_r \in R$ and $R \cap H \not= \emptyset$ for all $H \in \mathcal{H}_r$.  Let $\hat{S}_r = \overline{\mathcal{H}}_r \cup R$.  Let $\Psi = (\mathcal{P},\mathcal{C},\mathcal{S},\mathcal{A}) \in C(\hat{S}_r,h,\alpha)$ for some $h \in [\ell]$ be a configuration that satisfies Lemma~\ref{lem:first-level}.  There exist $\mathcal{H}_{\Psi} \subseteq 2^{\hat{S}_r}$ and a configuration $\Psi'' = (\mathcal{P}'',\mathcal{C},\mathcal{S},\mathcal{A}) \in C(\overline{\mathcal{H}}_{\Psi} \cup R,h,\alpha)$, where $\overline{\mathcal{H}}_{\Psi} = \hat{S}_r \setminus \bigcup_{H \in \mathcal{H}_{\Psi}} H$, such that $|\mathcal{H}_{\Psi}| = 2h$, every subset in $\mathcal{H}_{\Psi}$ contains $\frac{\eps|S|}{5\ell}$ curves, and for all subset $R' \subseteq \hat{S}_r$, if $R' \cap H \not= \emptyset$ for all $H \in \mathcal{H}_{\Psi}$, then there exists a configuration $\Psi' = (\mathcal{P}',\mathcal{C},\mathcal{S},\mathcal{A}) \in C(R \cup R',h,\alpha)$ that satisfies the following properties.
	\begin{enumerate}[{\em (i)}]
		\item For all $j \in [h]$, $\gamma_j(\hat{S}_r,\Psi) \subseteq \gamma_j(R\cup R',\Psi') \subseteq \gamma_j(\overline{\mathcal{H}}_{\Psi} \cup R,\Psi'')$.
		\item The backward extraction using $\{\gamma_j(R\cup R',\Psi') : j \in [h]\}$ produces a curve $\sigma$ such that $d_F(\sigma,\tau_i) \leq \delta_i + 4\sqrt{d}\alpha\cdot \max\{\delta_i : \tau_i \in R \cup R' \}$ for all $\tau_i \in \overline{\mathcal{H}}_{\Psi} \cup R$.
	\end{enumerate}
\end{lemma}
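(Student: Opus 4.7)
The plan is to mirror the argument of Lemma~\ref{lem:first-level} one level deeper. Whereas Lemma~\ref{lem:first-level} identifies the few curves whose vertices pin the discrete components $\mathcal{C},\mathcal{S},\mathcal{A}$ of a configuration, here I want to identify the few curves whose linear constraints pin the endpoints of the segments $\gamma_j$ produced by the forward construction of Section~\ref{sec:forward}.

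First I would run the forward construction on $\hat{S}_r$ with $\Psi$, obtaining segments $\gamma_j=\gamma_j(\hat{S}_r,\Psi)$ on the support line of $s_j$. Each $\gamma_j$ is one-dimensional, so---setting aside the dependence on $\gamma_{j-1}$ through $F(c_{j-1,2},\gamma_{j-1})$---each of its two endpoints is pinned by at most one binding Case~2 constraint of the form $H_{i,a_i}$, $c_{j-1,2}\oplus\Pi_{i,a_i}$, $p_{i,a_i}+\Pi_{i,a_i}$, $c_{j,2}\oplus(-\Pi_{i,a_i})$, or $q_{i,a_i}-\Pi_{i,a_i}$. For each of the at most $2h$ endpoint pinnings attributable to an input curve in $\hat{S}_r$, I would order the contributing curves by how tightly they pinch that endpoint toward the interior of $s_j$, take the top $\frac{\eps|S|}{5\ell}$ as a subset $H$, and place $H$ into $\mathcal{H}_\Psi$. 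Lemma~\ref{lem:first-level} guarantees $|\hat{S}_r|\ge (1-\tfrac{3\eps}{5})|S|$, so this is feasible; padding with repetitions achieves $|\mathcal{H}_\Psi|=2h$ exactly with every subset of the required cardinality.

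I would then define $\Psi''=(\mathcal{P}'',\mathcal{C},\mathcal{S},\mathcal{A})$ and $\Psi'=(\mathcal{P}',\mathcal{C},\mathcal{S},\mathcal{A})$ by restricting $\mathcal{P}$ to the index sets of $\overline{\mathcal{H}}_\Psi\cup R$ and $R\cup R'$, respectively, and prove the sandwich inclusion~(i) by induction on $j$. The inner containment $\gamma_j(\hat{S}_r,\Psi)\subseteq\gamma_j(R\cup R',\Psi')$ is immediate because $R\cup R'\subseteq\hat{S}_r$ contributes fewer Case~2 constraints, while monotonicity of $F(c_{j-1,2},\cdot)$ in its second argument carries the inclusion through the recursion. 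The outer containment $\gamma_j(R\cup R',\Psi')\subseteq\gamma_j(\overline{\mathcal{H}}_\Psi\cup R,\Psi'')$ uses the construction of $\mathcal{H}_\Psi$: at every endpoint whose binding constraint in $\overline{\mathcal{H}}_\Psi\cup R$ stems from a curve in $\overline{\mathcal{H}}_\Psi$, the hypothesis $R'\cap H\neq\emptyset$ forces $R'$ to supply a constraint at least as tight, since $\overline{\mathcal{H}}_\Psi$ omits the top $\frac{\eps|S|}{5\ell}$ contributors for that endpoint; the inductive hypothesis plus monotonicity of $F$ finishes the step.

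Once~(i) is in hand, the backward extraction on $\{\gamma_j(R\cup R',\Psi'):j\in[h]\}$ produces $\sigma=(u_1,\ldots,u_h)$ whose vertices also lie in the larger segments $\gamma_j(\overline{\mathcal{H}}_\Psi\cup R,\Psi'')$, so $\sigma$ is equally a valid output of the two-phase construction applied to $(\overline{\mathcal{H}}_\Psi\cup R,\Psi'')$; Lemmas~\ref{Lemma: Effect_Rel_set} and~\ref{Lemma: ver_sig_match} then deliver $d_F(\sigma,\tau_i)\le\delta_i+4\sqrt{d}\alpha\cdot\max\{\delta_j:\tau_j\in R\cup R'\}$ for every $\tau_i\in\overline{\mathcal{H}}_\Psi\cup R$, the maximum being taken over $R\cup R'$ because the grids $\grid_1,\grid_2$ and segment set $\mathcal{L}$ used by configurations in $C(R\cup R',h,\alpha)$ depend only on $R\cup R'$'s error thresholds. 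The main obstacle is the careful bookkeeping of endpoint pinnings under the recursive dependence of $\gamma_k$ on $\gamma_{k-1}$: one must isolate the fresh pinning introduced at step $k$ from pinnings inherited through $F(c_{k-1,2},\gamma_{k-1})$ so that the total number of critical curves stays within $2h$ over the $h$ forward steps, and verify that the $i$-dependent extrema $p_{i,a_i}$ and $q_{i,a_i}$ of Cases~2.2 and~2.4 do not introduce additional critical curves beyond those absorbed by the top-$\frac{\eps|S|}{5\ell}$ selection.
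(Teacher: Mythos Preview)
Your proposal is correct and follows essentially the same approach as the paper. The paper defines $H_{2j-1},H_{2j}$ for every $j\in[h]$ (arbitrary when $\mathcal{A}[j]\neq\mathrm{null}$, and otherwise as the curves whose induced intervals on $s_j$ have the $\frac{\eps|S|}{5\ell}$ most extreme left, resp.\ right, endpoints), then restricts $\mathcal{P}$ to obtain $\Psi',\Psi''$ and proves the sandwich inclusion by induction on $j$ exactly as you outline; part~(ii) is then derived from~(i) via the argument you give.

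Regarding the obstacle you flag: the paper resolves it cleanly by observing that, when $\mathcal{A}[j]=\mathrm{null}$, every Case~2 constraint contributed by a curve $\tau_i$ (cylinder $H_{i,a_i}$ and the halfspaces of Cases~2.1--2.4) depends only on $\tau_i$, $\pi_i$, $\mathcal{C}$, and $\mathcal{A}$, never on $\gamma_{j-1}$. Hence each $\tau_i\in\hat{S}_r$ cuts out a fixed interval $p_{ij}q_{ij}$ on $s_j$ that is the \emph{same} in all three forward constructions, and $\gamma_j(Z,\cdot)=F(c_{j-1,2},\gamma_{j-1}(Z,\cdot))\cap F(c_{j,1},c_{j,2})\cap s_j\cap\bigcap_{\tau_i\in Z}p_{ij}q_{ij}$. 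The recursive part is entirely isolated in the $F(c_{j-1,2},\cdot)$ factor and handled by monotonicity plus the inductive hypothesis; the ``fresh'' part is the intersection of intervals, and two subsets per index $j$ suffice with no spillover, so the total is exactly $2h$. The extrema $p_{i,a_i},q_{i,a_i}$ of Cases~2.2 and~2.4 depend only on $\tau_i$, $\mathcal{A}[k\pm1]$, and $\delta_i$, so they introduce no cross-curve coupling.
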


It is an important feature of Lemma~\ref{lem:second-level} that $\Psi$, $\Psi'$ and $\Psi''$ share the components $\mathcal{C}$, $\mathcal{S}$ and $\mathcal{A}$.  Lemma~\ref{lem:second-level} serves the following purpose.   The much smaller subsets $R$ and $R'$ give line segments $\gamma_j(R \cup R',\Psi')$ that capture the feasible locations of output curve vertices for the much bigger set $\hat{S}_r = \overline{\mathcal{H}}_r \cup R$.  However, this property alone does not say anything about the approximation offered by $\gamma_j(R \cup R',\Psi')$ with respect to the curves in $\hat{S}_r$.  

The two-phase construction using $\{\gamma_j(\overline{\mathcal{H}}_{\Psi} \cup R,\Psi'') : j \in [h]\}$ will produce an output curve $\sigma$ such that $d_F(\sigma,\tau_i) \leq \delta_i + 4\sqrt{d}\alpha \cdot \max\{\delta_i : \tau_i \in \overline{\mathcal{H}}_{\Psi} \cup R\}$ for all $\tau_i \in \overline{\mathcal{H}}_{\Psi} \cup R$.  One effect of $\Psi'$ and $\Psi''$ sharing the component $\mathcal{A}$ is that $\max\{\delta_i  : \tau_i \in \overline{\mathcal{H}}_{\Psi} \cup R\} = \max\{\delta_i : \tau_i \in R \cup R'\}$.  Recall that the backward extraction can start with any point in $\gamma_h(\overline{\mathcal{H}}_{\Psi} \cup R,\Psi'')$ as the $h$-th output vertex $u_h$, and for $j \in [h-1]$, the $j$-th output vertex $u_j$ can be any point in $F(c_{j,2},u_{j+1}) \cap \gamma_j(\overline{\mathcal{H}}_{\Psi} \cup R,\Psi'')$.  We can start with $u_h \in \gamma_h(R\cup R',\Psi') \subseteq \gamma_h(\overline{\mathcal{H}}_{\Psi} \cup R,\Psi'')$.  For $j \in [h-1]$, since $\mathcal{C}$ is shared by $\Psi'$ and $\Psi''$, we can pick $u_j$ from $F(c_{j,2},u_{j+1})  \cap \gamma_j(R\cup R',\Psi') \subseteq F(c_{j,2},u_{j+1}) \cap \gamma_j(\overline{\mathcal{H}}_{\Psi} \cup R,\Psi'')$.  Therefore, the same approximation guarantees apply to the backward extraction using $\gamma_j(R\cup R',\Psi')$.  The proof of Lemma~\ref{lem:second-level} is in Appendix~\ref{app:second-level}.  The values $n/\beta$ and $\frac{\eps |S|}{5\ell}$ are not critical for the proof of Lemma~\ref{lem:second-level}.  They are chosen to interface with Lemma~\ref{lem:first-level} and the analysis of the approximation ratio of Algorithm~\ref{ALG:1_APP}.

Lemma~\ref{lem:second-level} allows us to approximate using the small subset $R \cup R'$, provided that we know the right error thresholds.  Of course, we do not, but we will encounter the right choice in the enumeration of the possible error thresholds in line~\ref{alg:error} of Algorithm~\ref{ALG:1_APP}.  The capability of  Algorithm~\ref{ALG:1_APP} is analogous to that provided by the approximate candidate finder of Buchin~et~al.~\cite{buchin2021approximating}; the major difference being our guarantee that every output curve has $\ell$ vertices, whereas an upper bound of $2\ell-2$ is guaranteed in~\cite{buchin2021approximating}.  We can adapt the analysis in~\cite{buchin2021approximating} for their approximate candidate finder to show that Algorithm~\ref{ALG:1_APP} is a $(1+\eps)$-approximate candidate finder.  The details are given in Appendix~\ref{app:finder}.
 
 \begin{lemma}
 	\label{lem:finder}
 	For $\eps < 1/9$, Algorithm~\ref{ALG:1_APP} is a $(1+\eps)$-approximate candidate finder with success probability at least $1-\mu$.   The algorithm outputs a set $\Sigma$ of curves, each of $\ell$ vertices; for every subset $S \subseteq T$ of size $\frac{1}{\beta}|T|$ or more, it holds with probability at least $1-\mu$ that there exists a curve $\sigma \in \Sigma$ such that $\mathrm{cost}(S,\sigma) \leq (1+\eps)\mathrm{cost}(S,c^*)$, where $c^*$ is the optimal $(1,\ell)$-median of $S$.  The running time and output size of Algorithm~\ref{ALG:1_APP} are $\tilde{O}\bigl(m^{O(\ell^2)} \cdot \mu^{-O(\ell)} \cdot (d\beta\ell/\eps)^{O((d\ell/\eps)\log(1/\mu))}\bigr)$.
 \end{lemma}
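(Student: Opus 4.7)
Proof proposal. The proof follows the framework of~\cite{buchin2021approximating}, adapted to our two-phase construction and to Lemmas~\ref{lem:first-level} and~\ref{lem:second-level}. Fix $S \subseteq T$ with $|S| \geq n/\beta$, let $c^*$ be an optimal $(1,\ell)$-median of $S$, and set $\delta_i^* = d_F(c^*, \tau_i)$ for $\tau_i \in S$. The plan is to identify random events of joint probability at least $1-\mu$ under which some iteration of Algorithm~\ref{ALG:1_APP} outputs a curve $\sigma$ with $\mathrm{cost}(S,\sigma) \leq (1+\eps)\mathrm{cost}(S,c^*)$. The only randomness lies in the sample $Y$ drawn in line~\ref{alg:sample} and in the 34-approximation subroutine invoked in line~\ref{alg:34}; all other choices are exhaustively enumerated.

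Three events suffice: (E1) some multi-subset $X \subseteq Y$ enumerated in line~\ref{alg:enum} is contained in $S$; (E2) the 34-approximation on this $X$ succeeds and $\mathrm{cost}(X,c^*)$ lies within a $(1\pm\eps)$-factor of $(|X|/|S|)\mathrm{cost}(S,c^*)$; (E3) $X$ contains at least one curve from each of the $\leq 5\ell$ disjoint ``witness'' subsets of size $\eps|S|/(5\ell)$ furnished jointly by Lemmas~\ref{lem:first-level} and~\ref{lem:second-level}. Since $|S|/n \geq 1/\beta$ and $|Y| = \Theta((\beta\ell/\eps)\log(\ell/\mu))$, each of (E1), (E2), (E3) fails with probability at most $\mu/4$ by standard Chernoff and union-bound arguments; in particular for (E3), the expected count $|X \cap H| = |X|\cdot \eps/(5\ell) = \Theta(\log(\ell/\mu))$ makes each witness subset hit with probability $\geq 1-\mu/(20\ell)$, and a union bound over $5\ell$ subsets gives the claim. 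Under (E1)--(E3), the enumeration in lines~\ref{alg:index}--\ref{alg:end-itr} reaches an iteration whose $l, h$ match Lemma~\ref{lem:first-level}, whose subset $W \subseteq X$ of size $3l+2h$ plays the role of $R \cup R'$ in Lemma~\ref{lem:second-level}, and whose $\Delta_W$ is the componentwise rounding of $\{\delta_i^* : \tau_i \in W\}$ to the nearest multiple of $L$ within $[L,U]$ (possible by (E2) and the definitions in line~\ref{alg:LU}). The two-phase method on the guaranteed configuration in $C(W,h,\eps^2)$ outputs $\sigma$ satisfying, by Lemma~\ref{lem:second-level}(ii),
\[
  d_F(\sigma,\tau_i) \;\leq\; \tilde\delta_i + 4\sqrt{d}\,\eps^2 \max_{\tau_j \in W}\tilde\delta_j, \qquad \forall\, \tau_i \in \overline{\mathcal{H}}_\Psi \cup R,
\]
where $\tilde\delta_i \leq \delta_i^* + L$ for $\tau_i \in W$ and $\tilde\delta_i = \delta_i^*$ otherwise (the implicit choice in Lemma~\ref{lem:second-level}). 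Padding $\sigma$ with $\ell-h$ collinear extra vertices does not affect $d_F$.

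Summing the per-curve bound over $\overline{\mathcal{H}}_\Psi \cup R$ and handling the at most $(3\eps/5 + \eps/(5\ell))|S|$ remaining curves via the triangle inequality through $c^*$ yields
\[
  \mathrm{cost}(S,\sigma) \;\leq\; \mathrm{cost}(S,c^*) + |S|L + 4\sqrt{d}\,\eps^2 |S| \max_{\tau_j \in W}\tilde\delta_j + (\text{tail}).
\]
The $L$-rounding term is $O(\eps\mu)\mathrm{cost}(X,c) = O(\eps)\mathrm{cost}(S,c^*)$ by (E2) and the choice of $L$, and the tail is $O(\eps)\mathrm{cost}(S,c^*)$ by the bound $d_F(\sigma,\tau_i) \leq \delta_i^* + d_F(c^*,\sigma)$ combined with the small count of uncovered curves. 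The \emph{main obstacle} is the additive-max term: its $|S|$ multiplier is a priori much larger than $\mathrm{cost}(S,c^*)/\max\tilde\delta_j$, and the crude bound $\max\tilde\delta_j \leq U$ is insufficient. One must exploit that $r$ ranges over the lightest $\eps|S|/(5\ell)$ curves and that the witness subsets of Lemmas~\ref{lem:first-level} and~\ref{lem:second-level} can be chosen as slices of $S_r$ in the $\delta^*$-sorted order, so that an averaging argument forces $\max_{\tau_j \in R \cup R'}\delta_j^* = O((\eps/\ell)\cdot \mathrm{cost}(S,c^*)/|S|)$ for the right $r$; combined with $\alpha = \eps^2$, the additive-max term becomes $O(\eps^3)\mathrm{cost}(S,c^*)$, absorbed into $O(\eps)$ under $\eps < 1/9$. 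For the running time and output size, multiply $\binom{|Y|}{|Y|/(2\beta)} = \beta^{O((\ell/\eps)\log(1/\mu))}$ choices of $X$, $|X|^{O(\ell)} = (\beta\ell/\eps)^{O(\ell)}$ choices of $W$, $(U/L)^{O(\ell)} = (m\ell/(\eps\mu))^{O(\ell)}$ threshold vectors, and $|C(W,h,\eps^2)| = m^{O(\ell^2)}(d\ell/\eps)^{O(d\ell^2)}$ configurations (from counting $\mathcal{P},\mathcal{C},\mathcal{S},\mathcal{A}$), each processed in $\tilde{O}(m\ell + \ell^2 2^{O(d)})$ time by Lemma~\ref{lem:forward} and the backward extraction; simplification yields $\tilde{O}\bigl(m^{O(\ell^2)}\mu^{-O(\ell)}(d\beta\ell/\eps)^{O((d\ell/\eps)\log(1/\mu))}\bigr)$, as claimed.
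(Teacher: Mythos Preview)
Your overall architecture is right, but the resolution of what you correctly call the ``main obstacle'' is wrong, and a necessary case split is missing.

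\textbf{The additive-max term.} Your claim that $\max_{\tau_j\in R\cup R'}\delta_j^* = O\bigl((\eps/\ell)\cdot\mathrm{cost}(S,c^*)/|S|\bigr)$ is off by a factor of $(\ell/\eps)^2$, and the mechanism you propose (``witness subsets chosen as slices in the $\delta^*$-sorted order'') is not available: the subsets in $\mathcal{H}_r$ and $\mathcal{H}_\Psi$ are fixed by the geometry of the Fr\'{e}chet matchings and the forward-construction clippings, not by error thresholds, so you cannot redefine them. The paper's argument is different. By Markov, at most $\tfrac{\eps|S|}{10\ell}$ curves in $S$ have $\delta^*_i\ge\tfrac{10\ell}{\eps|S|}\mathrm{cost}(S,c^*)$; hence every witness subset $H$ (size $\tfrac{\eps|S|}{5\ell}$) contains at least $\tfrac{\eps|S|}{10\ell}$ curves with $\delta^*_i<\tfrac{10\ell}{\eps|S|}\mathrm{cost}(S,c^*)$, and the sample $Y_S$ hits such a light curve in each $H$ with probability $\ge 1-\mu/(80\ell)$. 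This yields $\max_{\tau_j\in R\cup R'}\delta^*_j\le\tfrac{10\ell}{\eps|S|}\mathrm{cost}(S,c^*)$, so with $\alpha=\eps^2$ the additive-max term is $O(\sqrt{d}\,\ell\eps)\cdot\mathrm{cost}(S,c^*)$, not $O(\eps^3)\cdot\mathrm{cost}(S,c^*)$. The final step of the paper replaces $\eps$ by $\eps/\Theta(\sqrt{d}\,\ell)$ to reach $1+\eps$, which you omit.

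\textbf{The missing case split and the lower bound on $\mathrm{cost}(X,c)$.} You need $U\ge\tfrac{10\ell}{\eps|S|}\mathrm{cost}(S,c^*)$ so that the right thresholds are enumerated; this requires a \emph{lower} bound on $\mathrm{cost}(X,c)$. Your (E2) gives only an upper bound on $\mathrm{cost}(X,c)\le 34\,\mathrm{cost}(X,c^*)$, and the two-sided concentration you state for $\mathrm{cost}(X,c^*)$ is not justified because the summands $d_F(\tau_i,c^*)$ are unbounded (the paper uses Markov, one-sided, for $E_{\mathrm{cost}(Y_S,c^*)}$). The paper resolves this by a case distinction on $|N_{c^*}\setminus N_c|$ where $N_c=\{\tau_i:d_F(\tau_i,c)\le\tfrac{\eps}{|S|}\mathrm{cost}(S,c^*)\}$ and $N_{c^*}=\{\tau_i:d_F(\tau_i,c^*)\le\tfrac{1}{\eps^2|S|}\mathrm{cost}(S,c^*)\}$. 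If $|N_{c^*}\setminus N_c|\le 2\eps|N_{c^*}|$, a direct argument shows $\mathrm{cost}(S,c)\le(1+4\eps)\mathrm{cost}(S,c^*)$, so $c$ itself (which the algorithm adds to $\Sigma$) is the desired output. If $|N_{c^*}\setminus N_c|>2\eps|N_{c^*}|$, then $Y_S$ contains some $\tau_i$ with $d_F(\tau_i,c)>\tfrac{\eps}{|S|}\mathrm{cost}(S,c^*)$ with high probability, which gives $\mathrm{cost}(Y_S,c)\ge\tfrac{\eps}{|S|}\mathrm{cost}(S,c^*)$ and hence $U\ge\tfrac{10\ell}{\eps|S|}\mathrm{cost}(S,c^*)$. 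Without this split your argument does not establish that the enumeration in line~\ref{alg:error} reaches the needed $\Delta_W$.
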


Combining Lemmas~\ref{lem:finder} and~\ref{Lemma: Framework} gives the following result.
 
 \begin{theorem}
 	Let $T$ be a set of $n$ polygonal curves with $m$ vertices each in $\real^d$.  For any $k,\ell \in \mathbb{Z}_{>0}$ and any $\mu,\eps \in (0,1)$, one can compute a set $\Sigma$ of $k$ curves, each of $\ell$ vertices, and it holds with probability at least $1-\mu$ that $\mathrm{cost}(T,\Sigma)$ is within a factor $1+\eps$ of the optimal $(k,\ell)$-median cost of $T$.  The running time is
 	$\tilde{O}\bigl(n \cdot m^{O(k\ell^2)} \cdot \mu^{-O(k\ell)} \cdot (dk\ell/\eps)^{O((dk\ell/\eps)\log(1/\mu))}\bigr)$.
  \end{theorem}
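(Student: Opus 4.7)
The plan is to plug Lemma~\ref{lem:finder} into the clustering framework of Lemma~\ref{Lemma: Framework} and tune parameters. First, I would set $\beta = ck^2/\eps$ for a sufficiently large absolute constant $c$, choose an error parameter $\eps' = \Theta(\eps)$ with $\eps' < 1/9$, and invoke Algorithm~\ref{ALG:1_APP} with parameters $(\beta,\mu,\eps')$. By Lemma~\ref{lem:finder}, this yields a $(1+\eps')$-approximate candidate finder $A$ with success probability at least $1-\mu$ whose running time $T_A$ and output size $C_A$ are both bounded by $\tilde{O}\bigl(m^{O(\ell^2)}\cdot \mu^{-O(\ell)}\cdot (d\beta\ell/\eps)^{O((d\ell/\eps)\log(1/\mu))}\bigr)$. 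With $\beta = \Theta(k^2/\eps)$, the base $d\beta\ell/\eps = \Theta(dk^2\ell/\eps^2)$ is polynomial in $dk\ell/\eps$, so after inflating the exponent by a constant this base can be rewritten as $dk\ell/\eps$.

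Next, I would apply Lemma~\ref{Lemma: Framework} with $\alpha = 1$ using $A$ as the finder. The lemma returns a set $\Sigma$ of $k$ curves, each of $\ell$ vertices, and guarantees
\[
\mathrm{cost}(T,\Sigma) \;\leq\; \Bigl(1+\tfrac{4k^2}{\beta-2k}\Bigr)(1+\eps')\cdot\mathrm{OPT}
\]
with probability at least $1-\mu$. For $c$ large enough we have $\beta - 2k \geq \beta/2$, so $\frac{4k^2}{\beta-2k} = O(\eps)$; combined with $\eps' = \Theta(\eps)$ with a sufficiently small hidden constant, the overall ratio is at most $1+\eps$, as required. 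The success probability is inherited directly from $A$.

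The framework's running time is $O\bigl(T_A\cdot C_A^{k+1} + n\cdot C_A^{k+2}\bigr)$. Raising $C_A$ to the $(k+1)$-st or $(k+2)$-nd power multiplies each exponent in $C_A$ by $\Theta(k)$: the $m^{O(\ell^2)}$ factor becomes $m^{O(k\ell^2)}$, the $\mu^{-O(\ell)}$ factor becomes $\mu^{-O(k\ell)}$, and the $(dk\ell/\eps)^{O((d\ell/\eps)\log(1/\mu))}$ factor becomes $(dk\ell/\eps)^{O((dk\ell/\eps)\log(1/\mu))}$. The additive $n\cdot C_A^{k+2}$ term supplies the leading factor $n$, giving exactly the claimed bound $\tilde{O}\bigl(n \cdot m^{O(k\ell^2)} \cdot \mu^{-O(k\ell)} \cdot (dk\ell/\eps)^{O((dk\ell/\eps)\log(1/\mu))}\bigr)$. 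There is no conceptual obstacle here; the only bookkeeping to verify is that $\beta - 2k = \Theta(\beta)$ for the chosen constant $c$, so that the framework overhead collapses to $1+O(\eps)$, and that the constant-factor rescaling of $\eps$ used to fold $(1+O(\eps))(1+\eps')$ into $1+\eps$ does not alter the asymptotic form of the running time expression.
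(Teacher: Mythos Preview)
Your proposal is correct and follows exactly the approach the paper takes: the theorem is obtained by combining Lemma~\ref{lem:finder} with the framework Lemma~\ref{Lemma: Framework}, setting $\beta = \Theta(k^2/\eps)$ and rescaling $\eps$ by a constant factor, and your running-time bookkeeping (absorbing $d\beta\ell/\eps = \Theta(dk^2\ell/\eps^2)$ into a $(dk\ell/\eps)$ base and multiplying exponents by $\Theta(k)$ when raising $C_A$ to the $(k{+}2)$-nd power) is precisely the calculation underlying the stated bound.
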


\newpage

\bibliography{ref.bib}
\bibliographystyle{plain}

\newpage

\appendix

\section{Proof of Lemma~\ref{Lemma: valid_cons_set}}
\label{app:Lemma: valid_cons_set}

\noindent\textbf{Lemma~\ref{Lemma: valid_cons_set}.}~~\emph{If $\Q(T, \Delta,\ell)$ has a solution, there exist a configuration $\Psi_l$ and a curve $\sigma = (w_1,\ldots,w_l)$ for some $l \in [\ell]$ such that constraints~1--3 are satisfied and $d_F(\tau_{i}, \sigma)\le \delta_{i}+ \sqrt{d}\eps\delta_{\min}$ for $i\in [n]$.}

\vspace{2pt}

\begin{proof}
Let $\sigma = (u_1,\ldots,u_l)$ be a solution curve for $Q(T,\Delta,\ell)$ for some $l \in [\ell]$.  For $i \in [n]$, let $g_i$ be a Fr\'{e}chet matching from $\tau_i$ to $\sigma$.  So $d_{g_i}(\tau_i,\sigma) \leq \delta_i$ for all $i \in [n]$.

We require further that for all $j \in [l-1]$, there exist $i \in [n]$ and $a \in [m]$ such that $g_i(v_{i,a}) \cap u_ju_{j+1} \not= \emptyset$.  Suppose that this requirement is not met for $u_ju_{j+1}$.  Then, for every $i \in [n]$, there exists $a_i \in [m-1]$ such that $u_ju_{j+1} \subseteq g_i(\mathrm{int}(\tau_{i,a_i}))$.  Let $p$ be the maximum of $\{\max(g_i(v_{i,a_i})) : i \in [n]\}$ with respect to $\leq_{\sigma}$.  Let $q$ be the minimum of $\{\min(g_i(v_{i,a_i+1})) : i \in [n]\}$ with respect to $\leq_{\sigma}$.  Our choice of $p$ and $q$ means that $p \in g_s(v_{s,a_s})$ and $q \in g_t(v_{t,a_{t}+1})$ for some possibly non-distinct $s,t \in [n]$, and $\mathrm{int}(\sigma[p,q]) \subseteq g_i(\mathrm{int}(\tau_{i,a_i}))$ for all $i \in [n]$.  We update $\sigma$ by substituting $\sigma[p,q]$ with the edge $pq$, possibly making $p$ and $q$ new vertices of $\sigma$.  The number of edges of $\sigma$ is not increased by the replacement; it  may actually be reduced.  For all $i \in [n]$, we update $g_i$ by a linear interpolation along $pq$; since $\mathrm{int}(\sigma[p,q]) \subseteq g_i(\mathrm{int}(\tau_{i,a_i}))$, the replacement of $\sigma[p,q]$ and the linear interpolation ensure that after the update, $g_i$ remains a matching and $d_{g_i}(\tau_i,\sigma) \leq \delta_i$.  Our choice of $p$ and $q$ means that the update does not affect the subset of vertices of $\tau_i$ that are matched by any $g_i$ to the edges of $\sigma$ other than $pq$.  The update also ensures that $pq$ will not trigger another shortcutting, and $pq$ will not be shortened by other shortcuttings.  If necessary, we repeat the above to convert $\sigma$ to $(u'_1,u'_2,\ldots)$ such that for every edge $u'_ju'_{j+1}$ there exist $i \in [n]$ and $a \in [m]$ such that $g_i(v_{i,a}) \cap u'_ju'_{j+1} \not= \emptyset$.
	
The modified $\sigma$ may have fewer vertices; if so, we decrease the value of $l$ to the current number of vertices in $\sigma$.  We use $(u'_1,u'_2,\ldots,u'_l)$ to denote the modified $\sigma$.  Let $\min = \mathrm{argmin}_{i \in [n]} \delta_i$.  Next, we snap the vertices of $\sigma$ to segments in the set $\cal L$.  Since $d_{g_{\min}}(\sigma,\tau_{\min}) \leq \delta_{\min}$, $\sigma$ lies inside $\tau_{\min} \oplus B_{\delta_{\min}}$, which implies that every vertex $u'_j$ of $\sigma$ is within a distance of $\sqrt{d}\eps\delta_{\min}$ from the nearest segment $s_j$ in $\cal L$.  We modify $\sigma$ by moving $u'_j$ to its nearest point $w_j \in s_j$ for every $j \in [l]$.  For every $i \in [n]$, we update $g_i$ using the linear interpolations between $u'_ju'_{j+1}$ and $w_jw_{j+1}$ for all $j \in [l-1]$.  Consequently, $\sigma = (w_1,\ldots,w_l)$ and $d_{g_i}(\tau_i,\sigma) \leq \delta_i + \sqrt{d}\eps\delta_{\min}$ for $i \in [n]$.   This establishes the second half of the lemma.  Notice that this updating of the $g_i$'s preserves the property that for all $j \in [l-1]$, there exist $i \in [n]$ and $a \in [m]$ such that $g_i(v_{i,a}) \cap w_jw_{j+1} \not= \emptyset$.

We construct a configuration $\Psi_l$ that satisfies constraints~1--3 together with $\sigma$.
	
For every $i \in [n]$ and every $a \in [m]$, if $w_1 \in g_i(v_{i,a})$, we define $\pi_i(a) = 0$; otherwise, let $j \in [l-1]$ be the smallest index such that $g_i(v_{i,a}) \cap w_jw_{j+1} \not= \emptyset$ and we define $\pi_i(a) = j$.   This induces an $n$-tuple $\mathcal{P} = (\pi_i)_{i \in [n]}$ of the partitions of the vertices of $\tau_1,\ldots,\tau_n$ such that for all $i \in [n]$, $\pi_i(1) = 0$ and if $a \leq b$, then $\pi_i(a) \leq \pi_i(b)$.  

Next, we define $\mathcal{C}$ as follows.  Take any $j \in [l-1]$.  Let $x_j$ and $y_j$ be the minimum and maximum points in $\bigcup_{i \in [n]} \bigcup_{a \in [m]} g_i(v_{i,a}) \cap w_jw_{j+1}$ with respect to $\leq_{\sigma}$, respectively.  Note that $\bigcup_{i \in [n]} \bigcup_{a \in [m]} g_i(v_{i,a}) \cap w_jw_{j+1}$ is non-empty because there exist $i \in [n]$ and $a \in [m]$ such that $g_i(v_{i,a}) \cap w_jw_{j+1} \not= \emptyset$.  By definition, $x_j \leq_{\sigma} y_j$.  There exists $i \in [n]$ such that $x_j$ is within a distance of $\delta_i + \sqrt{d}\eps\delta_{\min}$ from a vertex of $\tau_i$.   We can make the same conclusion about $y_j$.  It follows that $x_j$ and $y_j$ belong to cells in $\grid_1$.  Choose $c_{j,1}$ and $c_{j,2}$ to be any cells in $\grid_1$ that contain $x_j$ and $y_j$, respectively.   This gives the $(l-1)$-tuple $\mathcal{C} = ((c_{j,1},c_{j,2}))_{j \in [l-1]}$.

The components $\mathcal{S}$ and $\mathcal{A}$ of $\Psi_l$ are defined as follows.  By construction, we know that $w_j$ lies on some segment $s_j \in \mathcal{L}$.  We simply set $\mathcal{S} = (s_j)_{j \in [l]}$. For every $j \in [l]$, if $w_j$ lies in some grid cell in $\grid_2$, set $\mathcal{A}[j]$ to be that cell; otherwise, set $\mathcal{A}[j]$ to be null. 

We verify constraint~1.  Recall the definitions of $x_j$ and $y_j$ in defining $(c_{j,1},c_{j,2})$ for every $j \in [l-1]$.  Suppose that $\pi_i^{-1}(j)$ is some non-empty $[a,b]$.  It means that $g_i(\tau_i[v_{i,a},v_{i,b}]) \subseteq x_jy_j \subseteq w_jw_{j+1}$.  Let $x$ and $y$ be any vertices of $c_{j,1}$ and $c_{j,2}$, respectively.  Therefore, both $d(x,x_j)$ and $d(y,y_j)$ are at most $\sqrt{d}\eps\delta_i$.   Let $h$ be the linear interpolation from the oriented segment $x_jy_j$ to the oriented segment $xy$.  For any point $z \in \tau_i[v_{i,a},v_{i,b}]$, $d(z,h\circ g_i(z)) \leq d_{g_i}(\tau_i,\sigma) + \sqrt{d}\eps\delta_i \leq \delta_i + 2\sqrt{d}\eps\delta_i$.  It follows that $h \circ g_i$ is a matching from $\tau_i[v_{i,a},v_{i,b}]$ to some segment $pq \subseteq xy$ such that $d_{h \circ g_i}(\tau_i[v_{i,a},v_{i,b}],pq) \leq \delta_i + 2\sqrt{d}\eps\delta_i$.  This proves that constraint~1 is satisfied.

Constraint~2 is clearly satisfied by construction.


We verify constraint~3(a) as follows.  Recall that $d_{g_i}(\tau_i,\sigma) \leq \delta_i + \sqrt{d}\eps\delta_{\min}$ for $i \in [n]$.  Since the cell $\mathcal{A}[1]$ has side length $\eps\delta_{\max}$, for every vertex $x$ of $\mathcal{A}[1]$, we have $d(v_{i,1}, x) \leq d(v_{i,1},w_1) + \sqrt{d}\eps\delta_{\max}\leq \delta_i + 2\sqrt{d}\eps\delta_{\max}$.  The same analysis applies to $d(v_{i,m},y)$ for every vertex $y$ of $\mathcal{A}[l]$.  This establishes constraint~3(a).   

Consider constraint~3(b).  The set $J_a = \bigr\{j : \pi_i(a) < j \leq \pi_i(a+1) \, \wedge \, \mathcal{A}[j] \not= \mathrm{null} \bigr\}$ contains the indices of the vertices $w_j$'s with non-null $\mathcal{A}[j]$'s such that $g_i^{-1}(w_j) \cap \tau_{i,a} \not= \emptyset$.  For all $j \in J_a$, pick any point in $g_i^{-1}(w_j) \cap \tau_{i,a}$ to be $p_j$.  Then, constraint~3(b)(i) follows from the fact that $g_i$ is a matching. Constraint~3(b)(ii) follows from the inequalities $d(w_j,p_j) \leq d_{g_i}(\tau_i,\sigma) \leq \delta_i + \sqrt{d}\eps\delta_{\min}$ and the $\sqrt{d}\eps\delta_{\max}$ bound on the distance from $w_j$ to any vertex of the cell $\mathcal{A}[j]$ that contains $w_j$.

Consider constraint~3(c).   Take any index $j \in [l-1]$ such that $\mathcal{A}[j] = \mathrm{null}$.  Recall that $N_j = \bigl\{ i : \exists \, a_i \in [m-1] \,\, \text{s.t.} \,\, \pi_i(a_i) < j \leq \pi_i(a_i+1)\bigr\}$.  We need to show that constraints~3(c)(i)--(v) are satisfied for all $i \in N_j$ whenever they are applicable.  

Since $\pi_i(a_i) < j \leq \pi_i(a_i+1)$, $g_i^{-1}$ matches $w_j$ to point(s) in $\tau_{i,a_i}$ within a distance of $\delta_i + \sqrt{d}\eps\delta_{\min}$.  It shows  that $d(w_j, \tau_{i,a_i}) \leq \delta_i + \sqrt{d}\eps\delta_{\min}$.  The reason why we set $\mathcal{A}[j]$ to be null is that $w_j$ lies outside all cells in $\grid_2$.  It implies that both $d(v_{i,a_i},w_j)$ and $d(v_{i,a_i+1},w_j)$ are greater than $9\sqrt{d}\delta_{\max}$.  Under these circumstances, the condition $w_j \!\downarrow\!\tau_{i,a_i} \in \tau_{i,a_i}$ must hold as it is the only way for $w_j$ to be matched to point(s) in $\tau_{i,a_i}$ within a distance of $\delta_i + \sqrt{d}\eps\delta_{\min}$.    So constraint~3(c)(i) is satisfied.

Before we verify constraints~3(c)(ii) and (iii), we claim that $d(w_{j-1},w_j) \geq 7\sqrt{d}\delta_{\max}$.   Take any $r \in [n]$ and any $a \in [m]$ such that $g_r(v_{r,a}) \cap w_{j-1}w_j \not= \emptyset$.  If $d(w_{j-1},w_j) < 7\sqrt{d}\delta_{\max}$, then $d(v_{r,a},w_j) \leq 8\sqrt{d}\delta_{\max} + \sqrt{d}\eps\delta_{\min}$, which implies that $w_j$ lies in a grid cell in $G(v_{r,a}+B_{9\sqrt{d}\delta_{\max}},\eps\delta_{\max}) \subseteq \grid_2$.  But this is a contradiction to the fact that $\mathcal{A}[j] = \mathrm{null}$.  This proves the claim.

Suppose that $\pi_i(a_i) < j -1 \leq \pi_i(a_i+1)$ and $\mathcal{A}[j-1] = \mathrm{null}$.   So $g_i^{-1}$ matches both $w_{j-1}$ and $w_j$ to point(s) in $\tau_{i,a_i}$.  As argued previously for $w_j$, we must have $w_{j-1} \!\downarrow\!\tau_{i,a_i} \in \tau_{i,a_i}$.   
The angle $\angle (v_{i,a_i+1}-v_{i,a_i},w_j-w_{j-1})$ is either at most $\pi/2$ or greater than $\pi/2$.  In the first case, $w_{j-1} \!\downarrow\!\tau_{i,a_i} \leq_{\tau_i} w_j \!\downarrow\!\tau_{i,a_i}$, so constraint~3(c)(ii) is satisfied.  In the latter case, $d(w_j,g_i^{-1}(w_j) \cap \tau_{i,a_i}) \geq d(w_{j-1},w_j) - \max\{d(w_{j-1},x) : x \in g_i^{-1}(w_{j-1})\} \geq 7\sqrt{d}\delta_{\max}-\delta_i-\sqrt{d}\eps\delta_{\min} > \delta_i  + \sqrt{d}\eps\delta_{\min}$, a contradiction.

Suppose that $\pi_i(a_i) < j -1 \leq \pi_i(a_i+1)$ and $\mathcal{A}[j-1] \not= \mathrm{null}$.  
Since $d(w_{j-1},w_j) \geq 7\sqrt{d}\delta_{\max}$, the distance from $w_j \!\downarrow\!\tau_{i,a_i}$ to the boundary of $\mathcal{A}[j-1] \oplus B_{\delta_i+3\sqrt{d}\eps\delta_i}$ is at least $5\sqrt{d}\delta_{\max} - 5\sqrt{d}\eps\delta_i \geq 0$.  Therefore, $w_j \!\downarrow\!\tau_{i,a_i}$ does not lie inside $\mathcal{A}[j-1] \oplus B_{\delta_i+3\sqrt{d}\eps\delta_i}$.  Since $\max\{d(w_j,x): x \in g_i^{-1}(w_j)\}$ is at most $\delta_i + \sqrt{d}\eps\delta_{\min}$, $g_i^{-1}(w_j)$ does not overlap with $\mathcal{A}[j-1] \oplus B_{\delta_i + 3\sqrt{d}\eps\delta_i}$.  Similarly, we have $g_i^{-1}(w_{j-1}) \subset  \mathcal{A}[j-1] \oplus B_{\delta_i+3\sqrt{d}\eps\delta_i}$.  
It follows that the ordering of $g_i^{-1}(w_{j-1})$ and $g_i^{-1}(w_j)$ along $\tau_{i}$ is the same as that of $g_i^{-1}(w_{j-1})$ and $w_j \!\downarrow\!\tau_{i,a_i}$.  Hence, $\tau_{i,a_i} \cap (\mathcal{A}[j-1] \oplus B_{\delta_i+3\sqrt{d}\eps\delta_i}) \leq_{\tau_i} w_j \!\downarrow\!\tau_{i,a_i}$.  So constraint~3(c)(iii) is satisfied.

We can similarly show that constraint~3(c)(iv) and (v) are satisfied if applicable.
\end{proof}

\section{A geometric construct}
\label{app:geom}

Let $R$ and $S$ be two non-empty, bounded convex polytopes.   Recall that $F(R,S) = \{p \in \real^d : \exists \, q \in S \,\, \text{s.t.} \,\, pq \cap R \not= \emptyset\}$.  Let $\beta_1, \beta_2, ..., \beta_{|R|}$ be the vertices of $R$.  Let $\phi_1, \phi_2, ..., \phi_{|S|}$ be the vertices of $S$.   Let $\psi_{ij}$ be the vector $\beta_i-\phi_j$ for all $i \in [|R|]$ and $j \in [|S|]$.  We prove that $F(R,S) = R  \oplus \psi_{RS}$, where $\psi_{RS}$ denotes the set of conical combinations of $\bigl\{\psi_{ij} : i \in [|R|], \, j \in [|S|] \bigr\}$, i.e., $\sum_{i,j} x_{ij}\psi_{ij}$ for some non-negative $x_{ij}$'s.

\begin{lemma}\label{Lemma: Decomposition}
$F(R,S) = R \oplus \psi_{RS}$.
\end{lemma}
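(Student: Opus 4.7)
}

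The plan is to prove the two inclusions separately; in both directions the key device is translating between (i) a conical combination $\sum_{i,j} x_{ij}\psi_{ij}$ of vertex differences and (ii) a single scaled difference $s(r-q)$ with $r \in R$, $q \in S$, $s \geq 0$.

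For the easy direction $F(R,S) \subseteq R \oplus \psi_{RS}$, I would fix $p \in F(R,S)$ together with witnesses $q \in S$ and $r \in pq \cap R$, and write $r = (1-t)p + tq$ for some $t \in [0,1]$. If $t=0$ then $p = r \in R \subseteq R \oplus \psi_{RS}$. Otherwise, setting $s = t/(1-t) \geq 0$ gives $p = r + s(r-q)$. Expanding $r = \sum_i \lambda_i \beta_i$ and $q = \sum_j \mu_j \phi_j$ as convex combinations of the vertices of $R$ and $S$, the product of the two simplex weightings yields $r - q = \sum_{i,j} \lambda_i \mu_j \psi_{ij}$, so $p - r = \sum_{i,j} (s\lambda_i\mu_j)\psi_{ij}$ is a conical combination of the $\psi_{ij}$'s, and hence $p \in R \oplus \psi_{RS}$.

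The harder direction is $R \oplus \psi_{RS} \subseteq F(R,S)$, because a conical combination may involve many distinct vertex pairs whereas $F(R,S)$ is witnessed by a single pair. My plan is a two-step collapse. Start with $p = r_0 + \sum_{i,j} x_{ij}\psi_{ij}$ where $r_0 \in R$ and $x_{ij} \geq 0$, and let $X = \sum_{i,j} x_{ij}$. If $X = 0$ then $p = r_0 \in R \subseteq F(R,S)$ trivially. Otherwise define
\[
r_\star = X^{-1}\sum_{i,j} x_{ij}\beta_i, \qquad q_\star = X^{-1}\sum_{i,j} x_{ij}\phi_j,
\]
which are convex combinations of the vertices (the marginals $\sum_j x_{ij}/X$ and $\sum_i x_{ij}/X$ each sum to $1$), so $r_\star \in R$ and $q_\star \in S$. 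The bilinear form collapses to $\sum_{i,j} x_{ij}\psi_{ij} = X(r_\star - q_\star)$, giving $p = r_0 + X(r_\star - q_\star)$. Now set $r' = (r_0 + Xr_\star)/(1+X) \in R$ (yet another convex combination); a one-line rearrangement gives $p = r' + X(r' - q_\star)$, equivalently $r' = \frac{1}{1+X}\,p + \frac{X}{1+X}\,q_\star$. Hence $r'$ lies on the segment $pq_\star$ with $r' \in R$ and $q_\star \in S$, so $p \in F(R,S)$.

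The main obstacle is the converse inclusion, specifically the double collapse: first aggregating the $\beta$-side and the $\phi$-side of the differences separately using the weights $x_{ij}/X$ (which relies crucially on the bilinear structure $\psi_{ij} = \beta_i - \phi_j$), and then absorbing the free summand $r_0 \in R$ through a further convex combination with weight $1/(1+X)$ so that the endpoint on $R$ witnessing membership in $F(R,S)$ lies on a single segment $pq_\star$ rather than being reached by a sum of many displacements.
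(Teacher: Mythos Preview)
Your proof is correct, and the inclusion $R \oplus \psi_{RS} \subseteq F(R,S)$ is handled exactly as in the paper: both normalize the conical coefficients to total mass $X$, read off $r_\star \in R$ and $q_\star \in S$ from the marginals, and then absorb the free summand $r_0$ via $r' = (r_0 + Xr_\star)/(1+X)$ so that $r'$ lies on the segment $pq_\star$.

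For the other inclusion $F(R,S) \subseteq R \oplus \psi_{RS}$, however, your route is genuinely more elementary. After writing $p = r + s(r-q)$, the paper expands $r = \sum_i a_i\beta_i$, $q = \sum_j b_j\phi_j$ and seeks nonnegative $x_{ij}$ with row sums $a_i$ and column sums $b_j$; it then invokes Farkas' Lemma and spends most of a page verifying the dual inequalities via an ad~hoc pairing argument. You sidestep all of this by exhibiting the product solution $x_{ij} = \lambda_i\mu_j$ and checking in one line that $\sum_{i,j}\lambda_i\mu_j\psi_{ij} = \bigl(\sum_j\mu_j\bigr)\sum_i\lambda_i\beta_i - \bigl(\sum_i\lambda_i\bigr)\sum_j\mu_j\phi_j = r-q$, using $\sum_i\lambda_i = \sum_j\mu_j = 1$. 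This is the canonical feasible point of the underlying transportation polytope and makes the Farkas detour unnecessary.

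One small caveat, shared with the paper: the rewrite $p = r + s(r-q)$ with $s = t/(1-t)$ breaks down at $t=1$, i.e.\ when $pq\cap R=\{q\}$, which forces $q\in R\cap S$. The paper's sentence ``$r\ne s$; otherwise, $p=r\in R$'' glosses over the same issue. You may wish to add a clause excluding this degenerate case.
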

\begin{proof}
	%
	First, we prove that any point $p \in F(R,S)$ can be written as $r + \psi$ for some point $r \in R$ and some conical combination $\psi \in \psi_{RS}$.   If $p \in R$, this property is trivially true as we can take $\psi = 0$.  Suppose that $p \not\in R$.  According to the definition of $F(R,S)$, we can find a point $s \in S$ such that the line segment $ps$ intersects $R$. Let $r$ be a point in $ps \cap R$.  We have $p = r +\lambda(r-s)$ for some non-negative $\lambda$.  Observe that $r \not= s$; otherwise, $p = r \in R$, contradicting the assumption that $p \not\in R$.
	
	We show that $r-s \in \psi_{RS}$. Given that $r \in R$ and $s \in S$, we have $r = \sum_{i=1}^{|R|} a_i\beta_i$ and $s = \sum_{j=1}^{|S|}b_j\phi_j$ for some $a_i, b_j \geq 0$ such that $\sum_i a_i = \sum_j b_j = 1$. To determine whether $r-s \in \psi_{RS}$, we need to verify whether we can find non-negative coefficients $x_{ij}$'s such that $r - s = \sum_{i,j}x_{ij}\psi_{ij}$.  We have
	\[
		r - s = \sum_{1\le i\le |R|}a_i\beta_i-\sum_{1\le j\le |S|}b_j\phi_j.
	\]
	By expanding $\sum_{i,j}x_{ij}\psi_{ij}$ we get
	\begin{align*}
		&\sum_{i,j}x_{ij}\psi_{ij}\\
		=&\sum_{i,j}x_{i,j}(\beta_i-\phi_j)\\
		=&\sum_{1\le i\le |R|} \Bigl(\sum_{1\le j\le |S|}x_{ij}\Bigr)\beta_i - \sum_{1\le j\le |S|} \Bigl(\sum_{1\le i\le |R|}x_{i,j}\Bigr)\phi_j
	\end{align*}
	By comparing terms, we obtain the following linear system.
	\begin{center}
	\begin{tabular}{lll}
		$x_{i,1}+x_{i,2}+...+x_{i,|S|} = a_i$, & & $\forall\, 1 \leq i \leq |R|$, \\
		$x_{1,j}+x_{2,j}+...+x_{|R|,j} = b_j$, & & $\forall\, 1 \leq j\leq |S|$,  \\
		$x_{ij} \ge 0$, & & $\forall\, 1 \leq i \leq |R|, \, \forall\, 1 \leq j \leq |S|$.
	\end{tabular} 
	\end{center}
	According to Farkas' Lemma~\cite{schrijver1998theory}, there exists a vector $x\ge0$ that satisfies a linear system $Ax=e$ if and only if $ye\ge0$ for each row vector $y$ such that $yA\ge 0$. In our case, $e$ is the column vector $(a_1,\ldots,a_{|R|}, b_1,\ldots, b_{|S|})^t$, $x$ is the column vector $(x_{1,1}, \ldots, x_{1,|S|}, \ldots)^t$, and $A$ is a $(|R|+|S|)\times (|R||S|)$ matrix.  For $0 \leq k \leq |R||S|-1$, in the $(k+1)$-th column of $A$, the $(\lfloor k/|R| \rfloor + 1)$-th and the $(|R|+1+ k~\mathrm{mod}~|S|)$-th entries are equal to 1, and all other entries are zero.  Thus, given a row vector $y = (y_1, y_2, ..., y_{|R|+|S|})$, it satisfies $yA\ge 0$ if and only if 
	\[
	y_i + y_{|R|+j} \ge 0, \quad \forall\, 1 \leq i \leq |R|, \, \forall\, 1 \leq j \leq |S|.
	\]
	
	Next, we prove that $ye \ge 0$ if $yA\ge 0$. Expanding $ye$ we get:
	\[
	ye = a_1y_1+a_2y_2+...+b_1y_{|R|+1}+...+b_{|S|}y_{|R|+|S|}.
	\]
	
	Without loss of generality, assume that $a_{i'}$ is the smallest element among $a_i$'s and $b_j$'s. Let $k = \min\{i', |S|\}$. Then,
	\begin{eqnarray*}
	ye & = & a_{i'}(y_{i'}+y_{|R|+k}) + \Lambda, \quad \text{where} \\
	\Lambda & = & a_1y_1+...+a_{i'-1}y_{i'-1}+a_{i'+1}y_{i'+1}+...+a_{|R|}y_{|R|} + \\
		             & & b_1y_{|R|+1}+...+(b_k -a_{i'})y_{|R|+k}+...+b_{|S|}y_{|R|+|S|}.
	\end{eqnarray*}
	
	Note that $b_k - a_{i'} \geq 0$.  Both $(\sum_i a_i) - a_{i'}$ and $(\sum_j b_j) - a_{i'}$ are equal to $1 - a_{i'}$ as $\sum_i a_i = \sum_j b_j = 1$.  We repeat the above to pair another $y_i$ and $y_{|R|+j}$.  Since the sums of the coefficients decrease by the same amount for the $a_i$'s and $b_j$'s, there must be nothing left by the repeated pairing in the end.  That is, $\Lambda = 0$ in the end.  As a result, $ye$ can be written as a sum of terms like $y_i+y_{|R|+j}$ with non-negative coefficients.  Therefore, $ye\ge 0$, and $Ax = e$ has a feasible solution. This shows that $F(R,S) \subseteq R \oplus \psi_{RS}$.
	
	Next, we prove that $R \oplus \psi_{RS} \subseteq F(R,S)$.  Consider a point $r + \lambda\sum_{i,j}x_{ij}\psi_{ij}$, where $r \in R$, $\lambda$ is a non-negative real number, and $x_{ij}$'s are non-negative coefficients.  If $\sum_{i,j} x_{ij} = 0$, then $x_{ij} = 0$ for all $i$ and $j$, and the point is just $r$.  One can always draw a line through $r$ that intersects $S$, which shows that $r \in F(R,S)$.  Suppose that $\sum_{i,j} x_{ij} > 0$.  Without loss of generality, we assume that $\sum_{i,j}x_{ij}=1$ because, if necessary, $\lambda\sum_{i,j} x_{ij}\psi_{ij}$ can be written as $\lambda\bigl(\sum_{i,j} x_{ij}\bigr) \cdot \sum_{i,j} \bigl(\frac{x_{ij}}{\sum_{i,j} x_{ij}}\bigr) \psi_{ij}$.
	
	By expanding $\sum_{i,j}x_{ij}\psi_{ij}$ we get
	\[
		\sum_{i,j}x_{ij}\psi_{ij} = \sum_{1\le i\le |R|} \Bigl(\sum_{1\le j\le |S|}x_{ij}\Bigr)\beta_i-\sum_{1\le j\le |S|} \Bigl(\sum_{1\le i\le |R|}x_{i,j}\Bigr)\phi_j.
	\]
	Since $r \in R$, we can write it as $\sum_{i=1}^{|R|}a_i\beta_i$, where $a_i \geq 0$ and $\sum_{i=1}^{|R|} a_i = 1$.  We define
	\[
		a_i' = \frac{1}{1+\lambda} \left(a_i+\lambda\sum_{1\le j\le|S|}x_{ij}\right), \quad\quad b'_j = \sum_{i=1}^{|R|} x_{ij}, \quad\quad
	r' = \sum_{i=1}^{|R|}a_i'\beta_i, \quad\quad s = \sum_{j = 1}^{|S|} b'_j \phi_j.
	\]
	Clearly, $a'_i$ and $b'_j$ are non-negative for all $i$ and $j$.  Since $\sum_i a_i = \sum_{i,j} x_{ij} = 1$, we have $\sum_i a' = \frac{1}{1+\lambda}\sum_i a_i + \frac{\lambda}{1+\lambda}\sum_{i,j} x_{ij} = 1$.  Also, $\sum_{j} b'_j = \sum_{i,j} x_{ij} = 1$.  Therefore, $r'$ and $s$ are points in $R$ and $S$, respectively.  The point $r +\lambda\sum_{i,j}x_{ij}\psi_{ij}$ is equal to $r' +\lambda(r' - s)$. Hence, $r +\lambda\sum_{i,j}x_{ij}\psi_{ij} \in F(R,S)$.
\end{proof}

Notice that $F(R,S)$ is unbounded.  In fact, if $R \cap S \not= \emptyset$, then $F(R,S) = \real^d$.   The Minkowski sum of $R$ and $\psi_{RS}$ has complexity proportional to the product of their complexities, and so is the construction time of the Minkowski sum.  There are at most $|R||S|$ directions in $\{\psi_{ij}\}$.  Computing $\psi_{RS}$ boils down to a convex hull computation in $\real^{d-1}$ which has $O\bigl((|R||S|)^{\lfloor (d-1)/2 \rfloor}\bigr)$ complexity and can be constructed in $O\bigl((|R||S|)^{\lfloor (d-1)/2 \rfloor} + |R||S|\log (|R||S|)\bigr)$ time.


\begin{lemma}
	The complexity of $F(R,S)$ is $O(|R|^{1 + \lfloor (d-1)/2\rfloor}|S|^{\lfloor (d-1)/2\rfloor})$.  The construction time of $F(R,S)$ is $O\bigl(|R|^{1 + \lfloor (d-1)/2\rfloor}|S|^{\lfloor (d-1)/2\rfloor}+ |R||S|\log (|R||S|)\bigr)$.
\end{lemma}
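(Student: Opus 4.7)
The plan is to leverage Lemma~\ref{Lemma: Decomposition} directly: since $F(R,S) = R \oplus \psi_{RS}$, the bounds follow by separately analyzing the complexity of $\psi_{RS}$ as a convex cone and then invoking standard Minkowski sum bounds.

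First, I would bound $\psi_{RS}$. By definition, $\psi_{RS}$ is the conical hull (set of nonnegative combinations) of the $|R||S|$ vectors $\psi_{ij} = \beta_i - \phi_j$. The combinatorial complexity of such a cone---namely the number of its bounding facets and lower-dimensional faces---equals that of the convex hull of the $|R||S|$ direction vectors $\psi_{ij}/\|\psi_{ij}\|$ viewed on the unit sphere, which by homogeneity reduces to a convex hull computation in $\real^{d-1}$. By the upper bound theorem for convex hulls, this has complexity $O\bigl((|R||S|)^{\lfloor (d-1)/2 \rfloor}\bigr)$, and it can be constructed in $O\bigl((|R||S|)^{\lfloor (d-1)/2 \rfloor} + |R||S|\log(|R||S|)\bigr)$ time.

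Next, I would appeal to the standard fact that the Minkowski sum of two convex polyhedra (one possibly unbounded) in $\real^d$ has combinatorial complexity $O(|P_1| \cdot |P_2|)$ and can be built within the same time bound plus lower order terms. With $|R|$ as the complexity of the first polytope and $O\bigl((|R||S|)^{\lfloor (d-1)/2 \rfloor}\bigr)$ as the complexity of $\psi_{RS}$, the product yields the stated complexity
\[
O\bigl(|R|^{1 + \lfloor (d-1)/2 \rfloor} |S|^{\lfloor (d-1)/2 \rfloor}\bigr),
\]
and the construction time is $O\bigl(|R|^{1 + \lfloor (d-1)/2 \rfloor} |S|^{\lfloor (d-1)/2 \rfloor} + |R||S|\log(|R||S|)\bigr)$, where the logarithmic term is inherited from the convex hull step used to build $\psi_{RS}$.

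The only part that is not a cosmetic packaging is ensuring that the Minkowski sum complexity bound applies cleanly when one summand is a convex cone rather than a bounded polytope; this follows because one can truncate $\psi_{RS}$ by a sufficiently large simplex containing all directions of interest, compute the Minkowski sum there, and then recover the unbounded recession directions from the cone's extreme rays without changing the asymptotic bounds. Given the decomposition of Lemma~\ref{Lemma: Decomposition}, everything else is bookkeeping, so I do not anticipate any genuine obstacle.
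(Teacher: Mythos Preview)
Your proposal is correct and follows essentially the same approach as the paper: use Lemma~\ref{Lemma: Decomposition} to write $F(R,S) = R \oplus \psi_{RS}$, bound the complexity of the cone $\psi_{RS}$ via a convex hull computation in $\real^{d-1}$, and then apply the product bound for the Minkowski sum. Your remark about handling the unbounded summand via truncation is a small extra care that the paper glosses over, but otherwise the arguments are identical.
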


\section{Proof of Lemmas~\ref{Lemma: Effect_Rel_set}~and~\ref{Lemma: ver_sig_match}}
\label{app:ver_sig_match}

We restate Lemma~\ref{Lemma: Effect_Rel_set} and give its proof.

\vspace{10pt}

\noindent\textbf{Lemma~\ref{Lemma: Effect_Rel_set}}~~\emph{For all $j \in [l-1]$, there exist points $p,q \in u_ju_{j+1}$ such that $p \leq_{\sigma} q$ and both $d(p,c_{j,1})$ and $d(q,c_{j,2})$ are at most $\sqrt{d}\eps\delta_{\max}$.}

\begin{proof}
	We enforce that $u_{j+1} \in \gamma_{j+1}$ and $u_j \in F(c_{j,2},u_{j+1}) \cap \gamma_j$ in the backward extraction.  It follows that $u_j \in \gamma_j$ and $u_ju_{j+1} \cap c_{j,2} \not= \emptyset$.  Take any point $q \in u_ju_{j+1} \cap c_{j,2}$.  The forward phase ensures that $\gamma_j \subseteq F(c_{j,1},c_{j,2})$, so there is a point $y \in c_{j,2}$ such that $yu_j \cap c_{j,1} \not= \emptyset$.  As $q$ and $y$ belong to $c_{j,2}$, $d(q,y) \leq \sqrt{d}\eps\delta_{\max}$.   Take any point $x \in yu_j \cap c_{j,1}$.  By a linear interpolation between $yu_j$ and $qu_j$, $x$ is mapped to a point $p \in qu_j$ such that $d(p,x) \leq d(q,y) \leq \sqrt{d}\eps\delta_{\max}$.  Clearly, $p \leq_{\sigma} q$.
\end{proof} 

We restate Lemma~\ref{Lemma: ver_sig_match} and give its proof.

\vspace{10pt}

\noindent\textbf{Lemma~\ref{Lemma: ver_sig_match}.}~~\emph{Take any $i \in [n]$ and any $a \in [m-1]$.  Suppose that $[k_1,k_2] = \{ j : \pi_i(a) < j \leq \pi_i(a+1)\}$ is non-empty.  There exist points $p,q \in \tau_{i,a}$ such that $d_F(pq,\sigma[u_{k_1},u_{k_2}]) \leq \delta_i + 4\sqrt{d}\eps\delta_{\max}$.}

\vspace{2pt}

\begin{proof}
	 To prove the lemma, we specify a matching $g$ from $\sigma[u_{k_1},u_{k_2}]$ to a segment in $\tau_{i,a}$.  The idea is to define $g$ on the vertices $u_k$ for $k \in [k_1,k_2]$ first and then use linear interpolation to extend $g$ to other points in $\sigma[u_{k_1},u_{k_2}]$.  
	
	Let $J_a = \{j : \pi_i(a) < j \leq \pi_i(a+1) \, \wedge \, \mathcal{A}[j] \not= \mathrm{null}\}$.  By constraint~3(b), there exist points $\{p_j \in \tau_{i,a} : j \in J_a\}$ such that these $p_j$'s appear in order along $\tau_{i,a}$, and every $p_j$ is within a distance of $\delta_i + 2\sqrt{d}\eps\delta_{\max}$ from every vertex of $\mathcal{A}[j]$.  
	We first define $g(u_k)$ for $k \in [k_1,k_2]$ inductively.  Consider the base case of $u_{k_1}$.  If $\mathcal{A}[k_1] \not= \mathrm{null}$, let $g(u_{k_1}) = p_1$; otherwise, let $g(u_{k_1}) = u_{k_1} \!\downarrow\! \tau_{i,a}$ which lies on $\tau_{i,a}$ by constraint~3(c)(i).  In general, take any $k \in [k_1+1,k_2]$.  If $\mathcal{A}[k] \not= \mathrm{null}$, define $g(u_k)$ to be the maximum of $g(u_{k-1})$ and $p_k$ according to $\leq_{\tau_i}$.  If $\mathcal{A}[k] = \mathrm{null}$, define $g(u_k)$ to be the maximum of $g(u_{k-1})$ and $u_k \!\downarrow\! \tau_{i,a}$ according to $\leq_{\tau_i}$.  
	
	The above definition of $g(u_k)$ automatically guarantees that $g(u_{k-1}) \leq_{\tau_i} g(u_k)$ for $k \in [k_1+1,k_2]$.  We need to bound the distance between $u_k$ and $g(u_k)$.   For every $k \in [k_1,k_2]$, if $\mathcal{A}[k] \not= \mathrm{null}$, let $y_k = p_k$; otherwise, let $y_k = u_k \!\downarrow\! \tau_{i,a}$.   
	
	We first prove a claim that for all $k \in [k_1+1,k_2]$, $y_{k-1}\leq_{\tau_i} y_k $ or $d(y_{k-1},y_k) \leq \sqrt{d}\eps\delta_{\max}/l$.  Suppose that $\mathcal{A}[k-1] \not= \mathrm{null}$.   So $y_{k-1} = p_{k-1}$.  If $\mathcal{A}[k] \not= \mathrm{null}$, then $y_k = p_k$ and constraint~3(b)(ii) ensures that $y_{k-1} \leq_{\tau_i} y_k$.  If $\mathcal{A}[k] = \mathrm{null}$, then $y_k = u_k \!\downarrow\! \tau_{i,a}$.  By constraint~3(b)(ii), $y_{k-1} = p_{k-1}$ is within a distance of $\delta_i + 2\sqrt{d}\eps\delta_{\max}$ from any vertex $x$ of $\mathcal{A}[k-1]$, which implies that $y_{k-1} \in \mathcal{A}[k-1] \oplus B_{\delta_i + 3\sqrt{d}\eps\delta_i}$.  In this case, constraint~3(c)(iii) ensures that $y_{k-1} \leq_{\tau_i} y_k$.  Suppose that $\mathcal{A}[k-1] = \mathrm{null}$.  If $\mathcal{A}[k] \not= \mathrm{null}$, by constraint~3(b)(ii), $y_k = p_k$ is within a distance of $\delta_i + 2\sqrt{d}\eps\delta_{\max}$ from any vertex $x$ of $\mathcal{A}[k]$, which implies that $y_k \in \mathcal{A}[k] \oplus B_{\delta_i + 3\sqrt{d}\eps\delta_i}$.  In this case, constraint~3(c)(v) ensures that $y_{k-1} \leq_{\tau_i} y_k$.  The remaining case is that both $\mathcal{A}[k-1]$ and $\mathcal{A}[k]$ are null.  Constraints~3(c)(ii) and~3(c)(iv) are what we need, but we did not check them exactly.  When computing $\gamma_{k-1}$, we check that $\gamma_{k-1} \subseteq c_{k-1,2} \oplus (-\Pi_{i,a})$, which ensures that $y_{k-1} = u_{k-1} \!\downarrow\! \tau_{i,a}$ does not strictly follow $c_{k-1,2} \!\downarrow\! \tau_{i,a}$ in the order $\leq_{\tau_i}$.  Similarly, when computing $\gamma_k$, we check that $\gamma_k \subseteq c_{k-1,2} \oplus \Pi_{i,a}$, which ensures that $y_k = u_k \!\downarrow\! \tau_{i,a}$ does not strictly precede $c_{k-1,2} \!\downarrow\! \tau_{i,a}$ in the order $\leq_{\tau_i}$.  As a result, if $y_{k-1} \leq_{\tau_i} y_k$ does not hold, both $y_{k-1}$ and $y_k$ must belong to $c_{k-1,2} \!\downarrow\! \tau_{i,a}$.  The side length of $c_{k-1,2}$ is at most $\eps\delta_{\max}/l$, which implies that $d(y_{k-1},y_k) \leq \sqrt{d}\eps\delta_{\max}/l$.  This completes the proof of our claim.
	
	Next, we prove by induction a second claim that for all $k \in [k_1,k_2]$, $d(y_k,g(u_k)) \leq (k-k_1)\sqrt{d}\eps\delta_{\max}/l$.  The base case of $k = k_1$ is easy because $y_{k_1} = g(u_{k_1})$ by definition.  Assume that the claim is true for some $k \in [k_1,k_2-1]$.   By definition, $g(u_{k+1})$ is equal to either $y_{k+1}$ or $g(u_k)$.  If $g(u_{k+1}) = y_{k+1}$, we are done.  Suppose that $g(u_{k+1}) = g(u_k)$.  The point $y_{k+1}$ must strictly precede $g(u_k)$ with respect to $\leq_{\tau_i}$ in order that we do not set $g(u_{k+1})$ to be $y_{k+1}$.  By induction assumption, $d(y_k,g(u_k)) \leq (k-k_1)\sqrt{d}\eps\delta_{\max}/l$.  If $y_{k} \leq_{\tau_i} y_{k+1}$, then $y_k \leq_{\tau_i} y_{k+1} \leq_{\tau_i} g(u_k)$ and $d(y_{k+1},g(u_{k+1})) = d(y_{k+1},g(u_k)) \leq d(y_k,g(u_k)) \leq (k-k_1)\sqrt{d}\eps\delta_{\max}/l$.  On the other hand, if $y_{k+1} \leq_{\tau_i} y_k$, then $d(y_{k+1},g(u_{k+1})) = d(y_{k+1},g(u_k)) \leq d(y_{k+1},y_k) + d(y_k,g(u_k))$, which by our first claim and the induction assumption is at most $(k+1-k_1)\sqrt{d}\eps\delta_{\max}/l$.   This completes the proof of our second claim.
	
	We finish bounding the distance between $u_k$ and $g(u_k)$ as follows.  Suppose that $\mathcal{A}[k] \not= \mathrm{null}$.  Let $x$ be any vertex of $\mathcal{A}[k]$.  We have $d(u_k,y_k) = d(u_k,p_k) \leq d(u_k,x) + d(p_k,x) \leq \sqrt{d}\eps\delta_{\max} + d(p_k,x)$ which by constraint~3(b)(ii) is at most $\delta_i + 3\sqrt{d}\eps\delta_{\max}$.   If $\mathcal{A}[k] = \mathrm{null}$, then $d(u_k,y_k) = d(u_k, u_k \!\downarrow\! \tau_{i,a})$ which by constraint~3(c)(i) is at most $\delta_i + \sqrt{d}\eps\delta_{\max}$.    As a result, $d(u_k,g(u_k)) \leq d(u_k,y_k) + d(y_k,g(u_k)) \leq \delta_i + 3\sqrt{d}\eps\delta_{\max} + (k-k_1)\sqrt{d}\eps\delta_{\max}/l \leq \delta_i +  4\sqrt{d}\eps\delta_{\max}$.
\cancel{
	Assume that the bounds hold for $u_{k-1}$ for some $k \in [k_1,k_2-1]$.  If $\mathcal{A}[k] \not= \mathrm{null}$, let $y_k = p_k$; otherwise, let $y_k = u_k \!\downarrow\! \tau_{i,a}$.  Define $y_{k-1}$ similarly.

	We now return to bounding the distance between $u_k$ and $g(u_k)$.  By definition, $g(u_k)$ is equal to either $y_k$ or $g(u_{k-1})$.  If $g(u_k) = y_k$, we can analyze as in the base case of $k = k_1$ to conclude that $d(u_k,g(u_k)) \leq \delta_i + 3\sqrt{d}\eps\delta_{\max}$.  Suppose that $g(u_k) = g(u_{k-1})$.

	For every $k \in [k_1,k_2]$, if $\mathcal{A}[k] \not= \mathrm{null}$, define $g(u_k) = p_k$; otherwise, define $g(u_k) = u_k \!\downarrow\! \tau_{i,a}$ which lies on $\tau_{i,a}$ by constraint~3(c)(i).  We bound $d(u_k,g(u_k))$ as follows.  If $g(u_k) = u_k\!\downarrow\!\tau_{i,a}$, constraint~3(c)(i) implies that $d(u_k,g(u_k)) \leq \delta_i + \sqrt{d}\eps\delta_{\max}$.  If $g(u_k) = p_k$, let $x$ be any vertex of the cell $\mathcal{A}[k]$ that contains $u_k$, we have $d(u_k,p_k) \leq d(p_k,x) + d(x,u_k) \leq \delta_i + 2\sqrt{d}\eps\delta_{\max} + \sqrt{d}\eps\delta_{\max} = \delta_i + 3\sqrt{d}\eps\delta_{\max}$.
	
	Next, we argue that for all $j, k \in [k_1,k_2]$, $g(u_j) \leq_{\tau_i} g(u_k) \iff j \leq k$.  The points $\{p_j \in \tau_{i,a} : j \in J_a\}$ are in order along $\tau_{i,a}$ by constraint~3(b)(i).  Take any index $k \in [l-1]$ such that $\pi_i(a) < k-1 < k \leq \pi_i(a+1)$.
	
	Suppose $\mathcal{A}[k] \not= \mathrm{null}$.  Then, $g(u_k) = p_k$.  If $g(u_{k-1}) = p_{k-1}$, then clearly $g(u_{k-1}) \leq_{\tau_i} g(u_k)$.  Suppose that $g(u_{k-1}) = u_{k-1} \!\downarrow\!\tau_{i,a}$.   So $\mathcal{A}[k-1] = \mathrm{null}$.  By constraints~3(b), $g(u_k) = p_k$ is within a distance of $\delta_i + 2\sqrt{d}\eps\delta_{\max}$ from every vertex of the cell $\mathcal{A}[k]$.  Since $\mathcal{A}[k-1] = \mathrm{null}$, by constraint~3(c)(iii), $g(u_{k-1}) = u_{k-1}\!\downarrow\!\tau_{i,a} \leq_{\tau_i} \tau_{i,a} \cap (\mathcal{A}[k] \oplus B_{\delta_i + 3\sqrt{d}\eps\delta_{\max}})$ which contains $g(u_k)$.  
	
	Suppose that $\mathcal{A}[k] = \mathrm{null}$ and $g(u_{k-1}) = p_{k-1}$.  Then, $g(u_k) = u_k \!\downarrow\!\tau_{i,a}$.  As $g(u_{k-1}) = p_{k-1}$, we have $\mathcal{A}[k-1] \not= \mathrm{null}$; by constraint~3(b), $g(u_{k-1})$ is within a distance of $\delta_i + 2\sqrt{d}\eps\delta_{\max}$ from every vertex of the cell $\mathcal{A}[k-1]$.  Therefore, $g(u_{k-1}) \in \tau_{i,a} \cap (\mathcal{A}[k-1] \oplus B_{\delta_i + 3\sqrt{d}\eps\delta_{\max}})$ which precedes $g(u_k)$ in the order $\leq_{\tau_i}$ by constraint~3(c)(v).
	
	If $g(u_{k-1}) = u_{k-1}\!\downarrow\!\tau_{i,a}$, then $g(u_{k-1}) \leq_{\tau_i} g(u_k)$ by constraint~3(c)(ii).  
	
	This shows that for all $j, k \in [k_1,k_2]$, $g(u_j) \leq_{\tau_i} g(u_k) \iff j \leq k$.  
	
	We use linear interpolation to extend the matching $g$ from $\sigma[u_{k_1},u_{k_2}]$ to all points in the segment $g(u_{k_1})g(u_{k_2}) \subseteq \tau_{i,a}$.  By linear interpolation, the distance realized by $g$ is no more than $\delta_{i}+3\sqrt{d}\epsilon\delta_{\max}$.
}
\end{proof}

\cancel{
\section{Construction time of $\pmb{\gamma_k}$}
\label{app:forward}

It suffices to show that $\gamma_k$ can be computed and checked in $O(n^{(d-1)(d+2)/4} + \eps^{-d^2/2}n)$ time.  The total time of computing $\gamma_1,\ldots,\gamma_l$ is thus $O(ln^{(d-1)(d+2)/4} + l\eps^{-d^2/2}n)$.

It takes $O(1)$ time to compute $\gamma_1 = F(c_{1,1},c_{1,2}) \cap s_1 \cap \mathcal{A}[1]$ as both $F(c_{1,1},c_{1,2})$ and $\mathcal{A}[1]$ have $O(1)$ sizes.    Since $\gamma_j$ is a line segment for $j \in [l]$,  it also takes $O(1)$ time to compute $\gamma_l = F(c_{l-1,2},\gamma_{l-1}) \cap s_l \cap \mathcal{A}[l]$ and $\gamma_k = F(c_{k-1,2}, \gamma_{k-1}) \cap F(c_{k,1},c_{k,2}) \cap s_k \cap \mathcal{A}[k]$ for each $k \in [2,l-1]$, whenever $\mathcal{A}[k] \not= \mathrm{null}$.  Nevertheless, it still takes $O(n)$ time to verify $\gamma_k$ for each $k \in [2,l-1]$ such that $\mathcal{A}[k] \not= \mathrm{null}$ as described on page~\pageref{case1}.

It remains to analyze the construction time of $\gamma_k$ for each $k \in [2,l-1]$ such that $\mathcal{A}[k] = \mathrm{null}$.

Case~2:  The complexity and construction time of $H_{i,a_i}(\tau_{i,a_i})$ are $O(\eps^{-d^2/2})$.  So $\bigcap_{i \in N_k} H_{i,a_i}(\tau_{i,a_i})$ is the intersection of $O(\eps^{-d^2/2}n)$ halfspaces.  Therefore, we can initialize $\gamma_k = s_k \cap \bigcap_{i \in N_k} H_{i,a_i}(\tau_{i,a_i})$ in $O(\eps^{-d^2/2}n)$ time by intersecting $s_k$ with each of these halfspaces.

For case~2.1, we may need to compute $\gamma_k = \gamma_k \cap (\gamma_{k-1} \oplus D_k \cap c_{k-1,2}) \oplus D_k$.  The cone of directions $D_k$ is the intersection of $|N_k| \leq n$ halfspaces, so $D_k$ has $O(n^{(d-1)/2})$ size and can be computed in $O(n^{(d-1)/2})$ time.  It implies that the complexity and construction time of $\gamma_{k-1} \oplus D_k$ are $O(n^{(d-1)/2})$.   So $\gamma_{k-1} \oplus D_k \cap c_{k-1,2}$ is the intersection of $O(n^{(d-2)/2}) + 2d$ halfspaces, which has $O(n^{d(d-1)/4})$ size and can be computed in $O(n^{d(d-1)/4})$ time.  As a result, $(\gamma_{k-1} \oplus D_k \cap c_{k-1,2}) \oplus D_k$ has $O(n^{d(d-1)/4 + (d-1)/2}) = O(n^{(d-1)(d+2)/4})$ size and can be constructed in $O(n^{(d-1)(d+2)/4})$ time.  Afterwards, we can just intersect $\gamma_k$ with each of the halfspaces that bound $(\gamma_{k-1} \oplus D_k \cap c_{k-1,2}) \oplus D_k$.  In all, handling case~2.1 takes $O(n^{(d-1)(d+2)/4})$ time.

For case~2.2, we may need to compute $\gamma_k = \gamma_k \cap \bigcap_{i \in N_k} (p_{i,a_i} + \Pi_{i,a_i})$.    For all $i \in N_k$, we compute $\mathcal{A}[k-1] \oplus B_{\delta_i + \sqrt{d}\eps\delta_i})$ in $O(1)$ time and then compute the maximum point $p_{i,a_i}\in \tau_{i,a_i} \cap (\mathcal{A}[k-1] \oplus B_{\delta_i + \sqrt{d}\eps\delta_i})$ in $O(1)$ time.  Afterwards, we simply intersect $\gamma_k$ with each halfspace $p_{i,a_i} + \Pi_{i,a_i}$.  Thus, handling~2.2 takes $O(n)$ time.

The handling of case~2.3 is similar to that of case~2,1, and it can also be done in $O(n^{(d-1)(d+2)/4})$ time.  The handling of case~2.4 is similar to that of case~2.2, and it also takes $O(n)$ time.

In summary, it takes $O(n^{(d-1)(d+2)/4} + \eps^{-d^2/2}n)$ time to compute and check $\gamma_k$.
}

\section{Number of useful input curves}
\label{app:reduce}

\begin{lemma}\label{Lemma: subset_conf}
	Let $\Psi_l = (\mathcal{P},\mathcal{C},\mathcal{S},\mathcal{A})$ be a configuration that satisfies constraint~1.  If the forward construction returns non-empty $\gamma_1,\ldots,\gamma_l$ with respect to $\Psi_l$,  there exist a subset $T' \subseteq T$ of size at most $5l$ and a configuration $\Psi'_l = (\mathcal{P}',\mathcal{C},\mathcal{S},\mathcal{A})$ for the problem $Q(T',\Delta',\ell)$, where $\Delta' = \{\delta_i \in \Delta : \tau_i \in T'\}$, such that $\Psi'_l$ satisfies constraint~1, and the forward construction with respect to $\Psi'_l$ returns $\gamma_1,\ldots,\gamma_l$.
\end{lemma}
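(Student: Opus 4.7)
The plan is to select, for each $\gamma_k$, a small constant number of input curves whose induced linear constraints pin down the two endpoints of $\gamma_k$ on the line supporting $s_k$, and then take $T'$ to be the union of these selections across $k \in [l]$. The crucial observation is that each $\gamma_k$ lies on $s_k \in \mathcal{L}$ and is therefore a one-dimensional interval with at most two endpoints.

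Inspecting the forward construction of $\gamma_k$, its defining intersection decomposes into two classes: (i) sets derived purely from $\mathcal{C}$, $\mathcal{S}$, $\mathcal{A}$ together with $\gamma_{k-1}$---namely $F(c_{k-1,2},\gamma_{k-1})$, $F(c_{k,1},c_{k,2})$, $s_k$, and $\mathcal{A}[k]$---which depend on no individual input curve; and (ii) sets each contributed by a single input curve $\tau_i$, namely the cylinder $H_{i,a_i}$ in case~2 and the halfspaces $c_{k-1,2}\oplus\Pi_{i,a_i}$, $p_{i,a_i}+\Pi_{i,a_i}$, $c_{k,2}\oplus(-\Pi_{i,a_i})$, and $q_{i,a_i}-\Pi_{i,a_i}$ from cases~2.1--2.4. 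Because $\gamma_k$ is a one-dimensional interval, each of its two endpoints on $s_k$ is pinned down by a single tight constraint; I select into $T'$ the underlying input curve, if any, that provides that tight constraint. Proceeding inductively from $k=1$ to $k=l$, once $\gamma_1,\ldots,\gamma_{k-1}$ are preserved by the curves already in $T'$ the set $F(c_{k-1,2},\gamma_{k-1})$ is unchanged, so only the two endpoint-tight curves for $\gamma_k$ need to be added; a coarser accounting that allows one curve per type of case~2 halfspace (cylinder plus cases~2.1--2.4) yields the advertised bound of at most $5l$ curves. Set $\mathcal{P}'$ to be the restriction of $\mathcal{P}$ to the indices $i$ with $\tau_i\in T'$, and $\Psi'_l = (\mathcal{P}',\mathcal{C},\mathcal{S},\mathcal{A})$.

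It remains to verify (a) that $\Psi'_l$ satisfies constraint~1 with respect to $Q(T',\Delta',\ell)$, which is immediate since constraint~1 is a universal statement over $i\in [n]$ and the restriction to $i$ with $\tau_i\in T'$ preserves it, and (b) that the forward construction on $\Psi'_l$ returns the same $\gamma_1,\ldots,\gamma_l$. For (b), by the inductive hypothesis $\gamma_{k-1}$ is preserved, hence $F(c_{k-1,2},\gamma_{k-1})$ is unchanged. The remaining per-input-curve constraints used in the construction of $\gamma_k$ with respect to $T'$ form a sub-family of those used for $T$, so $\gamma_k$ can only grow; yet the specific curves whose constraints are tight at each endpoint of the original $\gamma_k$ were placed in $T'$ by construction, so the two endpoints---and therefore the entire segment---coincide. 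The per-input-curve side checks in the forward phase (constraint~3(b), and the relaxed versions of 3(c)) only become easier when restricted from $[n]$ to the $i$ with $\tau_i\in T'$, so they still pass.

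The main obstacle is the case-by-case bookkeeping required to confirm that every tight endpoint constraint arises from at most one input curve in one of the five case~2 families, and that selecting these curves is consistent across the inductive steps; the boundary indices $k=1$ and $k=l$, where one of $F(c_{k-1,2},\gamma_{k-1})$ or $F(c_{k,1},c_{k,2})$ is absent, are handled by the same argument with one fewer constraint type in play.
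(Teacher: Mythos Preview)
Your endpoint-selection idea is right and matches the second phase of the paper's argument, but you have missed the paper's first phase entirely, and without it $\Psi'_l$ need not be a configuration for $Q(T',\Delta',\ell)$ at all. The discretization structures $\mathcal{L}$, $\grid_1$, $\grid_2$ are defined \emph{relative to} the current input set: $\mathcal{L}$ is built from $\tau_{\min}$; each cell $c_{j,1},c_{j,2}\in\grid_1$ lives in $G(v_{i,a}+B_{\delta_i+\sqrt{d}\eps\delta_i},\eps\delta_i/l)$ for some specific $\tau_i$; and each non-null $\mathcal{A}[j]\in\grid_2$ lives in $G(v_{i,a}+B_{9\sqrt{d}\delta_{\max}},\eps\delta_{\max})$ for some specific $\tau_i$. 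If those generating curves are absent from $T'$, the tuple $(\mathcal{P}',\mathcal{C},\mathcal{S},\mathcal{A})$ is simply not a member of the configuration space for $Q(T',\Delta',\ell)$, so the lemma's conclusion is vacuous. The paper therefore first seeds $T'$ with $\tau_{\min}$, $\tau_{\max}$, the at most $2(l-1)$ curves generating the cells in $\mathcal{C}$, and the at most $l$ curves generating the non-null entries of $\mathcal{A}$---at most $3l$ curves in all---before adding the at most $2l$ endpoint-witness curves that you describe.

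There is a second, related gap: several quantities in the forward construction depend on $\delta_{\max}$ taken over the current input set, namely the cylinder radii $\delta_i+\sqrt{d}\eps\delta_{\max}$ in Case~2, the radii in the constraint~3(a) and~3(b) checks, and the side length and ball radius defining $\grid_2$. If $\tau_{\max}\notin T'$ then $\delta_{\max}$ drops when passing from $T$ to $T'$; the cylinders $H_{i,a_i}$ shrink, so even the tight constraint you retained becomes strictly tighter, and the resulting $\gamma_k$ would not coincide with the original. Your ``coarser accounting'' of five curves per $k$ happens to arrive at the number $5l$, but for the wrong reason: only two endpoint-witness curves per $k$ are ever needed, and the remaining $3l$ of the budget must be spent on the generating curves and on $\tau_{\min},\tau_{\max}$.
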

\begin{proof}
	Let $\Psi_l = (\mathcal{P},\mathcal{C},\mathcal{S},\mathcal{A})$ be a configuration of $\Q(T, \Delta,\ell)$ that satisfies constraint~1.  We construct the required subset $T'$ of $T$ incrementally.  Initialize $T'  = \{\tau_{\min}, \tau_{\max}\}$, where $\min = \mathrm{argmin}_{i \in [n]} \delta_i$ and $\max = \mathrm{argmax}_{i \in [n]}\delta_i$.
	
	
	For every pair $(c_{j,1},c_{j,2}) \in \mathcal{C}$, let $i$ and $r$ be two indices in $[n]$ such that $c_{j,1} \in \bigcup_{a \in [m]} G(v_{i,a} + B_{\delta_i+\sqrt{d}\eps\delta_i},\eps\delta_i/l)$ and $c_{j,2} \in \bigcup_{a \in [m]} G(v_{r,a} + B_{\delta_{r}+\sqrt{d}\eps\delta_r},\eps\delta_{r}/l)$, insert $\tau_{i}$ and $\tau_{r}$ into $T'$ if they do not belong to $T'$.  
	
	For every $j \in [l]$ such that $\mathcal{A}[j] \not= \mathrm{null}$, pick an index $i \in [n]$ such that $\mathcal{A}[j]$ is a cell in $\bigcup_{a\in[m]} G(v_{i,a} + B_{9\sqrt{d}\delta_{\max}}, \eps\delta_{\max})$, insert $\tau_i$ into $T'$ if $\tau_i$ does not belong to $T'$.
	
	There are at most $3l$ curves in $T'$ so far.  We are not done with growing $T'$ yet for defining the configuration $\Psi'_l $; nevertheless, we can now define $\mathcal{C}$, $\mathcal{S}$, and $\mathcal{A}$ to be the second, third, and fourth components of $\Psi'_l$, respectively, because $T'$ already includes the necessary input curves for generating the segments in $\mathcal{S}$ and the cells in $\mathcal{C}$ and $\mathcal{A}$.
	
	We expand $T'$ further as follows. One important observation is that $\gamma_1,\ldots,\gamma_l$ are segments.  Since all components of $\Psi_l$ and $\Psi'_l$ are the same except for $\mathcal{P}$ and $\mathcal{P}'$, the definition of $\gamma_1$ with respect to $\Psi_l$ will also valid with respect to $\Psi'_l$.  Consider $\gamma_k$ for any $k > 1$.  Assume inductively that the definitions of $\gamma_j$ for all $j \in [1,k-1]$ with respect to $\Psi_l$ are also valid with respect to $\Psi'_l$.  If the definition of $\gamma_k$ with respect to $\Psi_l$ is not valid with respect to $\Psi'_l$, the forward construction with respect to $\Psi_l$ must produce an endpoint of $\gamma_k$ by clipping $s_k$ with a linear constraint that is induced by a curve $\tau_i \in T \setminus T'$.  Insert $\tau_i$ into $T'$ in this case.  In all, we insert at most $2l$ more curves into $T'$, making its size at most $5l$.
	
	Finally, we define $\mathcal{P}' = (\pi_i)_{\tau_i \in T'}$, completing the definition of $\Psi'_l$.
\end{proof}

\section{Proof of Lemma~\ref{lem:first-level}}
\label{app:first-level}
	
We restate Lemma~\ref{lem:first-level} and give its proof.

\vspace{10pt}
	
\noindent \textbf{Lemma~\ref{lem:first-level}}~~\emph{Take any subset $S  \subseteq T$ with at least $n/\beta$ curves for any $\beta \geq 1$.  Let $\Delta = \{\delta_1,\ldots,\delta_{|S|}\}$ be a set of error thresholds for $S$ such that $Q(S,\Delta,\ell)$ has a solution.   Relabel elements, if necessary, so that $\delta_1 \leq  \ldots \leq \delta_{|S|}$.  For $i \in \bigl[|S|\bigr]$, let $S_i = \{\tau_i, \tau_{i+1},\ldots \}$ and let $\Delta_i = \{\delta_i,\delta_{i+1},\ldots\}$.   Take any $\alpha, \eps \in (0,1)$ and any $r \in \bigl[\frac{\eps|S|}{5\ell}\bigr]$.  There exists $\mathcal{H}_r \subseteq 2^{S_r}$ such that $|\mathcal{H}_r| = 3l-1$ for some $l \in [\ell]$, every subset in $\mathcal{H}_r$ has $\frac{\eps|S|}{5\ell}$ curves, and for every subset $R \subseteq S_r$, if $\tau_r \in R$ and $R \cap H \not= \emptyset$ for all $H \in \mathcal{H}_r$, there exist a configuration $\Psi = (\mathcal{P},\mathcal{C},\mathcal{S},\mathcal{A}) \in C(\overline{\mathcal{H}}_r \cup R ,h,\alpha)$ and a curve $\sigma = (w_1,\ldots,w_h)$ for some $h \in [l]$, where $\overline{\mathcal{H}}_r = S_r \setminus \bigcup_{H \in \mathcal{H}_r} H$, that satisfy the following properties.}
	\begin{enumerate}[(i)]
		\item \emph{$\Psi$ and $\sigma$ satisfy constraints~1--3 with respect to $\overline{\mathcal{H}}_r \cup R$. }
		\item \emph{There exists a configuration in $C(R,h,\alpha)$ that shares the components $\mathcal{C}$, $\mathcal{S}$, and $\mathcal{A}$ with $\Psi$.}
	\end{enumerate}

	
\begin{proof}
	Let $\sigma$ be a solution of $Q(S,\Delta,\ell)$, which also makes $\sigma$ a solution for $Q(S_r,\Delta_r,\ell)$.  For every $\tau_i \in S_r$, let $g_i$ be a Fr\'{e}chet matching from $\tau_i$ to $\sigma$.  As at the beginning of the proof of Lemma~\ref{Lemma: valid_cons_set}, we shorten $\sigma$ to $(u_1,u_2,\ldots,u_l)$ for some $l \in [\ell]$, if necessary, such that for all $j \in [l-1]$, $g_i(v_{i,a}) \cap u_ju_{j+1} \not= \emptyset$ for some $i \in [n]$ and some $a \in [m]$.  We first construct the subsets $H_1, \ldots, H_{3l-1}$ in $\mathcal{H}_r$.  Afterwards, we show how to modify $\sigma$ and construct the configuration $\Psi = (\mathcal{P},\mathcal{C},\mathcal{S},\mathcal{A}) \in C(\overline{\mathcal{H}}_r \cup R,h,\alpha)$ to satisfy the lemma.
	
	\vspace{6pt}
	
	\noindent \textbf{Define $\pmb{H_1,\ldots,H_{3l-1}}$.}~For every $j \in [l-1]$, let $I_j = \{ \tau_i \in S_r : \exists \, a \in [m] \,\, \text{s.t.} \,\, g_i(v_{i,a}) \cap u_ju_{j+1} \not= \emptyset \}$, which is non-empty as explained in the first paragraph.  
	
	If $|I_j| < \frac{\eps |S|}{5\ell}$, define $H_{2j-1}$ and $H_{2j}$ so that both include all curves in $I_j$ and another $\frac{\eps |S|}{5\ell} - |I_j|$ arbitrary curves from $S_r \setminus I_j$.  Suppose that $|I_j| \geq \frac{\eps |S|}{5\ell}$.  Define $H_{2j-1}$ to be the subset of $I_j$ that induce the $\frac{\eps |S|}{5\ell}$ minimum points in $\bigl\{\min(g_i(v_{i,a}) \cap u_ju_{j+1}) : \tau_i \in I_j, \, a \in [m], \, g_i(v_{i,a}) \cap u_ju_{j+1} \not= \emptyset\bigr\}$ with respect to $\leq_{\sigma}$.   Similarly, define $H_{2j}$ to be the subset of $I_j$ that induce the $\frac{\eps |S|}{5\ell}$ maximum points in $\bigl\{\max(g_i(v_{i,a}) \cap u_ju_{j+1}) : \tau_i \in I_j, \, a \in [m], \, g_i(v_{i,a}) \cap u_ju_{j+1} \not= \emptyset\bigr\}$ with respect to $\leq_{\sigma}$.   
	
	The subsets $H_{2l-1},\ldots, H_{3l-2}$ are constructed as follows.  Recall that the configurations include segments from a set $\mathcal{L}$ that are constructed using the curve $\tau_r \in S_r$ that has the minimum error threshold.   
	For every $j \in [l]$, find the segment in $\mathcal{L}$ nearest to $u_j$, let $w_j$ be the point in this segment that is nearest to $u_j$, and let $V_j = \bigl\{\min_{a \in [m]} d(w_j,v_{i,a}) : \tau_i \in S_r \bigr\}$.  Define $H_{2l-2+j}$ to be the set of curves in $S_r$ that induce the $\frac{\eps|S|}{5\ell}$ minimum distances in $V_j$.
	
	The last subset $H_{3l-1}$ consists of the curves in $S_r$ with the $\frac{\eps|S|}{5\ell}$ largest error thresholds.  
	
	\vspace{6pt}
	
	\noindent \textbf{Modify $\pmb{\sigma}$ and define $\pmb{\Psi}$.}~Let $R$ be a subset of $S_r$ that contains $\tau_r$ and intersects every subset in $\mathcal{H}_r$.   Since $\overline{\mathcal{H}}_r \cup R \subseteq S_r$, $\sigma$ is a solution for $\overline{\mathcal{H}}_r \cup R$ too.  However, as we restrict from $S_r$ to $\overline{\mathcal{H}}_r \cup R$, it may no longer be true that for all $j \in [l-1]$, there exist $\tau_i \in \overline{\mathcal{H}}_r \cup R$ and $a \in [m]$ such that $g_i(v_{i,a}) \cap u_ju_{j+1} \not= \emptyset$.  This is a hindrance to defining the desired configuration $\Psi$.   We encounter the same issue in proving Lemma~\ref{Lemma: valid_cons_set}, so we can apply the same shortcutting technique in the proof of Lemma~\ref{Lemma: valid_cons_set}.  We repeat the details below because we need to argue later that a configuration in $C(R,h,\alpha)$ for some $h\in [l]$ shares the second, third and fourth components with $\Psi$.
	
	Suppose that this requirement is not met for $u_ju_{j+1}$.  Then, for every $\tau_i \in \overline{\mathcal{H}}_r \cup R$, there exists $a_i \in [m-1]$ such that $u_ju_{j+1} \subseteq g_i(\mathrm{int}(\tau_{i,a_i}))$.  Let $p$ be the maximum of $\bigl\{ \max(g_i(v_{i,a_i})) : \tau_i \in \overline{\mathcal{H}}_r \cup R \bigr\}$ with respect to $\leq_{\sigma}$.  Let $q$ be the minimum of $\bigl\{ \min(g_i(v_{i,a_i+1})) : \tau_i \in \overline{\mathcal{H}}_r \cup R \bigr\}$ with respect to $\leq_{\sigma}$.  Our choice of $p$ and $q$ means that $p \in g_s(v_{s,a_s})$ and $q \in g_t(v_{t,a_{t}+1})$ for some possibly non-distinct $\tau_s, \tau_t \in \overline{\mathcal{H}}_r \cup R$, and $\mathrm{int}(\sigma[p,q]) \subseteq g_i(\mathrm{int}(\tau_{i,a_i}))$ for all $\tau_i \in \overline{\mathcal{H}}_r \cup R$.  We update $\sigma$ by substituting $\sigma[p,q]$ with the edge $pq$, possibly making $p$ and $q$ new vertices of $\sigma$.  The number of edges of $\sigma$ is not increased by the replacement; it  may actually be reduced.  For all $\tau_i \in \overline{\mathcal{H}}_r \cup R$, we update $g_i$ by a linear interpolation along $pq$; since $\mathrm{int}(\sigma[p,q]) \subseteq g_i(\mathrm{int}(\tau_{i,a_i}))$, the replacement of $\sigma[p,q]$ and the linear interpolation ensure that after the update, $g_i$ remains a matching and $d_{g_i}(\tau_i,\sigma) \leq \delta_i$.  Our choice of $p$ and $q$ means that the update does not affect the subset of vertices of any $\tau_i$ that are matched by $g_i$ to the edges of $\sigma$ other than $pq$.  The update also ensures that $pq$ will not trigger another shortcutting, and $pq$ will not be shortened by other shortcuttings.  If necessary, we repeat the above to convert $\sigma$ to $(u'_1,\ldots,u'_h)$ for some $h \in [l]$ such that for all $j \in [h-1]$, there exist $\tau_i \in \overline{\mathcal{H}}_r \cup R$ and $a \in [m]$ such that $g_i(v_{i,a}) \cap u'_ju'_{j+1} \not= \emptyset$.
	
	Next, we snap the vertices of $\sigma$.  
	By assumption, the curve $\tau_r$ with the minimum error threshold in $S_r$ belongs to $R$.
	We have $d_{g_r}(\sigma,\tau_r) \leq \delta_r$ which implies that $\sigma$ lies inside $\tau_r \oplus B_{\delta_r}$.  In $C(\overline{\mathcal{H}}_r \cup R,h,\alpha)$, the line segments come from the set $\mathcal{L}$ obtained by discretizing $\tau_r \oplus B_{\delta_r}$.  Therefore, every vertex $u'_j$ is at distance no more than $\sqrt{d}\alpha\delta_r$ from its nearest line segment in $\mathcal{L}$.  We snap $u'_j$ to the nearest point $w_j$ on that line segment in $\mathcal{L}$.  This converts $\sigma$ to another curve $(w_1,\ldots,w_h)$.  We update $g_i$ using the linear interpolations between $u'_ju'_{j+1}$ and $w_jw_{j+1}$ for all $j \in [h-1]$.  This update ensures two properties.   First, for every $j \in [h-1]$, there exist $\tau_i \in \overline{\mathcal{H}}_r \cup R$ and $a \in [m]$ such that $g_i(v_{i,a}) \cap w_jw_{j+1} \not= \emptyset$.  Second, $d_{g_i}(\sigma,\tau_i) \leq \delta_i + \sqrt{d}\alpha\delta_r$ for every $\tau_i \in \overline{\mathcal{H}}_r \cup R$.
	
	This completes the description of the curve $\sigma = (w_1,\ldots,w_h)$.  Next, we show how to construct a configuration $\Psi = (\mathcal{P},\mathcal{C}, \mathcal{S},\mathcal{A}) \in C(\overline{\mathcal{H}}_r \cup R,h,\alpha)$ that satisfies constraints~1--3 together with $\sigma$.  It is exactly the same analysis as in the proof of Lemma~\ref{Lemma: valid_cons_set}.  We repeat the construction of $\Psi$ below because we need to argue that there exists a configuration in $C(R,h,\alpha)$ that shares $\mathcal{C}$, $\mathcal{S}$, and $\mathcal{A}$.
	
	We first define $\mathcal{P}$.  For every $\tau_i \in \overline{\mathcal{H}}_r \cup R$ and every $a \in [m]$, if $w_1 \in g_i(v_{i,a})$, we define $\pi_i(a) = 0$; otherwise, let $j$ be the smallest index in $[h-1]$ such that $g_i(v_{i,a}) \cap w_jw_{j+1}$ and we define $\pi_i(a) = j$.   This induces $\mathcal{P} = \bigl\{\pi_i : \tau_i \in \overline{\mathcal{H}}_r \cup R \bigr\}$ such that for all $\tau_i \in \overline{\mathcal{H}}_r \cup R$, $\pi_i(1) = 0$ and if $a \leq b$, then $\pi_i(a) \leq \pi_i(b)$.  
	
	Next, we define $\mathcal{C}$ as follows.  Take any $j \in [h-1]$.  Let $x_j$ and $y_j$ be the minimum and maximum points in $\bigcup_{\tau_i \in \overline{\mathcal{H}}_r \cup R } \bigcup_{a \in [m]} g_i(v_{i,a}) \cap w_jw_{j+1}$ with respect to $\leq_{\sigma}$, respectively.  Note that $\bigcup_{\tau_i \in \overline{\mathcal{H}}_r \cup R } \bigcup_{a \in [m]} g_i(v_{i,a}) \cap w_jw_{j+1}$ is non-empty because there exist $\tau_i \in \overline{\mathcal{H}}_r \cup R $ and $a \in [m]$ such that $g_i(v_{i,a}) \cap w_jw_{j+1} \not= \emptyset$.  By definition, $x_j \leq_{\sigma} y_j$.  There exists $\tau_i \in \overline{\mathcal{H}}_r \cup R $ such that $x_j$ is within a distance of $\delta_i + \sqrt{d}\alpha\delta_r$ from a vertex of $\tau_i$.   We can make the same conclusion about $y_j$.  It follows that $x_j$ and $y_j$ belong to cells in $\grid_1$.  Choose $c_{j,1}$ and $c_{j,2}$ to be any cells in $\grid_1$ that contain $x_j$ and $y_j$, respectively.   This gives the $(h-1)$-tuple $\mathcal{C} = ((c_{j,1},c_{j,2}))_{j \in [h-1]}$.
	
	The components $\mathcal{S}$ and $\mathcal{A}$ are defined as follows.  By construction, we know that $w_j$ lies on some segment $s_j \in \mathcal{L}$.  We simply set $\mathcal{S} = (s_j)_{j \in [h]}$. For every $j \in [h]$, if $w_j$ lies in some grid cell in $\grid_2$ defined with respect to $\overline{\mathcal{H}}_r \cup R$, we set $\mathcal{A}[j]$ to be that cell; otherwise, we set $\mathcal{A}[j]$ to be null. 
	
	This completes the definition of $\Psi = (\mathcal{P},\mathcal{C},\mathcal{S},\mathcal{A})$.  We can verify exactly as in the proof of Lemma~\ref{Lemma: valid_cons_set} that $\Psi$ and $\sigma$ satisfy constraints~1--3 with respect to $\overline{\mathcal{H}}_r \cup R$.  The details are omitted here.  We proceed to verify that $C(R,h,\alpha)$ contains a configuration that shares $\mathcal{C}$, $\mathcal{S}$, and $\mathcal{A}$.
	
	The component $\mathcal{S}$ can be generated by $R$ because $R$ contains the curve $\tau_r$, and $\tau_r$ generates the superset $\mathcal{L}$ of $\mathcal{S}$.
	
	Before we verify that $\mathcal{C}$ and $\mathcal{A}$ can be generated by $R$, we first establish a property for $(u'_1,\ldots,u'_h)$, the result of converting $(u_1,\ldots,u_l)$ by shortcutting.   Take any $j \in [h-1]$.   Let $\tilde{g}_i$ refer to the matching from $\tau_i$ to $(u'_1,\ldots,u'_h)$ obtained immediately after the conversion.  Let $E_j = \bigl\{\min(\tilde{g}_i(v_{i,a}) \cap u'_ju'_{j+1}) : \tau_i \in \overline{\mathcal{H}}_r \cup R, \, a \in [m], \, \tilde{g}_i(v_{i,a}) \cap u'_ju'_{j+1} \not= \emptyset \bigr\}$, and let $E'_j = \bigl\{\max(\tilde{g}_i(v_{i,a}) \cap u'_ju'_{j+1}) : \tau_i \in \overline{\mathcal{H}}_r \cup R, \, a \in [m], \, \tilde{g}_i(v_{i,a}) \cap u'_ju'_{j+1} \not= \emptyset \bigr\}$.  We claim that the minimum point in $E_j$ lies in the image of $\tilde{g}_i(v_{i,a})$ for some $\tau_i \in R$ and some $a \in [m]$, and so does the maximum point in $E'_j$.  
	By the shortcutting procedure, $u'_j \in u_{j_0}u_{j_0+1}$ for some $j_0 \in [l-1]$.   When we shortcut some subcurve $\sigma[p,q]$ to produce the vertex $u'_j$, either we reach $u'_j$ by searching from $q$ in the order $\leq_\sigma$, or we reach $u'_j$ by searching from $p$ in the reverse order of $\leq_\sigma$.  Without loss of generality, assume that we reach $u'_j$ by searching from $q$ in the order $\leq_\sigma$.  Recall that we identify the set $I_{j_0} = \{\tau_i \in S_r : \exists \, a \in [m] \; \text{s.t.} \; \hat{g}_i(v_{i,a}) \cap u_{j_0}u_{j_0+1} \not= \emptyset\}$ to produce $H_{2j_0-1}$, where $\hat{g}_i$ refers to the matching from $\tau_i$ to $(u_1,\ldots,u_l)$.  The subset $H_{2j_0-1}$ contains the curves that induce the $\min\bigl\{\frac{\eps|S|}{5\ell}, |I_{j_0}|\bigr\}$ minimum points in $\{\min(\hat{g}_i(v_{i,a}) \cap u_{j_0}u_{j_0+1}) : \tau_i \in I_{j_0}, \, a \in [m], \, \hat{g}_i(v_{i,a}) \cap u_{j_0}u_{j_0+1} \not= \emptyset\}$.  If $R \cap I_{j_0} \not= \emptyset$, then since all curves in $H_{2j_0-1}$ are excluded from $\overline{\mathcal{H}}_r$, some curve in $R \cap H_{2j_0-1}$ must induce the minimum point $z$ in $\{\min(\hat{g}_i(v_{i,a}) \cap u_{j_0}u_{j_0+1}) : \tau_i \in \overline{\mathcal{H}}_r \cup R, \, a \in [m], \, \hat{g}_i(v_{i,a}) \cap u_{j_0}u_{j_0+1} \not= \emptyset\}$.  The search for $u'_j$ cannot go past $z$, so $u'_j \leq_\sigma z$.  No subsequent shortcutting can cause the removal of $z$, so $z$ belongs to $u'_ju'_{j+1}$.  After updating $\hat{g}_i$ to $\tilde{g}_i$ for all $\tau_i \in \overline{\mathcal{H}}_r \cup R$ following all shortcuttings, $z$ would still exist as the minimum point in $E_j$.  Hence, the minimum point in $E_j$ lies in the image of $\tilde{g}_i(v_{i,a})$ for some $\tau_i \in R$ and some $a \in [m]$.  The other possibility is that $R \cap I_{j_0} = \emptyset$.  In this case, $|I_{j_0}| < \frac{\eps|S|}{5\ell}$ and all curves in $I_{j_0}$ are excluded from $\overline{\mathcal{H}}_r$.  Therefore, for every $\tau_i \in \overline{\mathcal{H}}_r \cup R$ and every $a \in [m]$, $\hat{g}_i(v_{i,a}) \cap u_{j_0}u_{j_0+1} = \emptyset$.  However, this is a contradiction to the fact that $u'_j \in u_{j_0}u_{j_0+1}$ because the shortcutting procedure ensures that $u'_j$ belongs to $\hat{g}_i(v_{i,a})$ for some $\tau_i \in \overline{\mathcal{H}}_r \cup R$ and some $a \in [m]$.
	
	
    Next, we make a second claim.  Take any $j \in [h-1]$. We claim that the minimum point in $\bigl\{\min(g_i(v_{i,a}) \cap w_jw_{j+1}) : \tau_i \in \overline{\mathcal{H}}_r \cup R, \, a \in [m], \, g_i(v_{i,a}) \cap w_jw_{j+1} \not= \emptyset \bigr\}$ lies in the image of $g_i(v_{i,a})$ for some $\tau_i \in R$, and so does the maximum point in $\bigl\{\max(g_i(v_{i,a}) \cap w_jw_{j+1}) : \tau_i \in \overline{\mathcal{H}}_r \cup R, \, a \in [m], \, g_i(v_{i,a}) \cap w_jw_{j+1} \not= \emptyset \bigr\}$, where $g_i$ refers to the matching from $\tau_i$ to $\sigma$ obtained after making $\sigma = (w_1,\ldots,w_h)$.  This claim immediately follows from the claim in the previous paragraph and the fact that we snap $u'_j$ to $w_j$ and then use linear interpolations to obtain $g_i$ from $\tilde{g}_i$.
	
	Consider the component $\mathcal{C}$.  Take any $j \in [h-1]$.  In defining $c_{j,1}$, we identify a cell in $\grid_1$ defined with respect to $\overline{\mathcal{H}}_r \cup R$ that contains the minimum point in $\bigcup_{\tau_i \in \overline{\mathcal{H}}_r \cup R } \bigcup_{a \in [m]} g_i(v_{i,a}) \cap w_jw_{j+1}$.  By our second claim, we can pick $c_{j,1}$ to be a cell in $\bigcup_{a \in [m]} G(v_{i,a} + B_{\delta_i + \sqrt{d}\alpha\delta_i},\alpha\delta_i/h)$ for some curve $\tau_i \in R$.  A similar conclusion holds for the cell $c_{j,2}$.  We may have picked $c_{j,1}$ and $c_{j,2}$ to be some other cells in defining $\mathcal{C}$; if so, we change them so that they can be generated using $R$.  Hence, $\mathcal{C}$ can be generated using $R$.

	We show that $R$ can generate the component $\mathcal{A}$.   First, $\overline{\mathcal{H}}_r$ excludes all curves in $H_{3l-1}$ whereas $R$ intersects $H_{3l-1}$.  Therefore, $\max\{\delta_i : \tau_i \in \overline{\mathcal{H}}_r \cup R\} = \max\{\delta_i : \tau_i \in R\}$, which means that $\grid_2$ uses the same $\delta_{\diamond} = \max\{\delta_i : \tau_i \in \overline{\mathcal{H}}_r \cup R\} = \max\{\delta_i : \tau_i \in R\}$ in the discretization of the vertex neighborhoods regardless of whether $\grid_2$ is defined with respect to $\overline{\mathcal{H}}_r \cup R$ or $R$.

	For every $j \in [h]$, $u'_j$ is either $u_{j_0}$ for some $j_0 \in [l]$, or $u'_j$ is a newly created vertex.  Consider the case that $u'_j = u_{j_0}$.  In this case, the vertex $w_j$ in the final $\sigma$ is equal to the projection $w_{j_0}$ of $u_{j_0}$ that we check in defining $H_{2l-1}, \ldots, H_{3l-2}$.  Recall the set $V_{j_0}$ that we use in defining $H_{2l-2+j_0}$; $V_{j_0}$ contains the distances from $w_j = w_{j_0}$ to the nearest vertex of $\tau_i$ for all $\tau_i \in S_r$; $H_{2l-2+j_0}$ contains the curves that induce the $\frac{\eps|S|}{5\ell}$ minimum distances in $V_j$.  Therefore, if $w_j \in B_{\delta_i + 9\sqrt{d}\delta_{\diamond}}$ for some $\tau_i \in \overline{\mathcal{H}}_r \cup R$, then $w_j$ also lies in $B_{\delta_s + 9\sqrt{d}\delta_{\diamond}}$ for some $\tau_s \in R \cap H_{2l-2+j_0}$.  Hence, $w_j$ is contained in a cell in $\grid_2$ defined with respect to $\overline{\mathcal{H}}_r \cup R$ if and only if $w_j$ is contained in a cell in $\grid_2$ defined with respect to $R$ alone.  The cells induced by the curves in $R$ exist in $\grid_2$ defined with respect to $\overline{\mathcal{H}}_r \cup R$ too.  We may have set $\mathcal{A}[j]$ to be a different cell; if so, we change $\mathcal{A}[j]$ to be a cell induced by a curve in $R \cap H_{2l-2+j_0}$.
	
	The remaining case is that $u'_j$ is a new vertex created by the shortcutting.  Recall that $\hat{g}_i$ refers to the matching from $\tau_i$ to $(u_1,\ldots,u_l)$ before the conversion to $(u'_1,\ldots,u'_h)$.  In this case, there exists $j_0 \in [l-1]$ such that $u'_j$ is either the minimum point in $\bigl\{\min(\hat{g}_i(v_{i,a}) \cap u_{j_0}u_{j_0+1}) : \tau_i \in \overline{\mathcal{H}}_r \cup R, \, a \in [m], \, \hat{g}_i(v_{i,a}) \cap u_{j_0}u_{j_0+1} \not= \emptyset \bigr\}$, or the maximum point in $\bigl\{\max(\hat{g}_i(v_{i,a}) \cap u_{j_0}u_{j_0+1}) : \tau_i \in \overline{\mathcal{H}}_r \cup R, \, a \in [m], \, \hat{g}_i(v_{i,a}) \cap u_{j_0}u_{j_0+1} \not= \emptyset \bigr\}$.  By the proof of the first claim that we showed previously, we know that $u'_j \in \tilde{g}_s(v_{s,b})$ for some $\tau_s \in R$ and some $b \in [m]$, where $\tilde{g}_s$ refers to the matching from $\tau_i$ to $(u'_1,\ldots,u'_h)$ obtained immediately after the conversion to $(u'_1,\ldots,u'_h)$.  We preserved the distance bounds in converting $\sigma$ from $(u_1,\ldots,u_l)$ to $(u'_1,\ldots,u_h)$.  It implies that $d(u'_j,v_{s,b}) \leq \delta_{s}$.  It follows that $d(w_j,v_{s,b}) \leq d(w_j,u'_j) + d(u'_j,v_{s,b})\leq \sqrt{d}\alpha\delta_s + \delta_s < 9\sqrt{d}\delta_s$.  Therefore, $w_j$ is contained in a cell in $G(v_{s,b},B_{9\sqrt{d}\delta_{\diamond}},\alpha\delta_{\diamond}) \subseteq\grid_2$ regardless of whether $\grid_2$ is defined with respect to $\overline{\mathcal{H}}_r \cup R$ or $R$.  We may have set $\mathcal{A}[j]$ to be a different cell; if so, we change $\mathcal{A}[j]$ to be the cell in $G(v_{s,b},B_{9\sqrt{d}\delta_{\diamond}},\alpha\delta_{\diamond})$ that contains $w_j$.
	\cancel{

	For any $k\in [l']$, if $u_k'$ is a new vertex introduced by the modification, $u_k'$ lies in $v_{i,a}\oplus B_{\delta_{i}}$ for some vertex $v_{i,a}$ of $\tau_i\in \bar{T}$, which implies that $w_k$ lies in $v_{i,a}\oplus B_{9\sqrt{d}\varepsilon\delta_{i}}$. So $w_k$ lies in a grid cell in $G(v_{i,a}+B_{9\sqrt{d}\varepsilon\delta_{i}}, \varepsilon\delta_{i})$. If $u_k'$ is also a vertex of $(u_1,...,u_{l})$, assume the index of $u_k'$ in $(u_1,..., u_{l})$ is $k'$. Let $A_k' = \{\tau_i\in T': w_k\in \bigcup_{a\in[m]}G(v_{i,a}+B_{9\sqrt{d}\delta_i}, \varepsilon\delta_i)\}$. Recall that $A_{k'}=\{\tau_i\in T^{*}: w_k \in \bigcup_{a\in[m]}G(v_{i,a}+B_{9\sqrt{d}\delta_i}, \varepsilon\delta_i)\}$.
	It is clear that $A_k' = A_{k'}\backslash(T_1\cup\cdots\cup T_{sl-2})\cup (\bar{T}\cap A_{k'})$. According to the construction of the subset $T_{2l-2+k'}$, if $\bar{T}\cap A_{k'}=\emptyset$, it implies that $A_{k'}\subset T_{2l-2+k'}$ and $A_{k'}\backslash(T_1\cup\cdots\cup T_{2l-2}) = \emptyset$. Thus, if $A_k'\not=\emptyset$, $A_k'\cap \bar{T}\not=\emptyset$.
	
	In the same way as the proof in Lemma 1, we can construct a configuration $\Psi_l'=(\mathcal{P}', \mathcal{C}', \mathcal{S}', \mathcal{A}')$ for the problem $Q(T', \Delta', l')$ that satisfies constraints 1-3 together with $\sigma = (w_1,...,w_{l'})$ such that $\bar{T}\cup\{\tau_{\tilde{i}}\}$ contains all curves that that induce components $\mathcal{C}'$, $\mathcal{S}'$ and $\mathcal{A}'$.
}
\end{proof}

\section{Proof of Lemma~\ref{lem:second-level}}
\label{app:second-level}

We restate Lemma~\ref{lem:second-level} and give its proof.

\vspace{10pt}

\noindent \textbf{Lemma~\ref{lem:second-level}}~~\emph{Take any subset $S \subseteq T$ with at least $n/\beta$ curves for any $\beta \geq 1$.   Let $\Delta$ be a set of error thresholds for $S$ such that $Q(S,\Delta,\ell)$ has a solution.   Assume the notation in Lemma~\ref{lem:first-level}.  Take any $\alpha, \eps \in (0,1)$, any $r \in \bigl[\frac{\eps|S|}{5\ell}\bigr]$, and any subset $R \subseteq S_r$ such that $\tau_r \in R$ and $R \cap H \not= \emptyset$ for all $H \in \mathcal{H}_r$.  Let $\hat{S}_r = \overline{\mathcal{H}}_r \cup R$.  Let $\Psi = (\mathcal{P},\mathcal{C},\mathcal{S},\mathcal{A}) \in C(\hat{S}_r,h,\alpha)$ for some $h \in [\ell]$ be a configuration that satisfies Lemma~\ref{lem:first-level}.  There exist $\mathcal{H}_{\Psi} \subseteq 2^{\hat{S}_r}$ and a configuration $\Psi'' = (\mathcal{P}'',\mathcal{C},\mathcal{S},\mathcal{A}) \in C(\overline{\mathcal{H}}_{\Psi} \cup R,h,\alpha)$, where $\overline{\mathcal{H}}_{\Psi} = \hat{S}_r \setminus \bigcup_{H \in \mathcal{H}_{\Psi}} H$, such that $|\mathcal{H}_{\Psi}| = 2h$, every subset in $\mathcal{H}_{\Psi}$ contains $\frac{\eps|S|}{5\ell}$ curves, and for all subset $R' \subseteq \hat{S}_r$, if $R' \cap H \not= \emptyset$ for all $H \in \mathcal{H}_{\Psi}$, then there exists a configuration $\Psi' = (\mathcal{P}',\mathcal{C},\mathcal{S},\mathcal{A}) \in C(R \cup R',h,\alpha)$ that satisfies the following properties.}
	\begin{enumerate}[(i)]
		\item \emph{For all $j \in [h]$, $\gamma_j(\hat{S}_r,\Psi) \subseteq \gamma_j(R\cup R',\Psi') \subseteq \gamma_j(\overline{\mathcal{H}}_{\Psi} \cup R,\Psi'')$.}
		\item \emph{The backward extraction using $\{\gamma_j(R\cup R',\Psi') : j \in [h]\}$ produces a curve $\sigma$ such that $d_F(\sigma,\tau_i) \leq \delta_i + 4\sqrt{d}\alpha\cdot \max\{\delta_i : \tau_i \in R \cup R' \}$ for all $\tau_i \in \overline{\mathcal{H}}_{\Psi} \cup R$.}
	\end{enumerate}

\vspace{2pt}

\begin{proof}
We construct the subsets $H_{2j-1}$ and $H_{2j}$ in $\mathcal{H}_{\Psi}$  inductively from $j = 1$ to $h$.  The definitions of $H_{2j-1}$ and $H_{2j}$ depend on the status of $\mathcal{A}[j]$.  

If $\mathcal{A}[j] \not= \mathrm{null}$, both $H_{2j-1}$ and $H_{2j}$ are just arbitrary subsets of $\hat{S}_r$ of size $\frac{\eps |S|}{5\ell}$.  This case covers the base case of $j = 1$ because $\mathcal{A}[1] \not= \mathrm{null}$ by constraint~1 (by Lemma~\ref{lem:first-level}, $\Psi$ satisfies constraint~1).  

Suppose that $\mathcal{A}[j] = \mathrm{null}$.  In this case, $j \in [2,k-1]$.  In the forward construction, $\gamma_j(\hat{S}_r,\Psi)$ is obtained by applying a set of clippings to the segment $\tilde{s}_j = F(c_{j-1,2},\gamma_{j-1}) \cap F(c_{j,1},c_{j,2}) \cap s_j$.   The effect is that each curve $\tau_i \in \hat{S}_r$ defines a subsegment $p_{ij}q_{ij} \subseteq \tilde{s}_j$, and $\gamma_j(\hat{S}_r,\Psi)$ is equal to $\bigcap_{\tau_i \in \hat{S}_r} p_{ij}q_{ij}$.  Recall that $\tilde{s}_j$ is parallel to the edge $\tau_{r,a}$ for some $a \in [m]$.  We assume that $p_{ij}$ precedes $q_{ij}$ in the orientation that is consistent with $\leq_{\tau_{r,a}}$.  Let $H_{2j-1}$ be the subset of curves in $\hat{S}_r$ that induce the $\frac{\eps |S|}{5\ell}$ maximum points in $\{p_{ij} : \tau_i \in \hat{S}_r\}$ with respect to $\leq_{\tau_{r,a}}$.  Symmetrically, let $H_{2j}$ be the subset of curves in $\hat{S}_r$ that induce the $\frac{\eps |S|}{5\ell}$ minimum points in $\{q_{ij} : \tau_i \in \hat{S}_r\}$ with respect to $\leq_{\tau_{r,a}}$.  This completes the definition of $\mathcal{H}_{\Psi}$.

To construct $\Psi''$, we extract the mappings $\bigl\{\pi_i \in \mathcal{P} : \tau_i \in  \overline{\mathcal{H}}_{\Psi} \cup R\bigr\}$ to form $\mathcal{P}''$.  Then, $\Psi'' = (\mathcal{P}'',\mathcal{C},\mathcal{S},\mathcal{A})$.  By Lemma~\ref{lem:first-level}(ii), the components $\mathcal{C}$, $\mathcal{S}$, and $\mathcal{A}$ can be induced by the curves in $R$.    It follows that $\Psi''$ is indeed a configuration in $C(\overline{\mathcal{H}}_{\Psi} \cup R,h,\alpha)$. Since $\Psi$ satisfies Constraint~1 by Lemma~\ref{lem:first-level}, $\Psi''$ satisfies constraint~1 too because it inherits $\pi_i$'s from $\mathcal{P}$, and $\Psi$ and $\Psi''$ share the same component $\mathcal{C}$.  Therefore, the forward construction of $\gamma_j(\overline{\mathcal{H}}_{\Psi} \cup R,\Psi'')$ for $j \in [h]$ can proceed.  

Next, take any subset $R' \subseteq \hat{S}_r$ such that $R' \cap H \not= \emptyset$ for all $H \in \mathcal{H}_{\Psi}$.   We extract the mappings $\{\pi_i \in \mathcal{P} : \tau_i \in R \cup R'\}$ to form $\mathcal{P}'$.  Then, $\Psi' = (\mathcal{P}',\mathcal{C},\mathcal{S},\mathcal{A})$.  By Lemma~\ref{lem:first-level}(ii), the components $\mathcal{C}$, $\mathcal{S}$, and $\mathcal{A}$ can be induced by the curves in $R$.   It follows that $\Psi'$ is indeed a configuration in $C(R \cup R',h,\alpha)$.  Since $\Psi$ satisfies constraint~1 by Lemma~\ref{lem:first-level}, $\Psi'$ satisfies constraint~1 too because it inherits $\pi_i$'s from $\mathcal{P}$, and $\Psi$ and $\Psi'$ share the same component $\mathcal{C}$.  Therefore, the forward construction of $\gamma_j(R \cup R',\Psi')$ for $j \in [h]$ can proceed.   

We prove by induction that $\gamma_j(\hat{S}_r,\Psi) \subseteq \gamma_j(R\cup R',\Psi') \subseteq \gamma_j(\overline{\mathcal{H}}_{\Psi} \cup R,\Psi'')$ for $j \in [h]$.  Afterwards, Lemma~\ref{lem:second-level}(ii) automatically follows as we discussed in the main text.

In the base case of $j = 1$, all three of $\gamma_1(\hat{S}_r,\Psi)$, $\gamma_1(R \cup R',\Psi')$ and $\gamma_1(\overline{\mathcal{H}}_{\Psi} \cup R,\Psi'')$ are computed as $F(c_{1,1},c_{1,2}) \cap s_1 \cap \mathcal{A}[1]$.

Consider any $j \in [2,h-1]$.  If $\mathcal{A}[j] \not= \mathrm{null}$, then $\gamma_j(\hat{S}_r,\Psi)$, $\gamma_j(R \cup R',\Psi')$ and $\gamma_j(\overline{\mathcal{H}}_{\Psi} \cup R,\Psi'')$ are computed as follows:
\begin{align*}
	\gamma_j(\hat{S}_r,\Psi) &= F\bigl(c_{j-1,2}, \gamma_{j-1}(\hat{S}_r,\Psi) \bigr)\cap F(c_{j,1}, c_{j,2})\cap s_j \cap \mathcal{A}[j], \\
	\gamma_j(R \cup R',\Psi') &= F\bigl(c_{j-1,2}, \gamma_{j-1}(R \cup R',\Psi') \bigr)\cap F(c_{j,1}, c_{j,2})\cap s_j \cap \mathcal{A}[j],\\
	\gamma_j(\overline{\mathcal{H}}_{\Psi} \cup R,\Psi'') &= F\bigl(c_{j-1,2}, \gamma_{j-1}(\overline{\mathcal{H}}_{\Psi} \cup R,\Psi'')\bigr)\cap F(c_{j,1}, c_{j,2})\cap s_j \cap \mathcal{A}[j].
\end{align*}
Since $\gamma_{j-1}(\hat{S}_r,\Psi) \subseteq \gamma_{j-1}(R \cup R',\Psi') \subseteq \gamma_{j-1}(\overline{\mathcal{H}}_{\Psi} \cup R,\Psi'')$ by induction assumption, we have 
\[
F\bigl(c_{j-1,2}, \gamma_{j-1}(\hat{S}_r,\Psi) \bigr) \subseteq F\bigl(c_{j-1,2}, \gamma_{j-1}(R \cup R',\Psi') \bigr) \subseteq F\bigl(c_{j-1,2}, \gamma_{j-1}(\overline{\mathcal{H}}_{\Psi} \cup R,\Psi'') \bigr).
\]
It follows that $\gamma_j(\hat{S}_r,\Psi) \subseteq \gamma_j(R \cup R',\Psi') \subseteq \gamma_j(\overline{\mathcal{H}}_{\Psi} \cup R,\Psi'')$.

Suppose that $\mathcal{A}[j] = \mathrm{null}$.  As in defining the subsets $H_1,\ldots,H_{2h}$, the construction of $\gamma_j(\hat{S}_r,\Psi)$ can be viewed as applying some clippings to the segment $\tilde{s}_j = F\bigl(c_{j-1,2}, \gamma_{j-1}(\hat{S}_r,\Psi) \bigr)\cap F(c_{j,1}, c_{j,2})\cap s_j$.  That is, each curve $\tau_i \in \hat{S}_r$ induces a subsegment $p_{ij}q_{ij}$ on $\tilde{s}_j$, and
\[
\gamma_j(\hat{S}_r,\Psi)  = \bigcap_{\tau_i \in \hat{S}_r} p_{ij}q_{ij}.
\]
Similarly, $R \cup R'$ induces a set of subsegments on $\tilde{s}_j$ so that $\gamma_j(R \cup R',\Psi')$ is the common intersection of these subsegments.   Moreover, since $R \cup R' \subseteq \hat{S}_r$, the set of subsegments induced by $R \cup R'$ is exactly $\{ p_{ij} q_{ij} : \tau_i \in R \cup R' \} \subseteq  \{ p_{ij} q_{ij} : \tau_i \in \hat{S}_r\}$.  Therefore,
\[
\gamma_j(R \cup R',\Psi') =  \bigcap_{\tau_i \in R \cup R'} p_{ij}q_{ij} \, \supseteq \, \gamma_j(\hat{S}_r,\Psi).
\]
In the same manner, we have
\[
\gamma_j(\overline{\mathcal{H}}_{\Psi} \cup R,\Psi'') = \bigcap_{\tau_i \in \overline{\mathcal{H}}_{\Psi} \cup R} p_{ij}q_{ij}.
\]
By the definition of $\mathcal{H}_{\Psi}$ and $\overline{\mathcal{H}}_{\Psi}$, the curves in $\hat{S}_r$ that induce the $\frac{\eps |S|}{5\ell}$ maximum points in $\{p_{ij} : \tau_i \in \hat{S}_r\}$ are excluded from $\overline{\mathcal{H}}_{\Psi}$.  Therefore, if any of these curves are present in $\overline{\mathcal{H}}_{\Psi} \cup R$, they must belong to $R$.  On the other hand, $R'$ intersects every subset in $\mathcal{H}_{\Psi}$ by assumption, which implies that $R'$ contains curve(s) in $\hat{S}_r$ that induce some of the $\frac{\eps |S|}{5\ell}$ maximum points in $\{p_{ij} : \tau_i \in \hat{S}_r\}$.  Altogether, we conclude that the maximum point in $\{p_{ij} : \tau_i \in R \cup R'\}$ is equal to or follows the maximum point in $\{p_{ij} : \tau_i \in \overline{\mathcal{H}}_{\Psi} \cup R\}$.  Similarly, the minimum point in $\{q_{ij} : \tau_i \in R \cup R'\}$ is equal to or precedes the minimum point in $\{q_{ij} : \tau_i \in \overline{\mathcal{H}}_{\Psi} \cup R\}$.  As a result,
\[
\gamma_j(R \cup R',\Psi') =  \bigcap_{\tau_i \in R \cup R'} p_{ij}q_{ij}  \, \subseteq  \bigcap_{\tau_i \in \overline{\mathcal{H}}_{\Psi} \cup R} p_{ij}q_{ij} = \gamma_j(\overline{\mathcal{H}}_{\Psi} \cup R,\Psi'').
\]

Finally, in the terminating case of $j = h$, $\mathcal{A}[h] \not= \mathrm{null}$, and so $\gamma_h(\hat{S}_r,\Psi)$, $\gamma_h(R \cup R',\Psi')$ and $\gamma_h(\overline{\mathcal{H}}_{\Psi} \cup R,\Psi'')$ are computed as follows:
\begin{align*}
	\gamma_h(\hat{S}_r,\Psi) &= F\bigl(c_{h-1,2}, \gamma_{h-1}(\hat{S}_r,\Psi) \bigr)\cap F(c_{h,1}, c_{h,2})\cap s_h \cap \mathcal{A}[h], \\
	\gamma_h(R \cup R',\Psi') &= F\bigl(c_{h-1,2}, \gamma_{h-1}(R \cup R',\Psi') \bigr)\cap F(c_{h,1}, c_{h,2})\cap s_h \cap \mathcal{A}[h],\\
	\gamma_h(\overline{\mathcal{H}}_{\Psi} \cup R,\Psi'') &= F\bigl(c_{h-1,2}, \gamma_{h-1}(\overline{\mathcal{H}}_{\Psi} \cup R,\Psi'')\bigr)\cap F(c_{h,1}, c_{h,2})\cap s_h \cap \mathcal{A}[h].
\end{align*}
We conclude as before that $\gamma_h(\hat{S}_r,\Psi) \subseteq \gamma_h(R \cup R',\Psi') \subseteq \gamma_h(\overline{\mathcal{H}}_{\Psi} \cup R,\Psi'')$.
\end{proof}

\section{Proof of Lemma~\ref{lem:finder}}
\label{app:finder}

We restate Lemma~\ref{lem:finder} and give its proof, which is adapted from the proof of a similar result in~\cite{buchin2021approximating}.

\vspace{10pt}

\noindent \text{Lemma~\ref{lem:finder}}~~\emph{For $\eps < 1/9$, Algorithm~\ref{ALG:1_APP} is a $(1+\eps)$-approximate candidate finder with success probability at least $1-\mu$.   The algorithm outputs a set $\Sigma$ of curves, each of $\ell$ vertices; for every subset $S \subseteq T$ of size $\frac{1}{\beta}|T|$ or more, it holds with probability at least $1-\mu$ that there exists a curve $\sigma \in \Sigma$ such that $\mathrm{cost}(S,\sigma) \leq (1+\eps)\mathrm{cost}(S,c^*)$, where $c^*$ is the optimal $(1,\ell)$-median of $S$.  The running time and output size of Algorithm~\ref{ALG:1_APP} are $\tilde{O}\bigl(m^{O(\ell^2)} \cdot \mu^{-O(\ell)} \cdot (d\beta\ell/\eps)^{O((d\ell/\eps)\log(1/\mu))}\bigr)$.}

\vspace{8pt}

\begin{proof}
To prove that Algorithm~\ref{ALG:1_APP} is a $(1+\eps)$-approximate candidate finder with success probability at least $1-\mu$, we need to show that the algorithm returns a set $\Sigma$ of curves, each of $\ell$ vertices, and for all subset $S \subseteq T$ of size $\frac{1}{\beta}|T|$ or more, it holds with probability at least $1-\mu$ that there exists a curve $\sigma \in \Sigma$ such that $\mathrm{cost}(S,\sigma) \leq (1+\eps)\mathrm{cost}(S,c^*)$, where $c^*$ is the optimal $(1,\ell)$-median of $S$.  

In line~\ref{alg:sample}, the algorithm samples a multiset $Y \subseteq T$ of $\bigl\lceil \frac{80\beta\ell}{\eps}\ln\frac{80\ell}{\mu}\bigr\rceil$ possibly non-distinct curves.  We treat the intersection $Y \cap S$ as a multiset too, i.e., if a curve $\tau_i$ appears $x$ times in $Y$ and $\tau_i \in S$, then $\tau_i$ appears $x$ times in $Y \cap S$.  The notation $|Y|$ and $|Y \cap S|$ refer to the number of curves in $Y$ and $Y \cap S$ counting multiplicities.  Define a random variable as follows:
\begin{align*}
	Y_S = & \left\{\begin{array}{lcl}
		Y \cap S, & & \text{if $|Y \cap S| \leq  |Y|/(2\beta)$}, \\
		\text{a uniform sample of $Y \cap S$ of size $|Y|/(2\beta)$}, & & \text{otherwise}.  
		\end{array}\right.
\end{align*}
To get a uniform sample of $Y \cap S$ of size $|Y|/(2\beta)$ when $|Y \cap S| > |Y|/(2\beta)$, we treat the elements of $Y \cap S$ as distinct, generate all possible subsets of $Y \cap S$ of size $|Y|/(2\beta)$, and pick one uniformly at random to be $Y_S$.  So $Y_S$ may be a multiset.  The notation $|Y_S|$ refers to be the number of curves in $Y_S$ counting multiplicities.  We first bound the probabilities of several random events.

\vspace{10pt}

\noindent \textbf{Event about $\pmb{Y}$ and $\pmb{Y_S}$.}~Since points are independently sampled from $T$ with replacement to form $Y$, the subset $Y_S$ is a uniform, independent sample of $S$ with replacement.   Since $|S| \geq n/\beta$, the chance of picking a curve in $S$ when we are forming $Y$ is at least $1/\beta$, which implies that $\mathrm{E}\bigl[|Y \cap S|\bigr] \geq |Y|/\beta$.  Therefore, $\Pr\bigl[|Y \cap S| < |Y|/(2\beta) \bigr] \le \Pr\bigl[|Y \cap S|< \E[|Y \cap S|]/2\bigr]$.  Applying the Chernoff bound to $\Pr\bigl[|Y \cap S|< \E[|Y \cap S|]/2\bigr]$, we obtain
\begin{eqnarray*}
\Pr\bigl[|Y \cap S| < |Y|/(2\beta) \bigr] & \le & \Pr\bigl[|Y \cap S|< \E[|Y \cap S|]/2\bigr] \\
& \leq & e^{-\frac{1}{8}\mathrm{E}[|Y \cap S|]} \;\;\leq \;\; e^{-|Y|/(8\beta)} \; \leq \; \left(\frac{\mu}{80\ell}\right)^{10\ell/\eps} \; < \; \mu/80.
\end{eqnarray*}
This gives our first event:
\begin{equation*}
E_{Y_S}: |Y_S| = |Y|/(2\beta), \quad \Pr\bigl[E_{Y_S}\bigr] > 1-\mu/80.
\end{equation*}
Under event $E_{Y_S}$,  line~\ref{alg:enum} of Algorithm~\ref{ALG:1_APP} will produce a subset $X$ equal to some $Y_S$ in some iteration.

\vspace{10pt}

\noindent \textbf{Event about $\pmb{c}$.}~Consider the curve $c$ returned in line~\ref{alg:34} of Algorithm~\ref{ALG:1_APP}.  The working of the $(1,\ell)$-median-34-approximation$(X,\mu/4)$ in~\cite{buchin2021approximating} guarantees that $c$ is a 34-approximate $(1,\ell)$-median of $X$ with probability at least $1-\mu/4$.  We obtain our second event:
\begin{center}
	$E_c$: $c$ is a 34-approximate $(1,\ell)$-median of $X$,  \quad
	$\Pr\bigr[E_{c}\bigr] \geq 1- \mu/4$.
\end{center}

\vspace{6pt}

\noindent \textbf{Event about a $\pmb{(1,\ell)}$-median.}~Let $c^{*}$ be an optimal $(1, \ell)$-median of $S$.   The average Fr\'{e}chet distance between $c^*$ and a curve in $S$ is $\mathrm{cost}(S,c^*)/|S|$.  In other words, if we pick a curve $\tau_i$ uniformly at random from $S$, the expected value of $d_F(\tau_i,c^*)$ is $\mathrm{cost}(S,c^*)/|S|$.  Since $Y_S$ is a uniform, independent sample of $S$ with replacement, we know that for all $\tau_i \in Y_S$, $\mathrm{E}[d_F(\tau_i,c^*)] = \mathrm{cost}(S,c^*)/|S|$.  It follows that
\[
\mathrm{E}\bigl[\mathrm{cost}(Y_S,c^*)\bigr] = \sum_{\tau_i \in Y_S} \mathrm{E}[d_F(\tau_i,c^*)] = |Y_S| \cdot \frac{\mathrm{cost}(S,c^*)}{|S|}.
\]
By Markov's inequality, $\Pr\bigl[\mathrm{cost}(Y_S,c^*) \geq \frac{4}{\mu}\mathrm{E}[\mathrm{cost}(Y_S,c^*)]\bigr]\leq \mu/4$.  Therefore, it holds with probability greater than $1-\mu/4$ that $\mathrm{cost}(Y_S,c^*) < \frac{4}{\mu}\mathrm{E}\big[\mathrm{cost}(Y_S,c^*)\bigr] = \frac{4|Y_S|}{\mu|S|}\mathrm{cost}(S,c^*)$.  Note that $|Y_S|$ must be positive then. By rearranging terms, it holds with probability greater than $1-\mu/4$ that $\frac{\mu}{|Y_S|}\mathrm{cost}(Y_S,c^*) < \frac{4}{|S|}\mathrm{cost}(S,c^*)$.  This gives our third event:
\begin{equation*}
E_{\mathrm{cost}(Y_S,c^*)}: \frac{\mu}{|Y_S|}\mathrm{cost}(Y_S,c^*) < \frac{4}{|S|}\mathrm{cost}(S,c^*), \quad
\Pr\bigl[E_{\mathrm{cost}(Y_S,c^*)}\bigr] > 1-\mu/4.
\end{equation*}

\vspace{4pt}

\noindent \textbf{Event about Lemmas~\ref{lem:first-level} and~\ref{lem:second-level}.}~Let $\Delta = \{\delta_1,\ldots,\delta_{|S|}\}$ be a set of error thresholds for $S$ that will be specified later such that $c^*$ is a solution of $Q(S,\Delta,\ell)$.  We assume the notation in Lemma~\ref{lem:first-level}.  That is, $\delta_1 \leq \ldots \leq \delta_{|S|}$, $S_i = \{\tau_i,\tau_{i+1},\ldots\}$, and $\Delta_i = \{\delta_i,\delta_{i+1},\ldots\}$.    It follows that $c^*$ is also a solution of $Q(S_r,\Delta_r,\ell)$ for all $r \in \bigl\lceil \frac{\eps |S|}{5\ell} \bigr\rceil$.  Take any $r \in \bigl[\frac{\eps |S|}{5\ell}\bigr]$.  By Lemma~\ref{lem:first-level}, there is a family $\mathcal{H}_r$ of $3l-1$ subsets of $S_r$ for some $l \in [\ell]$, each containing $\frac{\eps |S|}{5\ell}$ curves, such that some desirable consequences will follow if some subset of $S_r$ contains $\tau_r$ and intersects every subset in $\mathcal{H}_r$.  We argue that $Y_S$ likely contains such a subset $R$ with one additional property that we explain below.  Conditioned on $E_{Y_S}$, we have $|Y_S| = \frac{40\ell}{\eps}\ln \frac{80\ell}{\mu}$.  

First, let $Z_1 = \bigl\{\tau_i : i \in \bigl[\frac{\eps|S|}{5\ell}\bigr] \bigr\}$, and we analyze $|Y_S \cap Z_1|$.  Given a curve drawn uniformly at random from $S$, the probability that it belongs to $Z_1$ is $\frac{\eps}{5\ell}$.  Therefore, the expected value of $|Y_S \cap Z_1|$ is $\frac{\eps}{5\ell}|Y_S| = 8\ln\frac{80\ell}{\mu}$.  Applying the Chernoff bound, we obtain $\Pr\bigl[|Y_S \cap Z_1| > 16\ln(80\ell/\mu)\bigr] \leq \Pr\bigl[|Y_S \cap Z_1| > 2\,\mathrm{E}\bigl[|Y_S \cap Z_1|\bigr]\bigr] \leq e^{-\frac{1}{3}\mathrm{E}[|Y_S \cap Z_1|]} < \mu/80$.
Similarly, $\Pr\bigl[|Y_S \cap Z_1| < 4\ln(80\ell/\mu)\bigr] \leq \Pr\bigl[|Y_S \cap Z_1| < \mathrm{E}\bigl[|Y_S \cap Z_1|\bigr]/2\bigr] \leq e^{-\frac{1}{8}\mathrm{E}[|Y_S \cap Z_1|]} < \mu/80$.
Therefore, 
\begin{equation}
\Pr\left[4\ln\frac{80\ell}{\mu} \leq |Y_S \cap Z_1| \leq 16\ln\frac{80\ell}{\mu}\right] > 1-\mu/40.
\label{eq:event-1}
\end{equation}

Let $r = \mathrm{argmin}\{\tau_i : Y_S \cap Z_1\}$.  We have $Y_S \subseteq S_r$ and $\tau_r \in Y_S$.

\cancel{Let $\Delta = \{\delta_1,\ldots,\delta_{|S|}\}$ be a set of error thresholds for $S$ that will be specified later in the proof such that $c^*$ is a solution of $Q(S,\Delta,\ell)$.  We assume the notation in Lemmas~\ref{lem:first-level} and~\ref{lem:second-level}.  By Lemma~\ref{lem:first-level}, there exists $\mathcal{H}_S \subseteq 2^S$ such that some desirable consequences will follow if some subset $R \subseteq S$ intersects every subset in $\mathcal{H}_S$.  The size of $\mathcal{H}_S$ is $3l$ for some $l \in [\ell]$; one subset contains the curves in $S$ with the $\frac{\eps n}{5\beta\ell}$ minimum error thresholds; others contain $\frac{\eps|S|}{5\ell}$ curves each.  We argue that $Y_S$ likely contains such a subset $R$ with one additional property that we explain below.

Conditioned on $E_{Y_S}$, we have $|Y_S| = \frac{80\beta\ell}{\eps}\ln \frac{80\ell}{\mu}$.  Let $Z_1$ be the subset of curves in $S$ with the $\frac{\eps n}{5\beta\ell}$ smallest error thresholds.   We argue that $Y_S$ contains at least one curve and not too many from $Z_1$ with very good probability.  Given a random curve from $S$, the probability that it belongs to $Z_1$ is between $\frac{\eps}{5\beta\ell}$ and $\frac{\eps}{5\ell}$.  Therefore, the expected value of $|Y_S \cap Z_1|$ lies between $\frac{\eps}{10\beta^2\ell}|Y| = 8\ln\frac{80\ell}{\mu}$ and $\frac{\eps}{10\beta\ell}|Y| = 8\beta\ln\frac{80\ell}{\mu}$.  Applying Chernoff bound, we obtain
\[
\Pr\bigl[|Y_S \cap Z_1| > 16\beta\ln(80\ell/\mu)\bigr] \leq \Pr\bigl[|Y_S \cap Z_1| > 2\,\mathrm{E}\bigl[|Y_S \cap Z_1|\bigr]\bigr] \leq e^{-\frac{1}{3}\mathrm{E}[|Y_S \cap Z_1|]} < \mu/80.
\]
Similarly, 
\[
\Pr\bigl[|Y_S \cap Z_1| < 4\ln(80\ell/\mu)\bigr] \leq \Pr\bigl[|Y_S \cap Z_1| < \mathrm{E}\bigl[|Y_S \cap Z_1|\bigr]/2\bigr] \leq e^{-\frac{1}{8}\mathrm{E}[|Y_S \cap Z_1|]} < \mu/80.
\]
Therefore, $4\ln\frac{80\ell}{\mu} \leq |Y_S \cap Z_1| \leq 16\beta\ln\frac{80\ell}{\mu}$ with probability more than $1-\mu/40$.
}

There are at most $\frac{\eps |S|}{10\ell}$ curves in $S$ that has a Fr\'{e}chet distance of at least $\frac{10\ell}{\eps|S|}\mathrm{cost}(S,c^*)$; otherwise, the total would exceed $\mathrm{cost}(S,c^*)$, an impossibility.   It means that for every $H \in \mathcal{H}_r$, at least half of the curves in $H$ have Fr\'{e}chet distances at most $\frac{10\ell}{\eps|S|}\mathrm{cost}(S,c^*)$ from $c^*$.   Since $Y_S$ is a uniform, independent sample of $S$ with replacement, the probability of $Y_S$ containing a curve from a particular $H \in \mathcal{H}_r$ that has a Fr\'{e}chet distance at most $\frac{10\ell}{\eps|S|}\mathrm{cost}(S,c^*)$ from $c^*$ is at least $1-\bigl(1-\frac{\eps}{10\ell}\bigr)^{(40\ell/\eps)\ln(80\ell/\mu)} \geq 1 - \frac{\mu}{80\ell}$.  As a result, by the union bound, 
\begin{equation}
	\Pr\left[\forall H \in \mathcal{H}_r, \, \exists \tau_i \in Y_S \cap H \; \text{s.t.} \; d_F(\tau_i,c^*) \leq \frac{10\ell}{\eps|S|}\mathrm{cost}(S,c^*)\right] > 1-\frac{\mu}{80\ell}\cdot(3l-1) > 1-\frac{\mu}{25}.
	\label{eq:event-2}
\end{equation}

By \eqref{eq:event-1} and \eqref{eq:event-2}, it holds with probability greater than $1- 13\mu/200$ that $Y_S$ contains a subset $R \subseteq S_r$ that contains $\tau_r$, has size at most $3l$, and enables us to invoke Lemma~\ref{lem:first-level}.  As a result, there exists a configuration $\Psi = (\mathcal{P},\mathcal{C},\mathcal{S},\mathcal{A}) \in C(\overline{\mathcal{H}}_r \cup R,h,\eps^2)$ that satisfies Lemma~\ref{lem:first-level}.

Given $R$ and $\Psi$, by the same argument, for any $j \in [2h]$, the probability that $Y_S$ contains a curve in a particular $H \in \mathcal{H}_{\Psi}$ that has a Fr\'{e}chet distance no more than $\frac{10\ell}{\eps|S|}\mathrm{cost}(S,c^*)$ from $c^*$ is at least $1-\frac{\mu}{80\ell}$.  As a result, by the union bound, 
\begin{equation}
	\Pr\left[\forall H \in \mathcal{H}_{\Psi}, \, \exists \tau_i \in Y_S \cap H \; \text{s.t.} \; d_F(\tau_i,c^*) \leq \frac{10\ell}{\eps|S|}\mathrm{cost}(S,c^*)\right] > 1-\frac{\mu}{80\ell}\cdot 2h > 1-\frac{\mu}{40}.
\end{equation}
That is, $Y_S$ contains a subset $R' \subseteq \overline{\mathcal{H}}_r \cup R$ that has size at most $2h$ and enables us to invoke Lemma~\ref{lem:second-level}.  We obtain our fourth event:

\vspace{8pt} 

\noindent\hspace{20pt}\parbox[t]{15cm}{
	$E_{\Psi}$: \parbox[t]{14.5cm}{
		$\bullet$~~\parbox[t]{14cm}{Given that $c^*$ is a solution for $Q(S,\Delta,\ell)$, the existence of $S_r$, $\mathcal{H}_r \subseteq 2^{S_r}$, $R \subseteq Y_S \cap S_r$, $\Psi \in C(\overline{\mathcal{H}}_r \cup R,h,\eps^2)$, $\mathcal{H}_{\Psi} \subseteq 2^{\overline{\mathcal{H}}_r \cup R}$, $R' \subseteq Y_S \cap (\overline{\mathcal{H}}_r \cup R)$, $\Psi' \in C(R \cup R',h,\eps^2)$, and $\Psi'' \in (\overline{\mathcal{H}}_\Psi \cup R,h,\eps^2)$ that satisfy Lemmas~\ref{lem:first-level} and~\ref{lem:second-level}.}
	}
}	

\vspace{4pt}\noindent\hspace{42pt}\parbox[t]{15cm}{
	$\bullet$~~\parbox[t]{14cm}{$|R| \leq 3l$ and $|R'| \leq 2h$.}
}

\vspace{4pt}

\noindent\hspace{42pt}\parbox[t]{15cm}{
		$\bullet$~~\parbox[t]{14cm}{For all $\tau_i \in R \cup R'$, $d_F(\tau_i,c^*) \leq \frac{10\ell}{\eps|S|}\mathrm{cost}(S,c^*)$.}
	}

\vspace{2pt}
\noindent\hspace{20pt}\parbox[t]{20cm}{
	$\Pr\bigr[E_{\Psi} \,|\, E_{Y_S} \bigr] \geq 1- 9\mu/100$.
}

\vspace{6pt}

\noindent \textbf{Analysis.}~We describe the analysis conditioned on the events $E_{Y_S}$, $E_c$, $E_{\mathrm{cost}(Y_S,c^*)}$, and $E_{\Psi}$. 

Conditioned on event $E_{Y_S}$, line~\ref{alg:enum} of Algorithm~\ref{ALG:1_APP} will produce a subset $X$ equal to some $Y_S$.  We are interested in this particular iteration of lines~\ref{alg:34}--\ref{alg:end-itr}.  We compute a 34-approximate $(1,\ell)$-median $c$ for $X = Y_S$ in line~\ref{alg:34}.  We also compute a st $\Sigma'$ of curves in line~\ref{alg:curve}.  Our goal is to show that some curve $c' \in \Sigma' \cup \{c\}$ satisfies $\mathrm{cost}(S,c') \leq (1+\eps)\mathrm{cost}(S,c^*)$.   Throughout this analysis, we assume that $c \not= c^*$; otherwise, there is nothing to prove.

We first define a neighborhood $N_c$ of $c$ in $S$ and another neighborhood $N_{c^*}$ of $c^*$ in $S$ in terms of $\mathrm{cost}(S,c^*)$:
\[
N_c = \left\{ \tau_i \in S : d_F(\tau_i,c) \leq \frac{\eps}{|S|}\mathrm{cost}(S,c^*)\right\}, \quad
N_{c^*} = \left\{ \tau_i \in S : d_F(\tau_i,c^*) \leq \frac{1}{\eps^2|S|}\mathrm{cost}(S,c^*)\right\}.
\]
There are no more than $\eps^2|S|$ curves in $S$ that do not belong to $N_{c^*}$; otherwise, the total cost would exceed $\mathrm{cost}(S,c^*)$, an impossibility.
\begin{equation}
	|S \setminus N_{c^*}| \leq \eps^2 |S| \, \Rightarrow \, |S| - |N_{c^*}| \leq \eps^2|S| \, \Rightarrow  \, |N_{c^*}| \geq (1-\eps^2)|S|.   \label{eq:1}
\end{equation}
The analysis is divided into two cases depending on the size of $N_{c^*}\setminus N_c$.

\vspace{10pt}

\noindent \emph{Case~1:} $|N_{c^*}\setminus N_c| \leq 2\eps|N_{c^*}|$.

Suppose that $d_F(c,c^*) \leq 4\eps \cdot \mathrm{cost}(S,c^*)/|S|$.  We can derive a good bound on $\mathrm{cost}(S,c)$ easily:
\[
\mathrm{cost}(S,c) \leq \sum_{\tau_i \in S} \bigl(d_F(\tau_i,c^*) + d_F(c^*,c)\bigr) \leq \mathrm{cost}(S,c^*) + 4\eps \cdot \mathrm{cost}(S,c^*).
\]

The other case is that $d_F(c,c^*) >  4\eps \cdot \mathrm{cost}(S,c^*)/|S|$.  We prove that this case leads to a contradiction and hence it does not happen.  First of all, $|S \setminus N_c| \leq |S \setminus N_{c^*}| + |N_{c^*} \setminus N_c|$.  We have $|S \setminus N_{c^*}| \leq \eps^2|S|$ by~\eqref{eq:1}, and $|N_{c^*} \setminus N_c| \leq 2\eps|N_{c^*}|$ in Case~1.  Therefore, 
\begin{equation}
|S \setminus N_c| \; \leq \; \eps^2|S| + 2\eps|N_{c^*}| \; \leq \; \eps^2|S|+ 2\eps|S| \; \leq \; 3\eps|S|.  \label{eq:1-1}
\end{equation}
It follows that 
\begin{equation}
|N_c| = |S| - |S \setminus N_c| \geq (1-3\eps)|S|.  \label{eq:1-2}
\end{equation}
For every $\tau_i \in N_c$, we have $d_F(\tau_i,c) \leq \eps\cdot\mathrm{cost}(S,c^*)/|S|$ by definition, which implies that
\[
d_F(\tau_i,c^*) - d_F(\tau_i,c) \geq d_F(c,c^*) - d_F(\tau_i,c) - d_F(\tau_i,c) \geq d_F(c,c^*) - 2\eps\cdot\mathrm{cost}(S,c^*)/|S|.
\]
Since we are considering the case that $d_F(c,c^*) > 4\eps\cdot\mathrm{cost}(S,c^*)/|S|$, we conclude that
\begin{equation}
\forall\, \tau_i \in N_c, \quad d_F(\tau_i,c^*) - d_F(\tau_i,c) > \frac{1}{2}d_F(c,c^*).  \label{eq:1-3}
\end{equation}
By triangle inequality,
\begin{equation}
\forall\, \tau_i \in S \setminus N_c, \quad d_F(\tau_i,c) - d_F(\tau_i,c^*) \leq d_F(c,c^*).  \label{eq:1-4}
\end{equation}
Putting \eqref{eq:1-3} and \eqref{eq:1-4} together gives:
\begin{eqnarray*}
\mathrm{cost}(S,c^*) - \mathrm{cost}(S,c) & = & \sum_{\tau_i \in N_c} \bigl(d_F(\tau_i,c^*) - d_F(\tau_i,c)\bigr) +  \sum_{\tau_i \in S \setminus N_c} \bigl(d_F(\tau_i,c^*) - d_F(\tau_i,c)\bigr) \\
& > & \frac{1}{2}|N_c|\cdot d_F(c,c^*) - |S \setminus N_c| \cdot d_F(c,c^*) \\
& \stackrel{\eqref{eq:1-1}, \eqref{eq:1-2}}{\geq} & \frac{1-9\eps}{2}\cdot d_F(c,c^*).
\end{eqnarray*}
We have $d_F(c,c^*) > 0$ as $c \not= c^*$ by assumption.  It leads to the contradiction that $\mathrm{cost}(S,c^*) > \mathrm{cost}(S,c)$ as $\eps < 1/9$ by assumption.

\vspace{10pt}

\noindent \emph{Case~2:} $|N_{c^*}\setminus N_c| > 2\eps|N_{c^*}|$.

Our idea is to apply Lemmas~\ref{lem:first-level} and~\ref{lem:second-level} to analyze the cost of the curves produced in line~\ref{alg:curve} of Algorithm~\ref{ALG:1_APP}.  To this end, we must argue that the enumeration in lines~\ref{alg:subset} and~\ref{alg:error} of Algorithm~\ref{ALG:1_APP} will produce an appropriate $W$ and $\Delta_W$.   We first take care of $\Delta_W$ in the following.

By~\eqref{eq:1}, $|N_{c^*}| \geq (1-\eps^2)|S|$.  Since $|N_{c^*}\setminus N_c| > 2\eps|N_{c^*}|$ in Case~2, we get 
\begin{equation}
|N_{c^*}\setminus N_c| \; > \; 2\eps(1-\eps^2)|S| \; \geq \; \eps|S|, \quad\quad (\because \eps < 1/9)
\label{eq:1-5}
\end{equation}
Since $Y_S$ is a random sample of $S$ with replacement, the probability of picking a curve from $N_{c^*} \setminus N_c$ is at least $\eps$ by~\eqref{eq:1-5}.  It follows that 
\begin{eqnarray}
& & \Pr\left[\exists \, \tau_i \in Y_S \,\; \text{s.t.} \; d_F(\tau_i,c^*) \leq \frac{1}{\eps^2|S|}\cdot\mathrm{cost}(S,c^*) \, \wedge \, d_F(\tau_i,c) > \eps \cdot \frac{\mathrm{cost}(S,c^*)}{|S|} \;\;  \Bigl| \;\; E_{Y_S} \right] \nonumber \\
& \geq & 1 - (1-\eps)^{|Y_S|} \nonumber \\
& \geq & 1 - (1-\eps)^{(40\ell/\eps)\ln(80\ell/\mu)} \nonumber \\
& \geq & 1 - \mu/80.  \label{eq:2}
\end{eqnarray}
There are three implications conditioned on the event in~\eqref{eq:2}.  First, we have a lower bound for $\mathrm{cost}(Y_S,c)$:
\begin{equation}
	\mathrm{cost}(Y_S,c) \; \geq \; \eps \cdot \frac{\mathrm{cost}(S,c^*)}{|S|}. 
	\label{eq:3}
\end{equation}
Second, the Fr\'{e}chet distance upper bound of $\frac{10\ell}{\eps|S|}\mathrm{cost}(S,c^*)$ referenced in event $E_{\Psi}$ is bounded from above by the upper bound $U$ computed in line~\ref{alg:LU} in Algorithm~\ref{ALG:1_APP}, which means that the range of error thresholds that Algorithm~\ref{ALG:1_APP} considers is sufficiently large.
\begin{equation}
	\frac{10\ell}{\eps} \cdot \frac{\mathrm{cost}(S,c^*)}{|S|} \; \stackrel{\eqref{eq:3}}{<} \; \frac{10\ell}{\eps^2}\mathrm{cost}(Y_S,c) \; = \; U.
	\label{eq:4}
\end{equation}
Third, using the fact that $c$ is a 34-approximation of the optimal $(1,\ell)$-median of $Y_S$ and the event $E_{\mathrm{cost}(Y_S,c^*)}$, we can prove an upper bound in terms of $\mathrm{cost}(S,c^*)/|S|$ for the lower bound $L$ computed in line~\ref{alg:LU} in Algorithm~\ref{ALG:1_APP}.  The lower bound $L$ is also the discrete step size of the error thresholds that we consider.  This upper bound on $L$ will allow us to bound the error caused by the discrete step size.
\begin{eqnarray}
	L & = & \frac{\eps\mu}{34} \cdot \frac{\mathrm{cost}(Y_S,c)}{|Y_S|} \nonumber \\
	    & \leq & \eps\mu \cdot \frac{\mathrm{cost}(Y_S,c^*)}{|Y_S|} \quad\quad\quad\quad (\because \text{$c$ is a 34-approximation}) \nonumber \\
	    & < & 4\eps\cdot \frac{\mathrm{cost}(S,c^*)}{|S|}.    \quad\quad\quad\quad\; (\because \text{event $E_{\mathrm{cost}(Y_S,c^*)}$})  \label{eq:5}
\end{eqnarray}

The discrete error thresholds between $L$ and $\infty$ with step size $L$ roughly capture the Fr\'{e}chet distances of all input curves $\tau_i \in S$ from $c^*$.  This motivates us to define:
\[
\Delta = \left\{\delta_i : \tau_i \in S, \; \delta_i = \lceil d_F(\tau_i,c^*)/L \rceil \cdot L \right\}.
\]
The set $\Delta$ is the set of error thresholds referenced in event $E_{\Psi}$.  Clearly, $c^*$ is a solution for $Q(S,\Delta,\ell)$ because $\delta_i \geq d_F(\tau_i,c^*)$ for all $\tau_i \in S$.  So we fulfill the precondition of applying $E_{\Psi}$.  We cannot afford the time to compute $\Delta$ explicitly.  Fortunately, Lemma~\ref{lem:second-level} says that it is unnecessary to do so; it suffices to capture the subset of $\Delta$ for $R \cup R'$ such that $R \subseteq S_r$, $R$ contains $\tau_r$ and intersects every subset in $\mathcal{H}_r$, $R' \subseteq \overline{\mathcal{H}}_r \cup R$, and $R'$ intersects every subset in $\mathcal{H}_{\Psi}$.  The event $E_{\Psi}$ exactly provides such a $R \cup R'$ of size at most $3l+2h$.  Moreover, by event $E_{\Psi}$, for all $\tau_i \in R \cup R'$, $d_F(\tau_i,c^*) \leq \frac{10\ell}{\eps|S|}\mathrm{cost}(S,c^*)$.  Consider the iteration in which the subset $W$ equal to $R \cup R'$ is produced in line~\ref{alg:subset} of Algorithm~\ref{ALG:1_APP}.  Since the Fr\'{e}chet distance bound $\frac{10\ell}{\eps|S|}\mathrm{cost}(S,c^*)$ is no more than $U$ by~\eqref{eq:4}, the set of error thresholds $\Delta_W = \{\delta_i \in \Delta : \tau_i \in W\}$ will be produced in line~\ref{alg:error} at some point.  We perform a cost analysis in the following.

Note that $\max\{\delta_i \in \Delta_W\} \leq \frac{10\ell}{\eps|S|}\mathrm{cost}(S,c^*) + L$.  By the implication of Lemma~\ref{lem:second-level}, the output of the two-phase construction on all configurations in $C(W,h,\eps^2)$ must include a curve $c'$ such that:
\begin{eqnarray*}
	\forall\, \tau_i \in \overline{\mathcal{H}}_{\Psi}, \quad d_F(\tau_i,c') & \leq & \delta_i + 4\sqrt{d}\eps^2 \cdot \max\{\delta_i \in \Delta_W\} \nonumber \\
	& \leq & d_F(\tau_i,c^*) + L + 4\sqrt{d}\eps^2 \cdot \left(\frac{10\ell}{\eps|S|}\mathrm{cost}(S,c^*)+L\right)  \nonumber \\
	& \stackrel{\eqref{eq:5}}{=} & d_F(\tau_i,c^*) + O(\sqrt{d}\ell\eps) \cdot \frac{\mathrm{cost}(S,c^*)}{|S|}.  
\end{eqnarray*}
We still have to analyze the cost of the curves in $S \setminus \overline{\mathcal{H}}_{\Psi}$.  Note that
$S \setminus \overline{\mathcal{H}}_{\Psi} \subseteq \bigl\{\tau_i : i \in \bigl[\frac{\eps|S|}{5\ell}\bigr]\bigr\} \cup \bigcup_{H \in \mathcal{H}_r \cup \mathcal{H}_{\Psi}} H$.  The size of $S \setminus \overline{\mathcal{H}}_{\Psi}$ is thus at most $\frac{\eps |S|}{5\ell} \cdot 3l+ \frac{\eps |S|}{5\ell} \cdot 2h  \leq \eps|S|$.  There are at least $|S|/2$ curves in $S$ that have Fr\'{e}chet distances at most $2\,\mathrm{cost}(S,c^*)/|S|$ from $c^*$; otherwise, the total cost would exceed $\mathrm{cost}(S,c^*)$, an impossibility.  Since $|S \setminus \overline{\mathcal{H}}_{\Psi}| \leq \eps|S|$ and $\eps < 1/9$, we conclude that:
\[
\exists \, \tau_{i_0} \in \overline{\mathcal{H}}_{\Psi} \;\; \text{s.t.} \;\; d_F(\tau_{i_0},c^*) \leq 2\,\mathrm{cost}(S,c^*)/|S|. 
\]
We are ready to bound $\mathrm{cost}(S,c')$:
\begin{eqnarray*}
	\mathrm{cost}(S,c') & = & 
	\sum_{\tau_i \in \overline{\mathcal{H}}_{\Psi}} d_F(\tau_i,c') +
	\sum_{\tau_i \in S \setminus \overline{\mathcal{H}}_{\Psi}} d_F(\tau_i,c') \\
	& \leq & \sum_{\tau_i \in \overline{\mathcal{H}}_{\Psi}} d_F(\tau_i,c^*) + O(\sqrt{d}\ell\eps) \cdot | \overline{\mathcal{H}}_{\Psi}| \cdot \frac{\mathrm{cost}(S,c^*)}{|S|} + \\
	& & \quad\quad \sum_{\tau_i \in S \setminus \overline{\mathcal{H}}_{\Psi}} \left(d_F(\tau_i,c^*) + d_F(c^*,\tau_{i_0}) + d_F(\tau_{i_0},c')\right) \\
	& \leq & \mathrm{cost}(S,c^*) + O(\sqrt{d}\ell\eps) \cdot \mathrm{cost}(S,c^*)+  |S \setminus \overline{\mathcal{H}}_{\Psi}| \cdot \left(d_F(c^*,\tau_{i_0}) + d_F(\tau_{i_0},c')\right) \\
	& \leq & \mathrm{cost}(S,c^*) + O(\sqrt{d}\ell\eps) \cdot \mathrm{cost}(S,c^*) + \\
	& & \quad\quad\quad \eps|S| \cdot \left(2d_F(\tau_{i_0},c^*) + O(\sqrt{d}\ell\eps) \cdot \frac{\mathrm{cost}(S,c^*)}{|S|} \right) \quad\quad\quad (\because \tau_{i_0} \in \overline{\mathcal{H}}_{\Psi}) \\
	& \leq & \mathrm{cost}(S,c^*) + O(\sqrt{d}\ell\eps) \cdot \mathrm{cost}(S,c^*) + \eps|S| \cdot \frac{4\,\mathrm{cost}(S,c^*)}{|S|} \\
	& \leq &  \mathrm{cost}(S,c^*) + O(\sqrt{d}\ell\eps) \cdot \mathrm{cost}(S,c^*).
\end{eqnarray*}
This completes the analysis of Case~2.

The probability bound of $1-\mu$ follows from $\Pr[E_{Y_S}]$, $\Pr[E_c]$, $\Pr[E_{\mathrm{cost}(Y_S,c^*)}]$, $\Pr[E_{\Psi} | E_{Y_S}]$, and the probability of the event in \eqref{eq:2}.

The running time is asymptotically bounded by $N_X \cdot \bigr(T_{\text{34apx}} + \ell^2 N_W \cdot N_{\Delta_W} \cdot N_{C_W} \cdot O(m|X|\log m + hm|X|2^{O(d)})\bigr)$, where $N_X$ is the number of subsets of $Y$ of size $|Y|/(2\beta)$, $T_{\text{34apx}}$ is the running time of the $(1,\ell)$-median-34-approximation algorithm, $N_W$ is the number of subsets of $X$ of size at most $3l+2h$, $N_{\Delta_W}$ is the number of possible sets of error thresholds for $W$, and $N_{C_W}$ is the number of configurations in $C(W,h,\eps^2)$. 

One can verify that $N_X = O\bigl(|Y|^{|Y|/(2\beta)}\bigr) = \tilde{O}\bigl((\beta\ell/\eps)^{O((\ell/\eps)\log(1/\mu))}\bigr)$, $N_W = O\bigl(|X|^{5\ell}\bigr) = \tilde{O}\bigl((\ell/\eps)^{O(\ell)}\bigr)$, $N_{\Delta_W} = O\bigl((\mu^{-1}\eps^{-3}\ell|X|)^{|W|}\bigr) = \tilde{O}\bigl(\mu^{-O(\ell)}\cdot(\ell/\eps)^{O(\ell)}\bigr)$, and $N_{C_W} = O(m^{O(h|W|)} \cdot (\ell/\eps)^{O(dh)}\bigr) = O\bigl(m^{O(\ell^2)} \cdot (\ell/\eps)^{O(d\ell)}\bigr)$.  We have $T_{\text{34apx}} = \tilde{O}(m^3)$~\cite{buchin2021approximating}.  The running time bound is thus equal to $\tilde{O}\bigl((\beta\ell/\eps)^{O((\ell/\eps)\log(1/\mu))} \cdot \bigl(m^3 + m^{O(\ell^2)} \cdot \mu^{-O(\ell)} \cdot (\ell/\eps)^{O(d\ell)}\bigr)\bigr) = \tilde{O}\bigl(m^{O(\ell^2)} \cdot \mu^{-O(\ell)} \cdot  (\ell/\eps)^{O(d\ell)} \cdot (\beta\ell/\eps)^{O((\ell/\eps)\log(1/\mu))}\bigr)$.

To reduce the approximation ratio from $1 + O(\sqrt{d}\ell\eps)$ to $1+\eps$, we need to reduce $\eps$ to $\eps/\Theta(\sqrt{d}\ell)$.  In summary, the total running time is  $\tilde{O}\bigl(m^{O(\ell^2)} \cdot \mu^{-O(\ell)} \cdot (d\beta\ell/\eps)^{O((d\ell/\eps)\log(1/\mu))}\bigr)$.
\end{proof}

\end{document}